\newtheorem{observation}{Observation}
\newtheorem{reductionRule}{Reduction Rule}
\begin{document}

\title{Algorithms and Almost Tight Results \\
for $3$-Colorability of Small Diameter Graphs}
\author{George B.~Mertzios\inst{1}\thanks{%
Partially supported by EPSRC Grant EP/G043434/1.} \and Paul G.~Spirakis%
\inst{2}\thanks{%
Partially supported by the ERC EU Project RIMACO and by the EU IP FET
Project MULTIPLEX.}}
\institute{School of Engineering and Computing Sciences, Durham University, UK. \\
Email: \email{george.mertzios@durham.ac.uk} \and
Computer Technology Institute and University of Patras, Greece. \\
Email: \email{spirakis@cti.gr}\vspace{-0.2cm}}
\maketitle

\begin{abstract}
The $3$-coloring problem is well known to be NP-complete. It is also well
known that it remains NP-complete when the input is restricted to graphs
with diameter~$4$. Moreover, assuming the Exponential Time Hypothesis (ETH), 
$3$-coloring cannot be solved in time $2^{o(n)}$ on graphs with~$n$ vertices
and diameter at most~$4$. In spite of the extensive studies of the $3$%
-coloring problem with respect to several basic parameters, the complexity
status of this problem on graphs with small diameter, i.e.~with diameter at
most~$2$, or at most~$3$, has been a longstanding and challenging open
question. In this paper we investigate graphs with small diameter. For
graphs with diameter at most~$2$, we provide the first subexponential
algorithm for $3$-coloring, with complexity $2^{O(\sqrt{n\log n})}$, which
is asymptotically the same as the currently best known time complexity for
the graph isomorphism problem. Furthermore we extend the notion of an
articulation vertex to that of an \emph{articulation neighborhood}, and we
provide a polynomial algorithm for $3$-coloring on graphs with diameter~$2$
that have at least one articulation neighborhood. 
For graphs with diameter at most~$3$, we establish the complexity of $3$%
-coloring, even for the case of triangle-free graphs. Namely we prove that
for every~${\varepsilon \in \lbrack 0,1)}$, $3$-coloring is NP-complete on
triangle-free graphs of diameter~$3$ and radius $2$ with $n$ vertices and
minimum degree $\delta =\Theta (n^{\varepsilon })$. Moreover, assuming~ETH,
we use three different amplification techniques of our hardness results, in
order to obtain for every~${\varepsilon \in \lbrack 0,1)}$ subexponential
asymptotic lower bounds for the complexity of $3$-coloring on triangle-free
graphs with diameter~$3$ and minimum degree~${\delta =\Theta (n^{\varepsilon
})}$. Finally, we provide a $3$-coloring algorithm with running time~${%
2^{O(\min \{\delta \Delta ,\ \frac{n}{\delta }\log \delta \})}}$ for
arbitrary graphs with diameter~$3$, where $n$ is the number of vertices and $%
\delta $ (resp.~$\Delta $) is the minimum (resp.~maximum) degree of the
input graph. To the best of our knowledge, this algorithm is the first
subexponential algorithm for graphs with ${\delta =\omega (1)}$ and for
graphs with ${\delta =O(1)}$ and $\Delta =o(n)$. Due to the above lower
bounds of the complexity of $3$-coloring, the running time of this algorithm
is asymptotically almost tight when the minimum degree of the input graph is 
$\delta =\Theta (n^{\varepsilon })$, where $\varepsilon \in \lbrack \frac{1}{%
2},1)$. \newline

\noindent \textbf{Keywords:} $3$-coloring, graph diameter, graph radius,
subexponential algorithm, NP-complete, Exponential Time Hypothesis.
\end{abstract}

\section{Introduction\label{intro-sec}}

A \emph{proper }$k$\emph{-coloring} (or $k$\emph{-coloring}) of a graph $G$
is an assignment of $k$ different colors to the vertices of $G$, such that
no two adjacent vertices receive the same color. That is, a $k$-coloring is
a partition of the vertices of $G$ into $k$ independent sets. The
corresponding $k$\emph{-coloring problem} is the problem of deciding whether
a given graph $G$ admits a $k$-coloring of its vertices, and to compute one
if it exists. Furthermore, the minimum number $k$ of colors for which there
exists a $k$-coloring is denoted by $\chi (G)$ and is termed the \emph{%
chromatic number} of $G$. The \emph{minimum coloring problem} is to compute
the chromatic number of a given graph $G$, and to compute a $\chi (G)$%
-coloring of $G$.

One of the most well known complexity results is that the $k$-coloring
problem is NP-complete for every $k\geq 3$, while it can be solved in
polynomial time for $k=2$~\cite{GareyJohnson79}. Therefore, since graph
coloring has numerous applications besides its theoretical interest, there
has been considerable interest in studying how several graph parameters
affect the tractability of the $k$-coloring problem, where $k\geq 3$. In
view of this, the complexity status of the coloring problem has been
established for many graph classes. It has been proved that $3$-coloring
remains NP-complete even when the input graph is restricted to be a line
graph~\cite{Holyer81}, a triangle-free graph with maximum degree $4$~\cite%
{Maffray96}, or a planar graph with maximum degree $4$~\cite{GareyJohnson79}.

On the positive side, one of the most famous result in this context has been
that the minimum coloring problem can be solved in polynomial time for
perfect graphs using the ellipsoid method~\cite{Grotschel84}. Furthermore,
polynomial algorithms for $3$-coloring have been also presented for classes
of non-perfect graphs, such as AT-free graphs~\cite{Stacho-3-col-AT-free-11}
and $P_{6}$-free graphs~\cite{Randerath04} (i.e.~graphs that do not contain
any path on $6$ vertices as an induced subgraph). Furthermore, although the
minimum coloring problem is NP-complete on $P_{5}$-free graphs, the $k$%
-coloring problem is polynomial on these graphs for every fixed $k$~\cite%
{Hoang10}. Courcelle's celebrated theorem states that every problem
definable in Monadic Second-Order logic (MSO) can be solved in linear time
on graphs with bounded treewidth~\cite{CourcelleMSO90}, and thus also the
coloring problem can be solved in linear time on such graphs.

For the cases where $3$-coloring is NP-complete, considerable attention has
been given to devise exact algorithms that are faster than the brute-force
algorithm (see e.g.~the recent book~\cite{FominKratsch-book-10}). In this
context, asymptotic lower bounds of the time complexity have been provided
for the main NP-complete problems, based on the \emph{Exponential Time
Hypothesis (ETH)}~\cite{Impagliazzo-ETH-01,Impagliazzo-Strongly-ETH-01}. ETH
states that there exists no deterministic algorithm that solves the $3$SAT
problem in time $2^{o(n)}$, given a boolean formula with $n$ variables. In
particular, assuming ETH, $3$-coloring cannot be solved in time~$2^{o(n)}$
on graphs with $n$ vertices, even when the input is restricted to graphs
with diameter~$4$ and radius $2$ (see~\cite%
{Survey-ETH-11,PapadimitriouComplexity94}). Therefore, since it is assumed
that no subexponential $2^{o(n)}$ time algorithms exist for $3$-coloring,
most attention has been given to decrease the multiplicative factor of $n$
in the exponent of the running time of exact exponential algorithms, see
e.g.~\cite{FominKratsch-book-10,Beigel05,NarayanaswamyIPL11}.

One of the most central notions in a graph is the distance between two
vertices, which is the basis of the definition of other important
parameters, such as the diameter, the eccentricity, and the radius of a
graph. For these graph parameters, it is known that $3$-coloring is
NP-complete on graphs with diameter at most $4$ (see e.g.~the standard proof
of~\cite{PapadimitriouComplexity94}). Furthermore, it is straightforward to
check that $k$-coloring is NP-complete for graphs with diameter at most $2$,
for every $k\geq 4$: we can reduce $3$-coloring on arbitrary graphs to $4$%
-coloring on graphs with diameter~$2$, just by introducing to an arbitrary
graph a new vertex that is adjacent to all others.

In contrast, in spite of the extensive studies of the $3$-coloring problem
with respect to several basic parameters, the complexity status of this
problem on graphs with small diameter, i.e.~with diameter at most $2$ or at
most $3$, has been a longstanding and challenging open question, see e.g.~%
\cite{Kaminskyi-open-11,Barnaby-survey-11,Broersma09}. The complexity status
of $3$-coloring is open also for triangle-free graphs of diameter~$2$ and of
diameter~$3$. It is worth mentioning here that a graph is triangle-free and
of diameter~$2$ if and only if it is a maximal triangle free graph.
Moreover, it is known that $3$-coloring is NP-complete for triangle-free
graphs~\cite{Maffray96}, however it is not known whether this reduction can
be extended to maximal triangle free graphs. Another interesting result is
that almost all graphs have diameter~$2$~\cite{Bollobas81}; however, this
result cannot be used in order to establish the complexity of $3$-coloring
for graphs with diameter~$2$.

\vspace{0.2cm} \noindent\textbf{Our contribution.} In this paper we provide
subexponential algorithms and hardness results for the $3$-coloring problem
on graphs with low diameter, i.e.~with diameter~$2$ and $3$. As a
preprocessing step, we first present two reduction rules that we apply to an
arbitrary graph $G$, such that the resulting graph $G^{\prime }$ is $3$%
-colorable if and only $G$ is $3$-colorable. We call the resulting graph 
\emph{irreducible} with respect to these two reduction rules. We use these
reduction rules to reduce the size of the given graph and to simplify the
algorithms that we present.

For graphs with diameter at most $2$, we first provide a subexponential
algorithm for $3$-coloring with running time ${2^{O(\min \{\delta ,\frac{n}{%
\delta }\log \delta \})}}$, where $n$ is the number of vertices and $\delta $
is the minimum degree of the input graph. This algorithm is simple and has
worst-case running time $2^{O(\sqrt{n\log n})}$, which is asymptotically the
same as the currently best known time complexity of the graph isomorphism
problem~\cite{Babai83}. To the best of our knowledge, this algorithm is the
first subexponential algorithm for graphs with diameter~$2$. We demonstrate
that this is indeed the worst-case of our algorithm by providing, for every $%
n\geq 1$, a $3$-colorable graph $G_{n}=(V_{n},E_{n})$ with $\Theta (n)$
vertices, such that~$G_{n}$ has diameter~$2$ and both its minimum degree and
the size of a minimum dominating set is~$\Theta (\sqrt{n})$. In addition,
this graph is triangle-free and irreducible with respect to the above two
reduction rules. Furthermore we define the notion of an \emph{articulation
neighborhood} in a graph, which extends the notion of an articulation
vertex. For any vertex $v$ of a graph $G$, the closed neighborhood $N[v] =
N(v) \cup \{v\}$ is an articulation neighborhood if $G-N[v]$ is
disconnected. We present a polynomial algorithm for every irreducible graph $%
G$ with diameter~$2$, which has at least one articulation neighborhood.

For graphs with diameter at most $3$, we establish the complexity of
deciding $3$-coloring, even for the case of triangle-free graphs. Namely we
prove that $3$-coloring is NP-complete on irreducible and triangle-free
graphs with diameter~$3$ and radius $2$, by providing a reduction from 3SAT.
In addition, we provide a $3$-coloring algorithm with running time~${%
2^{O(\min \{\delta \Delta ,\ \frac{n}{\delta }\log \delta \})}}$ for
arbitrary graphs with diameter~$3$, where $n$ is the number of vertices and $%
\delta $ (resp.~$\Delta $) is the minimum (resp.~maximum) degree of the
input graph. To the best of our knowledge, this algorithm is the first
subexponential algorithm for graphs with ${\delta =\omega (1)}$ and for
graphs with ${\delta =O(1)}$ and ${\Delta =o(n)}$. Table~\ref%
{coloring-complexities-table} summarizes the current state of the art of the
complexity of $k$-coloring, as well as our algorithmic and
NP-completeness~results.

\vspace{-0.3cm}

\begin{table}[htb]
\begin{center}
\begin{tabular}{|c|c|c|c|}
\hline
\ \ $k \ \backslash \ diam(G)$ \ \  & \ \ $2$ \ \  & \ \ $3$ \ \  & \ \ $%
\geq 4$ \ \  \\ \hline
\ \ \textcolor{white}{$3$} \ \  & \ \ \textcolor{white}{$\frac{%
\frac{1}{1}}{1}$} \ \  & \ \ ($\ast$) NP-complete for \ \  & \ \ %
\textcolor{white}{bla} \ \  \\ 
\ \ $3$ \ \  & \ \ ($\ast$) ${2^{O(\min\{\delta,\frac{n}{\delta}%
\log\delta\})}}$-time \ \  & \ \ minimum degree $\delta =
\Theta(n^{\varepsilon})$, \ \  & \ \ \textcolor{white}{bla} \ \  \\ 
\ \ \textcolor{white}{$3$} \ \  & \ \ algorithm \ \  & \ \ %
\textcolor{white}{$\frac{}{\frac{}{}}$} for every $\varepsilon \in [0,1)$,
even if \textcolor{white}{$\frac{\frac{}{}}{}$} \ \  & \ \ NP-complete~\cite%
{PapadimitriouComplexity94} \ \  \\ 
\ \ \textcolor{white}{$3$} \ \  & \ \ ($\ast$) polynomial algorithm \ \  & \
\ $rad(G) = 2$ and $G$ is triangle-free \ \  & \ \ and no $2^{o(n)}$-time
algorithm \ \  \\ 
\ \ \textcolor{white}{$3$} \ \  & \ if there is at least one \  & \ \ ($\ast$%
) ${2^{O(\min\{\delta\Delta,\ \frac{n}{\delta}\log\delta\})}}$-time %
\textcolor{white}{$\frac{\frac{.}{\frac{.}{.}}}{.}$} \ \  & \ \ %
\textcolor{white}{bla} \ \  \\ 
\ \ \textcolor{white}{$3$} \ \  & \ articulation neighborhood \  & \ \
algorithm \ \  & \ \ \textcolor{white}{bla} \ \  \\ \hline
\ \ $\geq 4$ \ \  & \ \ NP-complete \ \  & \ \ NP-complete \ \  & \ \
NP-complete \ \  \\ \hline
\end{tabular}
\vspace{0.1cm}
\end{center}
\caption{Current state of the art and our algorithmic and NP-completeness
results for $k$-coloring on graphs with diameter~$diam(G)$. Our results are
indicated by an asterisk ($\ast$).\protect\vspace{-0.9cm}}
\label{coloring-complexities-table}
\end{table}

Furthermore, we provide three different amplification techniques that extend
our hardness results for graphs with diameter~$3$. In particular, we first
show that $3$-coloring is NP-complete on irreducible and triangle-free
graphs $G$ of diameter~$3$ and radius~$2$ with $n$ vertices and minimum
degree $\delta (G)=\Theta (n^{\varepsilon })$, for every $\varepsilon \in
\lbrack \frac{1}{2},1)$ and that, for such graphs, there exists no algorithm
for $3$-coloring with running time~${2^{o(\frac{n}{\delta }%
)}=2^{o(n^{1-\varepsilon })}}$, assuming ETH. This lower bound is
asymptotically almost tight, due to our above algorithm with running time~${%
2^{O(\frac{n}{\delta }\log \delta )}}$, which is subexponential when $\delta
(G)=\Theta (n^{\varepsilon })$ for some $\varepsilon \in \lbrack \frac{1}{2}%
,1)$. With our second amplification technique, we show that $3$-coloring
remains NP-complete also on irreducible and triangle-free graphs $G$ of
diameter~$3$ and radius~$2$ with $n$ vertices and minimum degree~${\delta
(G)=\Theta (n^{\varepsilon })}$, for every $\varepsilon \in \lbrack 0,\frac{1%
}{2})$. Moreover, we prove that for~such graphs, when ${\varepsilon \in
\lbrack 0,\frac{1}{3})}$, there exists no algorithm for $3$-coloring
with~running time~${2^{o(\sqrt{\frac{n}{\delta }})}=2^{o(n^{(\frac{%
1-\varepsilon }{2})})}}$, assuming ETH. Finally, with our third
amplification technique, we prove that for such graphs, when ${\varepsilon
\in \lbrack \frac{1}{3},\frac{1}{2})}$, there exists no algorithm for $3$%
-coloring with running~time~${2^{o(\delta )}=2^{o(n^{\varepsilon })}}$,
assuming ETH. Table~\ref{lower-bounds-table} summarizes our lower time
complexity bounds for $3$-coloring on irreducible and triangle-free graphs
with diameter~$3$ and radius~$2$, parameterized by their minimum degree $%
\delta$.

\begin{table}[htb]
\begin{center}
\begin{tabular}{|c|c|c|c|}
\hline
\textcolor{white}{$\frac{}{\frac{}{}}$} $\delta(G) = \Theta(n^{\varepsilon})$%
: \textcolor{white}{$\frac{\frac{\frac{}{}}{}}{}$} & $0 \leq \varepsilon < 
\frac{1}{3}$ & $\frac{1}{3} \leq\varepsilon < \frac{1}{2}$ & $\frac{1}{2}
\leq \varepsilon < 1$ \\ \hline
\hspace{-0.2cm} \textcolor{white}{$\frac{\frac{\frac{\frac{}{.}}{.}}{.}}{.}$}
Lower time \textcolor{white}{$\frac{\frac{\frac{.}{.}}{.}}{.}$
\hspace{-0.2cm}} & \hspace{0.1cm} no $2^{o(n^{(\frac{1-\varepsilon}{2})})}$%
-time algorithm \hspace{0.1cm} & \hspace{0.1cm} no $2^{o(n^{\varepsilon})}$%
-time algorithm \hspace{0.1cm} & \hspace{0.1cm} no $2^{o(n^{1-\varepsilon})}$%
-time algorithm \hspace{0.1cm} \\ 
\textcolor{white}{$\frac{}{\frac{.}{.}}$} complexity bound: %
\textcolor{white}{$\frac{\frac{.}{.}}{}$} & (cf.~Theorem~\ref%
{ETH-small-min-degree-thm}) & (cf.~Theorem~\ref{ETH-medium-min-degree-thm})
& (cf.~Theorem~\ref{ETH-large-min-degree-thm}) \\ \hline
\end{tabular}
\vspace{0.1cm}
\end{center}
\caption{Summary of the results of Theorems~\protect\ref%
{ETH-large-min-degree-thm},~\protect\ref{ETH-medium-min-degree-thm}, and~%
\protect\ref{ETH-small-min-degree-thm}: Lower time complexity bounds for
deciding $3$-coloring on irreducible and triangle-free graphs $G$ with $n$
vertices, diameter~$3$, radius $2$, and minimum degree ${\protect\delta(G) =
\Theta(n^{\protect\varepsilon})}$, where ${\protect\varepsilon \in [0,1)}$,
assuming~ETH. The lower bound for ${\protect\varepsilon \in [\frac{1}{2}, 1)}
$ is asymptotically almost tight, as there exists an algorithm for arbitrary
graphs with diameter $3$ with running time~${2^{O(\frac{n}{\protect\delta }%
\log \protect\delta)} = 2^{O(n^{1-\protect\varepsilon} \log n)}}$ by Theorem~%
\protect\ref{min-delta-Delta-n-over-delta-alg-diam-3-thm}.\protect\vspace{%
-0.7cm}}
\label{lower-bounds-table}
\end{table}

\vspace{0.0cm} \noindent \textbf{Organization of the paper.} We provide in
Section~\ref{preliminaries-sec} the necessary notation and terminology, as
well as our two reduction rules and the notion of an irreducible graph. In
Sections~\ref{diameter-two-sec} and~\ref{diameter-three-sec} we present our
results for graphs with diameter~$2$ and $3$, respectively. 

\section{Preliminaries and notation\label{preliminaries-sec}}

In this section we provide some notation and terminology, as well as two
reduction (or \textquotedblleft cleaning\textquotedblright ) rules that can
be applied to an arbitrary graph $G$. Throughout the article, we assume that
any given graph $G$ of low diameter is \emph{irreducible} with respect to
these two reduction rules, i.e.~that these reduction rules have been
iteratively applied to $G$ until they cannot be applied any more. Note that
the iterative application of these reduction rules on a graph with $n$
vertices can be done in time polynomial in $n$.

\medskip

\textbf{Notation.} We consider in this article simple undirected graphs with
no loops or multiple edges. In an undirected graph $G$, the edge between
vertices $u$ and $v$ is denoted by~$uv$, and in this case $u$ and $v$ are
said to be \emph{adjacent} in $G$. Otherwise $u$ and $v$ are called \emph{%
non-adjacent} or \emph{independent}. Given a graph ${G=(V,E)}$ and a vertex $%
{u\in V}$, denote by ${N(u)=\{v\in V : uv\in E\}}$ the set of neighbors (or
the \emph{open neighborhood}) of $u$ and by~${N[u]=N(u)\cup \{u\}}$ the 
\emph{closed neighborhood} of $u$. Whenever the graph $G$ is not clear from
the context, we will write $N_{G}(u)$ and $N_{G}[u]$, respectively. Denote
by $\deg (u)=|N(u)|$ the \emph{degree} of $u$ in $G$ and by $\delta (G)=\min
\{\deg (u) : u\in V\}$ the \emph{minimum degree} of~$G$. Let $u$ and $v$ be
two non-adjacent vertices of~$G$. Then, $u$ and $v$ are called (false) \emph{%
twins} if they have the same set of neighbors, i.e.~if $N(u)=N(v)$.
Furthermore, we call the vertices $u$ and $v$ \emph{siblings} if ${%
N(u)\subseteq N(v)}$ or ${N(v)\subseteq N(u)}$; note that two twins are
always siblings.

Given a graph ${G=(V,E)}$ and two vertices $u,v\in V$, we denote by $d(u,v)$
the \emph{distance} of~$u$ and~$v$, i.e.~the length of a shortest path
between $u$ and $v$ in $G$. Furthermore, we denote by $diam(G)=\max \{d(u,v)
: u,v\in V\}$ the \emph{diameter} of $G$ and by $rad(G)=\min_{u\in V}\{\max
\{d(u,v) : v\in V\}\}$ the \emph{radius} of $G$. Given a subset ${S\subseteq
V}$, $G[S]$ denotes the \emph{induced} subgraph of $G$ on the vertices in $S$%
. We denote for simplicity by $G-S$ the induced subgraph $G[V\setminus S]$
of $G$. A subset $S\subseteq V$ is an \emph{independent set} in $G$ if the
graph $G[S]$ has no edges. If $G[S]$ has all ${\binom{|S|}{2}}$ possible
edges among its vertices, then~$G[S]$ is a \emph{clique}. A clique with $t$
vertices is denoted by $K_{t}$. A graph $G$ that contains no~$K_{t}$ as an
induced subgraph is called $K_{t}$\emph{-free}. Furthermore, a subset $%
D\subseteq V$ is a \emph{dominating set} of~$G$ if every vertex of $%
V\setminus D$ has at least one neighbor in $D$. 
For simplicity, we refer in the remainder of the article to a proper $k$%
-coloring of a graph $G$ just as a $k$\emph{-coloring} of $G$. Throughout
the article we perform several times the \emph{merging} operation of two (or
more) independent vertices, which is defined as follows: we \emph{merge} the
independent vertices~$u_{1},u_{2},\ldots ,u_{t}$ when we replace them by a
new vertex $u_{0}$ with $N(u_{0})=\cup _{i=1}^{k}N(u_{i})$. In addition to
the well known big-$O$ notation for asymptotic complexity, some times we use
the $O^{\ast }$ notation that suppresses polynomially bounded factors. For
instance, for functions $f$ and $g$, we write $f(n)=O^{\ast }(g(n))$ if $%
f(n)=O(g(n)\ poly(n))$, where $poly(n)$ is a polynomial.

\medskip

Observe that, whenever a graph~$G$ contains a clique~$K_{4}$ with four
vertices as an induced subgraph, then~$G$ is not~$3$-colorable. Furthermore,
we can check easily in polynomial time (e.g.~with brute-force) whether a
given graph~$G$ contains a~$K_{4}$. Therefore we assume in the following
that all given graphs are~$K_{4}$-free. Furthermore, since a graph is $3$%
-colorable if and only if all its connected components are $3$-colorable, we
assume in the following that all given graphs are connected. In order to
present our two reduction rules of an arbitrary~$K_{4}$-free graph~$G$,
recall first that the \emph{diamond} graph is a graph with~$4$ vertices and~$%
5$ edges, i.e.~it consists of a~$K_{4}$ without one edge. The diamond graph
is illustrated in Figure~\ref{diamond-fig}. Suppose that four vertices~$%
u_{1},u_{2},u_{3},u_{4} $ of a given graph~$G=(V,E)$ induce a diamond graph,
and assume without loss of generality that~${u_{1}u_{2}\notin E}$. Then, it
is easy to see that in any~$3$-coloring of~$G$ (if such exists), $u_{1}$ and~%
$u_{2}$ obtain necessarily the same color. Therefore we can merge~$u_{1}$
and~$u_{2}$ into one vertex, as the next reduction rule states, and the
resulting graph is $3$-colorable if and only if $G$ is $3$-colorable.

\begin{reductionRule}[diamond elimination]
\label{rule1}Let ${G=(V,E)}$ be a $K_{4}$-free graph. If the quadruple ${%
\{u_{1},u_{2},u_{3},u_{4}\}}$ of vertices in $G$ induces a diamond graph,
where ${u_{1}u_{2}\notin E}$, then merge vertices~$u_{1}$ and~$u_{2}$.
\end{reductionRule}

Note that, after performing a diamond elimination in a $K_{4}$-free graph $G$%
, we may introduce a new $K_{4}$ in the resulting graph. An example of such
a graph $G$ is illustrated in Figure~\ref{diamond-elimination-new-K4-fig}.
In this example, the graph on the left hand side has no $K_{4}$ but it has
two diamonds, namely on the quadruples ${\{u_{1},u_{2},u_{3},u_{4}\}}$ and ${%
\{u_{4},u_{5},u_{6},u_{7}\}}$ of vertices. However, after eliminating the
first diamond by merging $u_{1}$ and $u_{4}$, we create a new $K_{4}$ on the
quadruple ${\{u_{1},u_{5},u_{6},u_{7}\}}$ of vertices, cf.~the graph of the
right hand side of Figure~\ref{diamond-elimination-new-K4-fig}.

\vspace{-0.2cm}

\begin{figure}[h!tb]
\centering 
\subfigure[]{ \label{diamond-fig}
\includegraphics[scale=0.6]{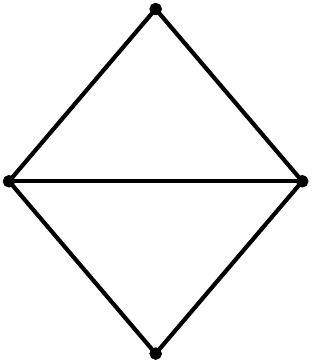}} \hspace{1.3cm} 
\subfigure[]{ \label{diamond-elimination-new-K4-fig}
\includegraphics[scale=0.6]{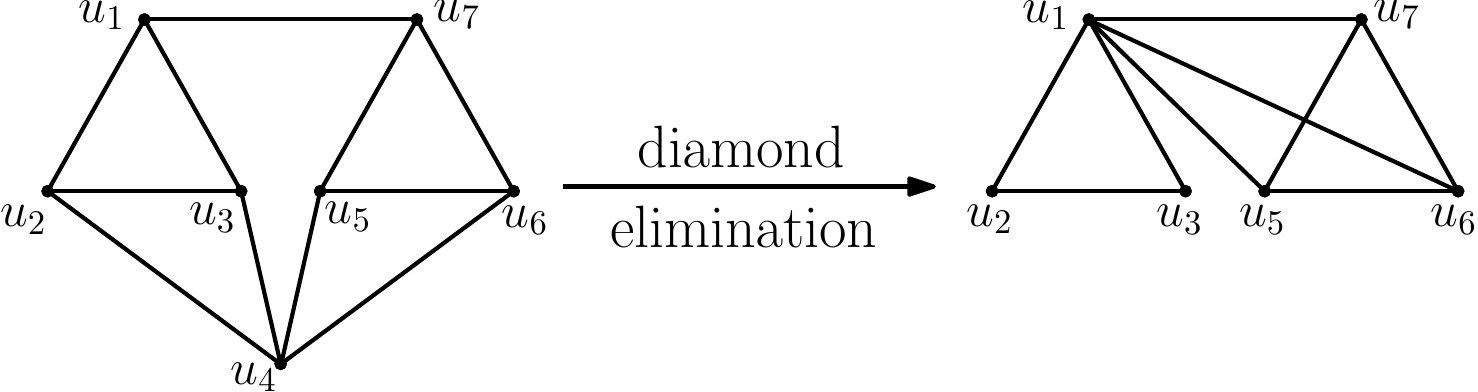}}
\caption{(a)~The diamond graph and~(b)~an example of a diamond elimination
of a $K_{4}$-free graph, which creates a new~$K_{4}$ on the vertices ${%
\{u_{1},u_{5},u_{6},u_{7}\}}$ in the resulting graph.}
\label{diamond-elimination-fig}
\end{figure}

\vspace{-0.2cm}

Suppose now that a graph $G$ has a pair of siblings $u$ and $v$ and assume
without loss of generality that~${N(u)\subseteq N(v)}$. Then, we can extend
any proper~$3$-coloring of $G-\{u\}$ (if such exists) to a proper~$3$%
-coloring of $G$ by assigning to $u$ the same color as $v$. Therefore, we
can remove vertex $u$ from $G$, as the next reduction rule states, and the
resulting graph $G-\{u\}$ is $3$-colorable if and only if $G$ is $3$%
-colorable.

\begin{reductionRule}[siblings elimination]
\label{rule2}Let ${G=(V,E)}$ be a $K_{4}$-free graph and ${u,v\in V}$, such
that ${N(u)\subseteq N(v)}$. Then remove $u$ from $G$.
\end{reductionRule}

\begin{definition}
\label{irreducible-def}Let $G=(V,E)$ be a $K_{4}$-free graph. If neither
Reduction Rule~\ref{rule1} nor Reduction Rule~\ref{rule2} can be applied to $%
G$, then $G$ is \emph{irreducible}.
\end{definition}

Due to Definition~\ref{irreducible-def}, a $K_{4}$-free graph is irreducible
if and only if it is diamond-free and siblings-free. Given a $K_{4}$-free
graph $G$ with $n$ vertices, clearly we can iteratively execute Reduction
Rules~\ref{rule1} and~\ref{rule2} in time polynomial on $n$, until we either
find a $K_{4}$ or none of the Reduction Rules~\ref{rule1} and~\ref{rule2}
can be further applied. If we find a $K_{4}$, then clearly the initial graph 
$G$ is not $3$-colorable. Otherwise, we transform~$G$ in polynomial time
into an irreducible ($K_{4}$-free) graph $G^{\prime }$ of smaller or equal
size, such that $G^{\prime }$ is $3$-colorable if and only if $G$ is $3$%
-colorable.

Observe that during the application of these reduction rules to a graph~$G$,
neither the diameter nor the radius of~$G$ increase. Moreover, note that in
the irreducible graph~$G^{\prime }$, the neighborhood~$N_{G^{\prime }}(u)$
of every vertex $u$ in $G^{\prime }$ induces a graph with maximum degree at
most~$1$, since otherwise~$G^{\prime }$ would have a $K_{4}$ or a diamond as
an induced subgraph. That is, the subgraph of~$G^{\prime }$ induced by~$%
N_{G^{\prime }}(u)$ contains only isolated vertices and isolated edges.
Furthermore, if $G$ (and thus also~$G^{\prime}$) is connected and if $%
G^{\prime}$ has more than two vertices, then the minimum degree of $%
G^{\prime }$ is~$\delta (G^{\prime})\geq 2$, since $G^{\prime }$ is
siblings-free. All these facts are summarized in the next observation. In
the remainder of the article, we assume that any given graph $G$ is
irreducible.

\begin{observation}
\label{obs1}Let ${G=(V,E)}$ be a connected $K_{4}$-free graph and ${%
G^{\prime }=(V^{\prime },E^{\prime })}$ be the irreducible graph obtained
from~$G$. If $G^{\prime}$ has more than two vertices, then ${\delta
(G^{\prime })\geq 2}$, ${diam(G^{\prime })\leq diam(G)}$, ${rad(G^{\prime
})\leq rad(G)}$, and~${G^{\prime }}$ is $3$-colorable if and only if~$G$ is $%
3$-colorable. Moreover, for every~${u\in V^{\prime }}$, $N_{G^{\prime }}(u)$
induces in~$G^{\prime }$ a graph with maximum degree~$1$.
\end{observation}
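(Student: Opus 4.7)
The plan is to verify that every single application of Reduction Rule~\ref{rule1} or Reduction Rule~\ref{rule2} preserves three invariants: connectedness, $3$-colorability, and non-increase of all pairwise distances. Since $G'$ is obtained from $G$ by a finite sequence of such applications, iterating the invariants immediately yields connectedness of $G'$, ${diam(G') \leq diam(G)}$, ${rad(G') \leq rad(G)}$, and the $3$-colorability equivalence (the latter being already argued in the discussion preceding each rule). The two remaining claims --- $\delta(G') \geq 2$ and that $N_{G'}(u)$ induces a subgraph of maximum degree at most $1$ --- will then follow directly from the structural defining properties of irreducibility ($K_{4}$-free, diamond-free, and siblings-free).

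For the distance bound under Rule~\ref{rule2}, which deletes a vertex $u$ with $N(u) \subseteq N(v)$, any walk in $G$ passing through $u$ via a sub-walk $a \to u \to b$ can be rerouted through $v$ as $a \to v \to b$ of the same length, since $a, b \in N(u) \subseteq N(v)$; the case in which $u$ is an endpoint of the walk is handled analogously. Hence $d_{G - \{u\}}(x,y) \leq d_G(x,y)$ for every $x, y \neq u$, and connectivity is preserved. For Rule~\ref{rule1}, let $u_0$ denote the vertex obtained by merging $u_1$ and $u_2$ and let $\pi : V \to V'$ be the natural projection collapsing $\{u_1, u_2\}$ to $u_0$. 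Since $u_1 u_2 \notin E$, every edge of $G$ has its two endpoints in distinct fibres of $\pi$, and therefore projects to an edge of $G'$; consequently every walk in $G$ maps to a walk in $G'$ of the same length, yielding $d_{G'}(\pi(x), \pi(y)) \leq d_G(x,y)$ for all $x, y \in V$. Taking maxima (resp.\ min-max) over all pairs gives the diameter and radius inequalities.

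For the final two claims I would argue directly inside $G'$. Since $G'$ is connected and has more than two vertices, no vertex is isolated; if some $u$ had $N_{G'}(u) = \{v\}$, I would pick a neighbor $w$ of $v$ distinct from $u$ (which must exist, otherwise $\{u, v\}$ would form a connected component of size two), observe that $u$ and $w$ are non-adjacent and satisfy $N_{G'}(u) = \{v\} \subseteq N_{G'}(w)$, and obtain a contradiction with the siblings-freeness of $G'$; hence $\delta(G') \geq 2$. Similarly, if some $v \in N_{G'}(u)$ had two neighbors $w_1, w_2$ inside $N_{G'}(u)$, then $\{u, v, w_1, w_2\}$ would induce either a $K_4$ (if $w_1 w_2 \in E'$) or a diamond (if $w_1 w_2 \notin E'$), contradicting irreducibility. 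I do not anticipate any deep obstacle in this proof; the only step requiring a modicum of care is the distance argument for Rule~\ref{rule1}, where one must verify that the merging projection $\pi$ sends edges to edges, which is ensured precisely by the hypothesis $u_1 u_2 \notin E$ built into the diamond elimination rule.
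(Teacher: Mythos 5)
Your proposal is correct and takes essentially the same route as the paper, which in fact gives no formal proof of this observation: the $3$-colorability equivalence is argued in the text introducing each reduction rule, and the remaining claims are asserted with one-line justifications in the paragraph preceding the statement. Your write-up supplies the missing details, and the arguments for the projection under Rule~\ref{rule1}, for $\delta(G')\geq 2$ via siblings-freeness, and for the maximum degree~$1$ inside neighborhoods via the $K_4$/diamond dichotomy are all exactly right.

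The one step where your justification is too quick is the radius bound under Rule~\ref{rule2}. Writing $ecc_G(x)=\max_{y}d_G(x,y)$ for the eccentricity, your distance inequality only yields $rad(G-\{u\})\leq\min_{x\neq u}ecc_G(x)$, and this minimum is taken over a \emph{smaller} vertex set than the one defining $rad(G)$; if the deleted sibling $u$ happens to be the unique center of $G$, the claimed inequality does not follow from "taking the min-max". The gap is closed by the same rerouting you already use: for every $w\neq u$, a shortest $u$--$w$ path $u,a,\ldots,w$ in $G$ has $a\in N(u)\subseteq N(v)$ and avoids $u$ after its first vertex, so $v,a,\ldots,w$ witnesses $d_{G-\{u\}}(v,w)\leq d_G(u,w)$, whence $ecc_{G-\{u\}}(v)\leq ecc_G(u)$ and $rad(G-\{u\})\leq rad(G)$ in all cases. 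With that one sentence added, the proof is complete.
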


\section{Algorithms for $3$-coloring on graphs with diameter~$2$\label%
{diameter-two-sec}}

In this section we present our results on graphs with diameter~$2$. In
particular, we provide in Section~\ref{sqrt-n-alg-diam-2-subsec} our
subexponential algorithm for $3$-coloring on such graphs. We then provide,
for every~$n$, an example of an irreducible and triangle-free graph~$G_{n}$
with~$\Theta (n)$ vertices and diameter~$2$, which is $3$-colorable, has
minimum dominating set of size $\Theta (\sqrt{n})$, and its minimum degree
is $\delta(G_n) = \Theta (\sqrt{n})$. Furthermore, we define in Section~\ref%
{polynomial-L3-connected-diam-2-subsec} the notion of an articulation
neighborhood, and we provide our polynomial algorithm for irreducible graphs 
$G$ with diameter~$2$, which have at least one articulation neighborhood.

\subsection{An $2^{O(\protect\sqrt{n\log n})}$-time algorithm for any graph
with diameter~$2$\label{sqrt-n-alg-diam-2-subsec}}

We first provide in the next lemma a well known algorithm that decides the $%
3 $-coloring problem on an arbitrary graph $G$, using a dominating set (DS)
of $G$~\cite{NarayanaswamyIPL11}.

\begin{lemma}[\hspace{0.3pt}\protect\cite{NarayanaswamyIPL11}, the
DS-approach]
\label{3-coloring-dominating-set-lem}Let ${G=(V,E)}$ be a graph and $%
D\subseteq V$ be a dominating set of $G$. Then, the $3$-coloring problem can
be decided in $O^{\ast }(3^{|D|})$ time on $G$.
\end{lemma}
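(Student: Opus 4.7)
The plan is to try every possible $3$-coloring of the dominating set $D$ by brute force and, for each such partial coloring, test in polynomial time whether it extends to a proper $3$-coloring of~$G$. There are at most $3^{|D|}$ maps $D \to \{1,2,3\}$, so if each extension test runs in $\mathrm{poly}(n)$ time, the overall running time is $O^{\ast}(3^{|D|})$, as claimed.

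Concretely, I would iterate over all maps $c_{D} : D \to \{1,2,3\}$. For each, first check that $c_{D}$ is a proper coloring of $G[D]$, rejecting it otherwise. Then, for every vertex $v \in V \setminus D$, compute the list $L(v) \subseteq \{1,2,3\}$ consisting of those colors that do not appear on any neighbor of $v$ lying in $D$. Because $D$ dominates $V$, the vertex $v$ has at least one neighbor in $D$, and therefore $|L(v)| \leq 2$; if some $L(v)$ is empty, the partial coloring $c_{D}$ cannot be extended and we move on to the next map.

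The remaining task, and the main (but standard) technical point, is to decide whether the list-coloring instance on the graph $G - D$ with lists $L(v)$ of size at most $2$ is satisfiable. This is well known to be polynomial-time solvable via a direct reduction to $2$-SAT: for every $v \in V \setminus D$ with $L(v) = \{a_{v}, b_{v}\}$ introduce a Boolean variable $x_{v}$ encoding the binary choice inside $L(v)$; encode any singleton list $L(v) = \{a_{v}\}$ by a unit clause that fixes the color of~$v$; and for every edge $uv$ in $G - D$ add the $2$-clauses that forbid the (at most four) assignments which would color $u$ and $v$ with the same color. Since $2$-SAT is in~$\mathrm{P}$, each extension test runs in $\mathrm{poly}(n)$ time. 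Declaring $G$ to be $3$-colorable if and only if at least one map $c_{D}$ passes the extension test then yields the desired $O^{\ast}(3^{|D|})$ algorithm.
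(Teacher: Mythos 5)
Your proposal is correct and follows exactly the approach the paper attributes to the cited DS-approach (the paper itself only cites \cite{NarayanaswamyIPL11} for this lemma, but its description in the proof of Theorem~\ref{min-delta-n-over-delta-alg-diam-2-thm} is precisely your scheme): enumerate all $3^{|D|}$ colorings of the dominating set, observe that domination forces every remaining vertex to have a list of at most two available colors, and solve the resulting list $2$-coloring instance in polynomial time via $2$-SAT.
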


In the next theorem we use Lemma~\ref{3-coloring-dominating-set-lem} to
provide an improved $3$-coloring algorithm for the case of graphs with
diameter~$2$. The time complexity of this algorithm is parameterized on the
minimum degree $\delta $ of the given graph $G$, as well as on the fraction~$%
\frac{n}{\delta}$.

\begin{theorem}
\label{min-delta-n-over-delta-alg-diam-2-thm}Let ${G=(V,E)}$ be an
irreducible graph with $n$ vertices. Let ${diam(G)=2}$ and ${\delta }$ be
the minimum degree of $G$. Then, the $3$-coloring problem can be decided in $%
{2^{O(\min \{\delta ,\frac{n}{\delta }\log \delta \})}}$ time on $G$.
\end{theorem}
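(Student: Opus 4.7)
The plan is to prove the two bounds separately and take the minimum. Both bounds will be obtained by exhibiting a small dominating set $D$ of $G$ and then invoking the DS-approach of Lemma~\ref{3-coloring-dominating-set-lem}, which decides $3$-coloring in $O^{\ast}(3^{|D|})$ time.

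First I would establish the $2^{O(\delta)}$ bound. Pick any vertex $v$ of minimum degree, so $|N(v)|=\delta$. Since $diam(G)=2$, every vertex $u \notin N[v]$ is at distance exactly $2$ from $v$, hence has at least one neighbor in $N(v)$; moreover $v$ itself has a neighbor in $N(v)$ (recall $\delta \ge 2$ by Observation~\ref{obs1}). Therefore $N(v)$ is a dominating set of $G$ of size $\delta$, and Lemma~\ref{3-coloring-dominating-set-lem} immediately gives an $O^{\ast}(3^{\delta}) = 2^{O(\delta)}$ algorithm. (A mild refinement: by Observation~\ref{obs1}, $G[N(v)]$ is a disjoint union of isolated vertices and isolated edges, so after fixing the color of $v$, the number of proper $3$-colorings of $N[v]$ is at most $3\cdot 2^{\delta}$; but this only improves constants in the exponent and is not needed for the stated bound.)

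Next I would establish the $2^{O(\frac{n}{\delta}\log \delta)}$ bound. Here I would invoke the classical theorem (Arnautov / Payan / Lov\'{a}sz) that every graph with $n$ vertices and minimum degree $\delta$ admits a dominating set of size at most $\frac{n(1+\ln(\delta+1))}{\delta+1} = O(\tfrac{n}{\delta}\log \delta)$, constructible in polynomial time by the standard greedy algorithm that repeatedly picks a vertex covering the most currently uncovered vertices. (Note that this step does not use $diam(G)=2$; only the minimum-degree hypothesis is needed.) Applying Lemma~\ref{3-coloring-dominating-set-lem} to this dominating set $D$ decides $3$-coloring in time $O^{\ast}\bigl(3^{|D|}\bigr) = 2^{O(\frac{n}{\delta}\log \delta)}$.

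Finally, the algorithm simply runs whichever of the two procedures has the smaller exponent for the given $\delta$, yielding the claimed $2^{O(\min\{\delta,\,\frac{n}{\delta}\log \delta\})}$ running time. There is no real obstacle: the only non-trivial ingredients are the DS-approach (already given) and the Arnautov--Payan--Lov\'{a}sz bound on dominating sets in graphs of minimum degree $\delta$; the use of $diam(G)=2$ is confined to arguing that the $\delta$-element set $N(v)$ dominates $G$, which is a one-line observation. The worst case occurs when the two exponents coincide, i.e.\ when $\delta = \Theta(\sqrt{n\log n})$, giving the $2^{O(\sqrt{n\log n})}$ worst-case bound advertised in the introduction.
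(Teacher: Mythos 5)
Your proposal is correct and follows essentially the same route as the paper: both branches rest on exhibiting a dominating set ($N(v)$ of size $\delta$ using $diam(G)=2$, and the greedy $O(\frac{n}{\delta}\log\delta)$-size set from the Arnautov--Payan--Lov\'{a}sz bound) and feeding it to the DS-approach of Lemma~\ref{3-coloring-dominating-set-lem}. The only cosmetic difference is in the first branch, where the paper iterates over the $2^{\delta}$ proper $2$-colorings of $N(v)$ (every vertex there being adjacent to $v$) followed by list $2$-coloring, whereas you apply the DS-approach directly for $O^{\ast}(3^{\delta})$ --- both are $2^{O(\delta)}$, as you note.
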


\begin{proof}
In an arbitrary graph $G$ with $n$ vertices and minimum degree $\delta $, it
is well known how to construct in polynomial time a dominating set $D$ with
cardinality $|D|\leq n\frac{1+\ln (\delta +1)}{\delta +1}$~\cite%
{AlonProbabilistic08} (see also~\cite{AlonTransversal90}). Therefore we can
decide by Lemma~\ref{3-coloring-dominating-set-lem} the $3$-coloring problem
on $G$ in time $O^{\ast }(3^{n\frac{1+\ln (\delta +1)}{\delta +1}})$. Note
by Observation~\ref{obs1} that $\delta \geq 2$, since $G$ is irreducible by
assumption. Therefore the latter running time is $2^{O(\frac{n}{\delta }\log
\delta )}$.

The DS-approach of Lemma~\ref{3-coloring-dominating-set-lem} applies to any
graph $G$. However, since $G$ has diameter~$2$ by assumption, we can design
a second algorithm for $3$-coloring of $G$ as follows. Consider a vertex $%
u\in V$ with minimum degree, i.e.~$\deg (u)=\delta $. Since $diam(G)=2$, it
follows that for every other vertex $u\in V$, either $d(u,v)=1$ or $d(u,v)=2$%
. Therefore $N(u)$ is a dominating set of $G$ with cardinality $\delta $.
Furthermore, in any possible $3$-coloring of $G$, every vertex of $N(u)$ can
be colored by one of two possible colors, since all vertices of $N(u)$ are
adjacent with $u$. We iterate now for all possible proper $2$-colorings of $%
N(u)$ (instead of all $3$-colorings of the dominating set in the proof of
Lemma~\ref{3-coloring-dominating-set-lem}). There are at most $2^{\delta }$
such colorings. Similarly to the algorithm of Lemma~\ref%
{3-coloring-dominating-set-lem}, for every such $2$-coloring of $N(u)$ we
solve in polynomial time the corresponding list $2$-coloring of $G-N[u]$.
Thus, considering at most all possible $2$-colorings of $N(u)$, we can
decide the $3$-coloring problem on $G$ in time $O^{\ast }(2^{\delta
})=2^{O(\delta )}$.

Summarizing, we can combine these two $3$-coloring algorithms for $G$,
obtaining an algorithm with time complexity ${2^{O(\min \{\delta ,\frac{n}{%
\delta }\log \delta \})}}$.\qed
\end{proof}

\medskip

The next corollary provides the first subexponential algorithm for the $3$%
-coloring problem on graphs with diameter~$2$. Its correctness follows now
by Theorem~\ref{min-delta-n-over-delta-alg-diam-2-thm}.

\begin{corollary}
\label{sqrt-n-alg-diam-2-cor}Let $G=(V,E)$ be an irreducible graph with $n$
vertices and let $diam(G)=2$. Then, the $3$-coloring problem can be decided
in $2^{O(\sqrt{n\log n})}$ time on $G$.
\end{corollary}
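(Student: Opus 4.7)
The plan is to derive the corollary directly from Theorem~\ref{min-delta-n-over-delta-alg-diam-2-thm} by showing that the parameterized bound ${\min\{\delta,\ \frac{n}{\delta}\log \delta\}}$ is always $O(\sqrt{n\log n})$, regardless of what the minimum degree~$\delta$ of the irreducible graph $G$ turns out to be. The worst case occurs when the two quantities inside the minimum are balanced, i.e.~when $\delta \approx \frac{n}{\delta}\log \delta$, which gives $\delta^2 \approx n\log \delta \leq n\log n$, and hence $\delta = O(\sqrt{n\log n})$.

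To make this rigorous, I would do a short case analysis on the value of $\delta$ relative to the threshold $\sqrt{n\log n}$. In the first case, if ${\delta \leq \sqrt{n\log n}}$, then trivially ${\min\{\delta,\ \frac{n}{\delta}\log \delta\} \leq \delta \leq \sqrt{n\log n}}$. In the second case, if ${\delta > \sqrt{n\log n}}$, then ${\frac{n}{\delta} < \sqrt{n/\log n}}$, and since $\delta \leq n$ we have $\log \delta \leq \log n$, so
\[
\frac{n}{\delta}\log \delta \;<\; \sqrt{\frac{n}{\log n}}\cdot \log n \;=\; \sqrt{n\log n}.
\]
In either case ${\min\{\delta,\ \frac{n}{\delta}\log \delta\} = O(\sqrt{n\log n})}$, and plugging this into the running time $2^{O(\min\{\delta,\frac{n}{\delta}\log \delta\})}$ of Theorem~\ref{min-delta-n-over-delta-alg-diam-2-thm} yields the claimed bound $2^{O(\sqrt{n\log n})}$.

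There is essentially no obstacle here: the theorem does the heavy lifting, and the corollary is just the observation that the two terms in the minimum cross each other at $\delta = \Theta(\sqrt{n\log n})$, which is therefore the worst-case value of the exponent. The only minor care needed is to note that $\delta \geq 2$ (by Observation~\ref{obs1}, since $G$ is irreducible) so that $\log \delta$ is well-defined and positive; for the degenerate case where $G$ has at most two vertices, $3$-coloring is trivial.
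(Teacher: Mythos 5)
Your proposal is correct and follows essentially the same argument as the paper: a case split on whether $\delta$ is above or below the threshold $\sqrt{n\log n}$, using $\log\delta \leq \log n$ to bound $\frac{n}{\delta}\log\delta$ by $\sqrt{n\log n}$ in the second case, and then invoking Theorem~\ref{min-delta-n-over-delta-alg-diam-2-thm}. The additional remark about $\delta\geq 2$ via Observation~\ref{obs1} is a harmless refinement that the paper implicitly relies on as well.
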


\begin{proof}
Let ${\delta =\delta (G)}$ be the minimum degree of $G$. If ${\delta \leq 
\sqrt{n\log n}}$, then the $3$-coloring problem can be decided in ${%
2^{O(\delta )}=2^{O(\sqrt{n\log n})}}$ time on $G$ by Theorem~\ref%
{min-delta-n-over-delta-alg-diam-2-thm}. Suppose now that ${\delta >\sqrt{%
n\log n}}$. Note that ${\log \delta <\log n}$, since ${\delta <n}$.
Therefore ${\frac{\log \delta }{\delta }<\frac{\log n}{\sqrt{n\log n}}=\sqrt{%
\frac{\log n}{n}}}$, and thus ${\frac{n}{\delta }\log \delta <n\sqrt{\frac{%
\log n}{n}}}$, i.e.~${\frac{n}{\delta }\log \delta <\sqrt{n\log n}}$.
Therefore the $3$-coloring problem can be decided in ${2^{O(\frac{n}{\delta }%
\log \delta )}=2^{O(\sqrt{n\log n})}}$ time on $G$ by Theorem~\ref%
{min-delta-n-over-delta-alg-diam-2-thm}.\qed
\end{proof}

\medskip

The time complexity of Corollary~\ref{sqrt-n-alg-diam-2-cor} is
asymptotically the same as the currently best known complexity for the graph
isomorphism (GI) problem~\cite{Babai83}. Thus, as the $3$-coloring problem
on graphs with diameter~$2$ has been neither proved to be polynomially
solvable nor to be NP-complete, it would be worth to investigate whether
this problem is polynomially reducible to/from the GI problem.

Given the statements of Lemma~\ref{3-coloring-dominating-set-lem} and
Theorem~\ref{min-delta-n-over-delta-alg-diam-2-thm}, a question that arises
naturally is whether the worst case complexity of the algorithm of Theorem~%
\ref{min-delta-n-over-delta-alg-diam-2-thm} is indeed $2^{O(\sqrt{n\log n})}$
(as given in Corollary~\ref{sqrt-n-alg-diam-2-cor}). That is, do there exist 
$3$-colorable irreducible graphs $G$ with $n$ vertices and $diam(G)=2$, such
that both $\delta (G)$ and the size of the minimum dominating set of $G$ are 
$\Theta (\sqrt{n\log n})$, or close to this value? We answer this question
to the affirmative, thus proving that, in the case of $3$-coloring of graphs
with diameter~$2$, our analysis of the DS-approach (cf.~Lemma~\ref%
{3-coloring-dominating-set-lem} and Theorem~\ref%
{min-delta-n-over-delta-alg-diam-2-thm}) is asymptotically almost tight. In
particular, we provide in the next theorem~for every $n$ an example of an
irreducible $3$-colorable graph $G_{n}$ with $\Theta (n)$ vertices and $%
diam(G_{n})=2$, such that both $\delta (G_{n})$ and the size of the minimum
dominating set of $G$ are $\Theta (\sqrt{n})$. In addition, each of these
graphs $G_{n}$ is triangle-free, as the next theorem states. The
construction of the graphs $G_{n}$ is based on a suitable and interesting
matrix arrangement of the vertices of $G_{n}$.

\begin{theorem}
\label{example-irreducible-sqrt-n-thm}Let ${n\geq 1}$. Then there exists an 
\emph{irreducible} and \emph{triangle-free} $3$\emph{-colorable} graph ${%
G_{n}=(V_{n},E_{n})}$ with $\Theta (n)$ vertices, where ${diam(G_{n})=2}$
and ${\delta (G}_{n}){=\Theta (\sqrt{n})}$. Furthermore, the size of the
minimum dominating set of $G_{n}$ is $\Theta (\sqrt{n})$.
\end{theorem}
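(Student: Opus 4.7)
I would construct $G_n$ explicitly by a matrix-style arrangement. Let $m=\lceil\sqrt{n}\rceil$ and take $V_n$ to consist of $m^2$ ``matrix'' vertices $\{v_{ij}\}_{i,j=1}^{m}$ arranged as an $m\times m$ grid, together with $2m$ auxiliary vertices: a row representative $r_i$ for each row $i$ and a column representative $c_j$ for each column $j$, so that $|V_n|=m^2+2m=\Theta(n)$. I would define the edges so that: each $r_i$ is joined to all matrix entries in row $i$; each $c_j$ is joined to all matrix entries in column $j$; no edge $r_i\sim c_j$ is present; and intra-matrix edges are added via a carefully designed bipartite-like (parity or modular) rule that links pairs of matrix entries lying in distinct rows and distinct columns. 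The two non-trivial design choices are (i) to omit every $r_i\sim c_j$ edge, which kills the obvious triangle $(r_i,c_j,v_{ij})$, and (ii) to pick the intra-matrix rule so that every cross-row cross-column pair of matrix entries shares at least one common intra-matrix neighbor, while no three matrix entries are pairwise adjacent.

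The verification proceeds through the claimed properties one at a time. First, $\delta(G_n)=\Theta(\sqrt{n})$: each $r_i$ and each $c_j$ has degree exactly $m$, and the intra-matrix rule is tuned so that every matrix entry also has degree $\Theta(m)$ (one row-rep, one column-rep, and $\Theta(m)$ intra-matrix neighbors). Second, $diam(G_n)=2$: same-row and same-column matrix pairs are linked through their common representative; cross-row cross-column matrix pairs are linked by the intra-matrix rule; pairs involving one or two representatives are handled via matrix entries. Third, triangle-freeness is checked by enumerating the possible types of triples, the only delicate configurations being $(r_i,c_j,v_{ij})$ (excluded by the omission of $r_i\sim c_j$) and intra-matrix triples (excluded by the intra-matrix edge rule). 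Fourth, $G_n$ is $3$-colorable, via an explicit coloring that assigns one color to $\{r_i\}$, one color to $\{c_j\}$, and splits the matrix vertices across the remaining colors in a way compatible with the bipartite-like intra-matrix subgraph. Fifth, irreducibility: $K_4$-freeness is immediate from triangle-freeness; diamond-freeness follows because no two adjacent vertices share two common neighbors in the construction; siblings-freeness follows because the three vertex types have structurally non-comparable neighborhood patterns.

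Finally, for the dominating-set assertion: the upper bound is automatic from $diam(G_n)=2$, since $N[u]$ is a dominating set for every vertex $u$, and choosing $u$ of minimum degree yields a dominating set of size $\delta(G_n)+1=\Theta(\sqrt{n})$. For the matching lower bound $\Omega(\sqrt{n})$, the construction is regular enough that the maximum degree is also $\Delta(G_n)=\Theta(\sqrt{n})$, whence any dominating set must have size at least $n/(\Delta(G_n)+1)=\Omega(\sqrt{n})$.

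The main obstacle is the simultaneous satisfaction of triangle-freeness, diameter~$2$, and $\delta=\Theta(\sqrt{n})$. A standard Moore-type counting argument (fix a min-degree vertex $u$, note that $N(u)$ is independent by triangle-freeness, and bound $|V\setminus N[u]|$) shows that any triangle-free graph of diameter~$2$ with minimum degree $\delta$ satisfies $n\le\delta^2+1$, so $\delta\ge\sqrt{n-1}$ is essentially forced. The construction therefore sits at the extremal frontier and there is very little room to insert the row/column representatives and the intra-matrix edges without either creating a triangle, producing a pair of siblings (two vertices with nested neighborhoods), or pushing the diameter beyond $2$. Identifying an intra-matrix edge rule that simultaneously delivers the diameter-$2$ covering and preserves triangle-freeness (for instance, a ``connection set'' of size $\Theta(m)$ in $\mathbb{Z}_m^2$ which is sum-free and whose pairwise sumset covers $\mathbb{Z}_m^2\setminus\{0\}$) is the heart of the argument; once such a rule is pinned down, the remaining verifications are a routine combinatorial bookkeeping.
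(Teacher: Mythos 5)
Your proposal defers exactly the step that carries the content of the theorem. You yourself identify "an intra-matrix edge rule that simultaneously delivers the diameter-$2$ covering and preserves triangle-freeness" as the heart of the argument, but you never exhibit such a rule, and its existence is not routine: in your modular formulation you would need a connection set $S\subseteq \mathbb{Z}_m^2$ of size $\Theta(m)$ that is sum-free (for triangle-freeness) and yet whose difference/sum set covers essentially all of $\mathbb{Z}_m^2$ (for diameter $2$); since $|S+S|\leq|S|^2=\Theta(m^2)$ must cover $m^2-1$ elements, $S$ is forced to behave like a perfect difference set, which is precisely the extremal object whose construction the theorem is about. The paper does give an explicit construction, and it is structurally different from yours: a $2k\times 2k$ matrix plus a \emph{single} extra vertex $v_0$ adjacent to an entire color class (half the matrix, so $\deg(v_0)=\frac{n-1}{2}$), with row $i$ joined to row $2k+1-i$ and column $j$ joined to column $2k+1-j$ by complete bipartite graphs minus a perfect matching (and minus the red--red edges); there are no per-row or per-column representatives.

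Your skeleton with $2m$ representatives in fact breaks diameter $2$ before the intra-matrix rule even enters: two distinct row representatives $r_i$ and $r_{i'}$ have $N(r_i)=\text{row } i$ and $N(r_{i'})=\text{row } i'$, which are disjoint, and since you fix $\deg(r_i)=m$ they are non-adjacent, so $d(r_i,r_{i'})\geq 3$; the same holds for two column representatives. Making the representatives pairwise adjacent creates triangles among them, and giving them extra matrix neighbors destroys your degree and triangle bookkeeping. Two smaller points: your Moore-type inequality $n\leq\delta^2+1$ for triangle-free diameter-$2$ graphs is false for non-regular graphs (the correct count is $n\leq 1+\delta\Delta$, and a star already violates your form), and your dominating-set lower bound via $n/(\Delta+1)$ would not transfer to the paper's graph, where $\Delta=\Theta(n)$; the paper instead uses a counting argument that treats the high-degree vertex $v_0$ separately and bounds how many vertices the remaining $o(\sqrt{n})$ candidates could dominate.
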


\begin{proof}
We assume without loss of generality that ${n=4k^{2}+1}$ for some integer ${%
k\geq 1}$ and we construct a graph ${G_{n}=(V_{n},E_{n})}$ with $n$ vertices
(otherwise, if ${n\neq 4k^{2}+1}$ for any ${k\geq 1}$, we provide the same
construction of $G_{n}$ with ${4\lceil \frac{\sqrt{n}}{2}\rceil
^{2}+1=\Theta (n)}$ vertices). We arrange the first $n-1$ vertices of~$G_{n}$
(i.e. all vertices of $G_{n}$ except one of them) in a matrix of size~$%
2k\times 2k$. For simplicity of notation, we enumerate these $4k^{2}$
vertices in the usual fashion, i.e.~vertex $v_{i,j}$ is the vertex in the
intersection of row $i$ and column $j$, where~${1\leq i,j\leq 2k}$.
Furthermore, we denote the $(4k^{2}+1)$th vertex of $G_{n}$ by $v_{0}$. We
assign the~$3$~colors red, blue, green to the vertices of $G_{n}$ as
follows. The vertices ${\{v_{i,j}:1\leq i\leq k,\ 1\leq j\leq k\}}$ are
colored blue, the vertices ${\{v_{i,j}:k+1\leq i\leq 2k,\ 1\leq j\leq k\}}$
are colored green, the vertices ${\{v_{i,j}:1\leq i\leq 2k,\ k+1\leq j\leq
2k\}}$ are colored red, and vertex $v_{0}$ is colored green.

We add edges among vertices of $V_{n}$ as follows. First, vertex $v_{0}$ is
adjacent to exactly all vertices that are colored red in the above $3$%
-coloring of $G_{n}$. Then, the vertices of $j$th column $%
\{v_{1,j},v_{2,j},\ldots ,v_{2k,j}\}$ and the vertices of the $(2k+1-j)$th
column $\{v_{1,2k+1-j},v_{2,2k+1-j},\ldots ,v_{2k,2k+1-j}\}$ build a
complete bipartite graph, without the edges $\{v_{i,j}v_{i,2k+1-j}:1\leq
i\leq 2k\}$, i.e.~without the edges between vertices of the same row.
Furthermore, the vertices of $i$th row $\{v_{i,1},v_{i,2},\ldots ,v_{i,2k}\}$
and the vertices of the $(2k+1-i)$th row $\{v_{2k+1-i,1},v_{2k+1-i,2},\ldots
,v_{2k+1-i,2k}\}$ build a complete bipartite graph, without the edges $%
\{v_{i,j}v_{2k+1-i,j}:1\leq j\leq k\}\cup \{v_{i,j}v_{2k+1-i,\ell }:k+1\leq
j,\ell \leq 2k\}$. That is, there are no edges between vertices of the same
column and no edges between vertices colored red in the above $3$-coloring
of~$G_{n}$. Note also that there are no edges between vertices colored blue
(resp.~green, red), and thus this coloring is a proper $3$-coloring of $%
G_{n} $. The $2k\times 2k$ matrix arrangement of the vertices of $G_{n}$ is
illustrated in Figure~\ref{example-irreducible-fig-1}. In this figure, the
three color classes are illustrated by different shades of gray.
Furthermore, the edges of $G_{n}$ between different rows and between
different columns in this matrix arrangement are illustrated in Figures~\ref%
{example-irreducible-fig-2} and~\ref{example-irreducible-fig-3},
respectively.

\begin{figure}[h!tb]
\centering 
\subfigure[]{ \label{example-irreducible-fig-1}
\includegraphics[scale=0.363]{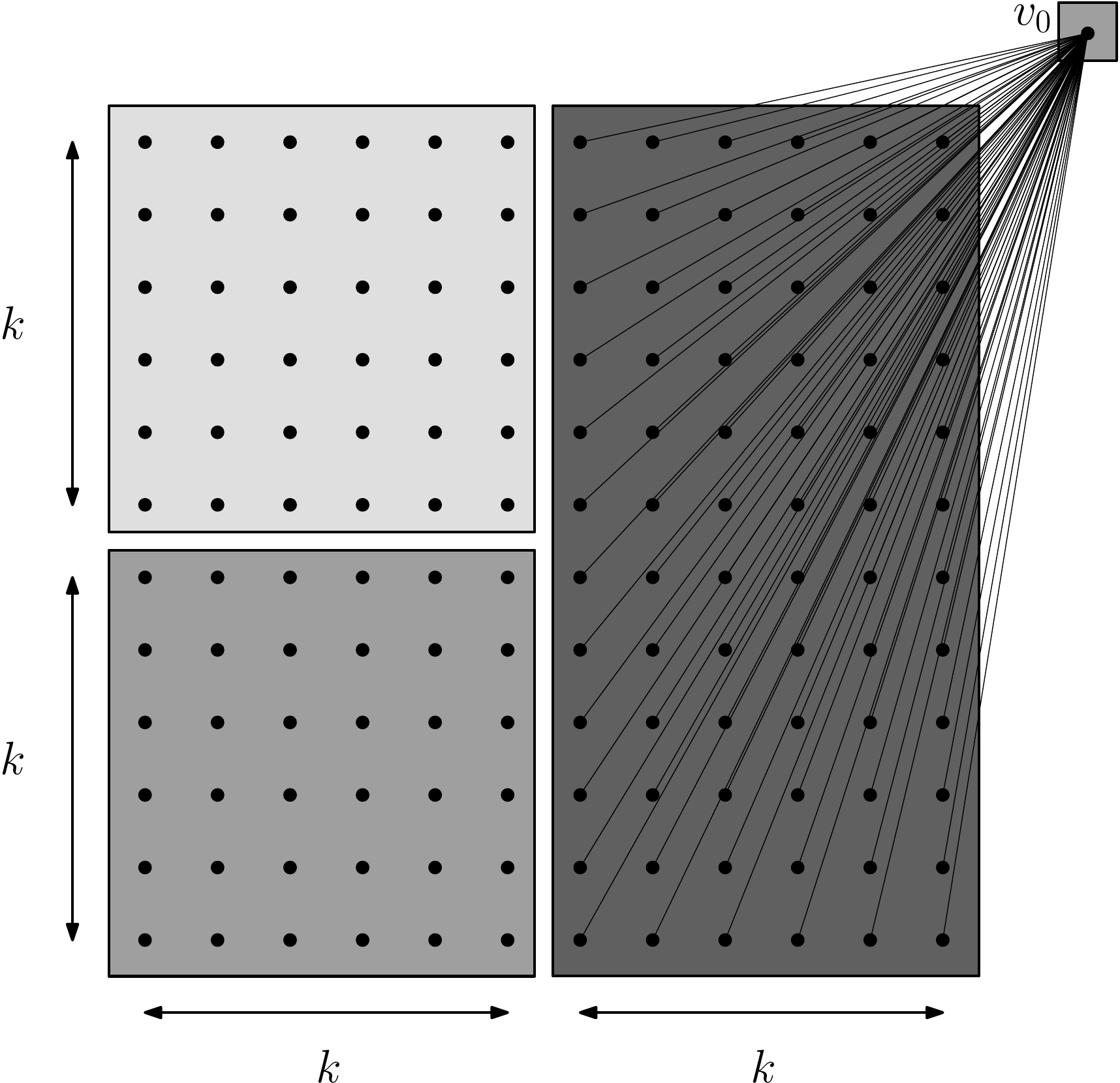}} 
\subfigure[]{ \label{example-irreducible-fig-2}
\includegraphics[scale=0.363]{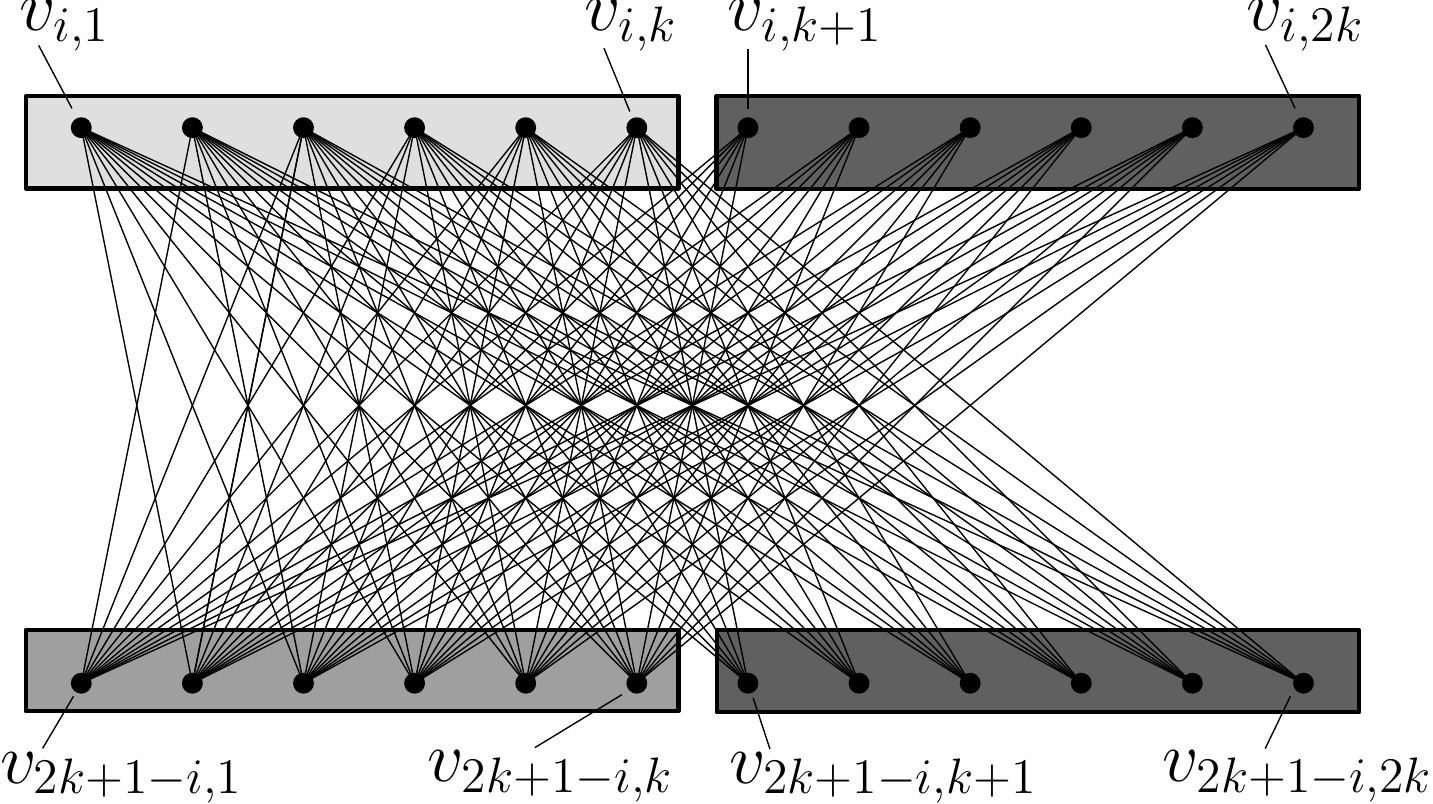}} \hspace{0.12cm}
\subfigure[]{ \label{example-irreducible-fig-3}
\includegraphics[scale=0.363]{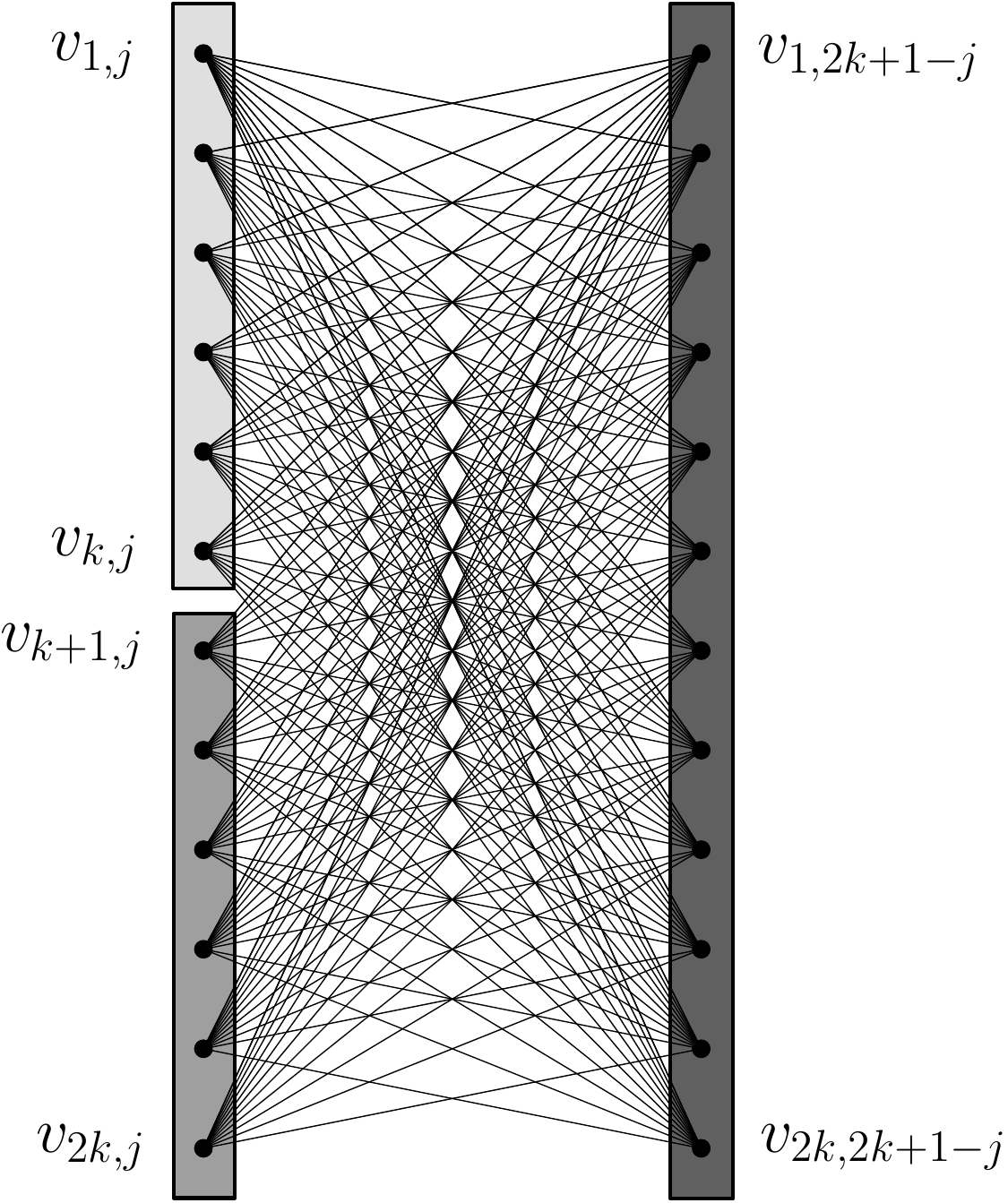}}
\caption{(a)~The ${2k\times 2k}$ matrix arrangement of the vertices of $%
G_{n} $, where~${n=4k^{2} + 1}$, (b)~the edges of~$G_{n}$ between the~$i$th and
the~${(2k+1-i)}$th rows, and~(c)~the edges of~$G_{n}$ between the~$j$th and
the ${(2k+1-j)}$th columns of this matrix arrangement. The three color
classes are illustrated by different shades of gray.}
\label{example-irreducible-fig}
\end{figure}

It is easy to see by the construction of $G_{n}$ that $3k\leq \deg
(v_{i,j})\leq 4k-2$ for every vertex $v_{i,j}\in V_{n}\setminus \{v_{0}\}$
and that $\deg (v_{0})=2k^{2}$. In particular, $\deg (v_{i,j})=3k$ (resp.~$%
\deg (v_{i,j})=4k-2$) for every vertex $v_{i,j}$ that has been colored red
(resp.~blue or green) in the above coloring of $G_{n}$. Therefore, since $%
k=\Theta (\sqrt{n})$, it follows that $\deg (v_{i,j})=\Theta (\sqrt{n})$ for
every $v_{i,j}\in V_{n}\setminus \{v_{0}\}$ and that $\deg (v_{0})=\frac{n-1%
}{2}$. That is, the minimum degree is $\delta (G_{n})=\Theta (\sqrt{n})$.

Note now that the set $\{v_{0}\}\cup \{v_{1,1},v_{2,1},\ldots ,v_{2k,1}\}$
of vertices is a dominating set of $G$ with $2k+1=\Theta (\sqrt{n})$
vertices, since for every $i=1,2,\ldots ,2k$, vertex $v_{i,1}$ is adjacent
to all vertices of the $(2k+1-i)$th row of the matrix, except vertex $%
v_{2k+1-i,1}$. Suppose that there exists a dominating set $D\subseteq V_{n}$
with cardinality~$o(\sqrt{n})$, i.e. $|D|=\frac{\sqrt{n}}{f(n)}$, where $%
f(n)=\omega (1)$. Denote $D^{\prime }=D\cup \{v_{0}\}$; note that $%
|D^{\prime }|\leq \frac{\sqrt{n}}{f(n)}+1$. Then, since $\deg
(v_{i,j})=\Theta (\sqrt{n})$ for every vertex $v_{i,j}\in V_{n}\setminus
\{v_{0}\}$, the vertices of $D^{\prime }$ can be adjacent to at most 
\begin{equation*}
\frac{n-1}{2}+|D^{\prime }\setminus \{v_{0}\}|\cdot \Theta (\sqrt{n})\leq 
\frac{n-1}{2}+\frac{\sqrt{n}}{f(n)}\cdot \Theta (\sqrt{n})=\frac{n-1}{2}+%
\frac{\Theta (n)}{f(n)}
\end{equation*}%
vertices of $V_{n}\setminus D^{\prime }$ in total. However $|V_{n}\setminus
D^{\prime }|\geq n-\frac{\sqrt{n}}{f(n)}-1\geq n-\sqrt{n}-1$ and $\frac{n-1}{%
2}+\frac{\Theta (n)}{f(n)}$ is asymptotically smaller than $n-\sqrt{n}-1$,
since $f(n)=\omega (1)$. That is, the vertices of $D^{\prime }$ (and thus
also of $D$) can not dominate all vertices of $V_{n}\setminus D^{\prime
}\subseteq V_{n}\setminus D$. Thus $D$ is not a dominating set, which is a
contradiction. Therefore the size of a minimum dominating set of $G_{n}$ is~$%
\Theta (\sqrt{n})$.

Now we will prove that $diam(G_{n})=2$. First note that for every vertex $%
v_{i,j}\in V_{n}\setminus \{v_{0}\}$ we have $d(v_{0},v_{i,j})\leq 2$.
Indeed, if $v_{i,j}$ is colored red, then $v_{i,j}$ is adjacent to $v_{0}$;
otherwise $v_{0}$ and $v_{i,j}$ have vertex $v_{2k+1-i,2k}$ as common
neighbor. Now consider two arbitrary vertices $v_{i,j}$ and $v_{p,q}$, where 
$(i,j)\neq (p,q)$. Since any two red vertices have $v_{0}$ as common
neighbor (and thus they are at distance $2$ from each other), assume that at
most one of the vertices $\{v_{i,j},v_{p,q}\}$ is colored red. We will prove
that $d(v_{i,j},v_{p,q})\leq 2$. If $p=i$, then $v_{i,j}$ and $v_{p,q}$ lie
both in the $i$th row of the matrix. Therefore $v_{i,j}$ and $v_{p,q}$ have
all vertices of $\{v_{2k+1-i,1},v_{2k+1-i,2},\ldots ,v_{2k+1-i,k}\}\setminus
\{v_{2k+1-i,j},v_{2k+1-i,q}\}$ as their common neighbors, and thus $%
d(v_{i,j},v_{p,q})=2$. If $q=j$, then $v_{i,j}$ and $v_{p,q}$ lie both in
the $j$th column of the matrix. Therefore $v_{i,j}$ and $v_{p,q}$ have all
vertices of $\{v_{1,2k+1-j},v_{2,2k+1-j},\ldots ,v_{2k,2k+1-j}\}\setminus
\{v_{i,2k+1-j},v_{p,2k+1-j}\}$ as their common neighbors, and thus $%
d(v_{i,j},v_{p,q})=2$. Suppose that $p\neq i$ and $q\neq j$. If $p=2k+1-i$
or $q=2k+1-j$, then $v_{i,j}v_{p,q}\in E_{n}$ by the construction of $G_{n}$%
, and thus $d(v_{i,j},v_{p,q})=1$. Suppose now that also $p\neq 2k+1-i$ and $%
q\neq 2k+1-j$. Since at most one of the vertices $\{v_{i,j},v_{p,q}\}$ is
colored red, we may assume without loss of generality that $v_{i,j}$ is blue
or green, i.e. $j\leq k$. Then there exists the path $%
(v_{i,j},v_{2k+1-i,2k+1-q},v_{p,q})$ in $G_{n}$, and thus $%
d(v_{i,j},v_{p,q})=2$. Summarizing, $d(u,v)\leq 2$ for every pair of
vertices $u$ and $v$ in $G_{n}$, and thus $diam(G_{n})=2$.

Now we will prove that $G_{n}$ is a triangle-free graph. First note that $%
v_{0}$ can not belong to any possible triangle in $G_{n}$, since its
neighbors are by construction an independent set. Suppose now that the
vertices $v_{i,j},v_{p,q},v_{r,s}$ induce a triangle in $G_{n}$. Since the
above coloring of the vertices of $V_{n}$ is a proper $3$-coloring of $G_{n}$%
, we may assume without loss of generality that $v_{i,j}$ is colored blue, $%
v_{p,q}$ is colored green, and $v_{r,s}$ is colored red in the above
coloring of $G_{n}$. That is, $i,j\in \{1,2,\ldots ,k\}$, $p\in
\{k+1,k+2,\ldots ,2k\}$, $q\in \{1,2,\ldots ,k\}$, $r\in \{1,2,\ldots ,2k\}$%
, and $s\in \{k+1,k+2,\ldots ,2k\}$. Thus, since $v_{i,j}v_{p,q}\in E_{n}$,
it follows by the construction of $G_{n}$ that $p=2k+1-i$. Furthermore,
since $v_{p,q}v_{r,s}\in E_{n}$, it follows that $p=2k+1-r$ or $q=2k+1-s$.
If $p=2k+1-r$, then $r=i$ (since also $p=2k+1-i$). Therefore $v_{i,j}$ and $%
v_{r,s}$ lie both in the $i$th row of the matrix, which is a contradiction,
since we assumed that $v_{i,j}v_{r,s}\in E_{n}$. Therefore $q=2k+1-s$.
Finally, since $v_{i,j}v_{r,s}\in E_{n}$, it follows that $r=2k+1-i$ or $%
s=2k+1-j$. If $r=2k+1-i$, then $r=p$ (since also $p=2k+1-i$). Therefore $%
v_{p,q}$ and $v_{r,s}$ lie both in the $p$th row of the matrix, which is a
contradiction, since we assumed that $v_{p,q}v_{r,s}\in E_{n}$. Therefore $%
s=2k+1-j$. Thus, since also $q=2k+1-s$, it follows that $q=j$, i.e.~$v_{i,j}$
and $v_{p,q}$ lie both in the $j$th column of the matrix. This is again a
contradiction, since we assumed that $v_{i,j}v_{p,q}\in E_{n}$. Therefore,
no three vertices of $G_{n}$ induce a triangle, i.e.~$G_{n}$ is
triangle-free.

Note now that $G_{n}$ is diamond-free, since it is also triangle-free, and
thus the Reduction Rule~\ref{rule1} does not apply on $G_{n}$. Furthermore,
it is easy to check that there exist no pair $v_{i,j}$ and $v_{p,q}$ of
vertices such that $N(v_{i,j})\subseteq N(v_{p,q})$, i.e.~$G_{n}$ is also
siblings-free, and thus also the Reduction Rule~\ref{rule2} does not apply
on $G_{n}$. Therefore $G_{n}$ is irreducible. This completes the proof of
the theorem.\qed
\end{proof}

\subsection{A tractable subclass of graphs with diameter~$2$\label%
{polynomial-L3-connected-diam-2-subsec}}

In this section we present a subclass of graphs with diameter~$2$, which
admits an efficient algorithm for $3$-coloring. We first introduce the
definition of an \emph{articulation neighborhood} in a graph.

\begin{definition}
Let $G=(V,E)$ be a graph and let $v\in V$. If $G-N[v]$ is disconnected, then 
$N[v]$ is an \emph{articulation neighborhood} in $G$.
\end{definition}

We prove in Theorem~\ref{polynomial-L3-connected-diam-2-thm} that, given an
irreducible graph with~$diam(G)=2$, which has at least one articulation
neighborhood, Algorithm~\ref{polynomial-L3-connected-diam-2-alg} decides~$3$%
-coloring on~$G$ in polynomial time. Note here that there exist instances of~%
$K_{4}$-free graphs~$G$ with diameter~$2$, for $G$ has no articulation
neighborhood, but in the irreducible graph~$G^{\prime}$ obtained by~$G$ (by
iteratively applying the Reduction Rules~\ref{rule1} and~\ref{rule2}), $%
G^{\prime}-N_{G^{\prime}}[v]$ becomes disconnected for some vertex~$v$ of~$%
G^{\prime}$. That is, $G^{\prime}$ may have an articulation neighborhood,
although $G$ has none. Therefore, if we provide as input to Algorithm~\ref%
{polynomial-L3-connected-diam-2-alg} the irreducible graph~$G^{\prime}$
instead of~$G$, this algorithm decides in polynomial time the~$3$-coloring
problem on~$G^{\prime}$ (and thus also on~$G$).

\begin{algorithm}[h!tb]
\caption{$3$-coloring of an irreducible graph $G$ with diameter-$2$ and at least one articulation neighborhood} \label{polynomial-L3-connected-diam-2-alg}
\begin{algorithmic}[1]
\REQUIRE{An irreducible graph $G=(V,E)$ with $diam(G)=2$ and a vertex $v_{0} \in V$, such that $G-N[v_{0}]$ is disconnected}
\ENSURE{A proper $3$-coloring of $G$, or the announcement that $G$ is not $3$-colorable}

\medskip

\STATE{Compute the connected components $C_{1},C_{2},\ldots,C_{k}$ of $G-N[v_{0}]$} \label{alg-3}

\STATE{Color $v_{0}$ red} \label{alg-4}

\medskip

\FOR{$i=1$ to $k$} \label{alg-5}
     \IF{$C_{i}$ is bipartite} \label{alg-6}
          \STATE{merge the two color classes of $C_{i}$ into two vertices $u_{i}$ and $v_{i}$, respectively\ \ } \label{alg-7}
          \COMMENT{$u_{i}$ and $v_{i}$ are adjacent}
     \ELSE \label{alg-8}
          \RETURN{``$G$ is not $3$-colorable''} \label{alg-9}
     \ENDIF \label{alg-10}
\ENDFOR \label{alg-11}

\medskip

\STATE{$G^{\prime} \leftarrow G$} \label{alg-12}

\WHILE{Reduction Rule~\ref{rule1} or~\ref{rule2} can be applied to $G^{\prime}$, or $G^{\prime}$ contains no induced $K_{4}$} \label{alg-13}
     \STATE{Apply Rule~\ref{rule1} or~\ref{rule2} to $G^{\prime}$} \label{alg-14}
\ENDWHILE \label{alg-15}

\medskip

\IF{$G^{\prime}$ contains an induced $K_{4}$} \label{alg-16}
     \RETURN{``$G$ is not $3$-colorable''} \label{alg-17}
\ENDIF \label{alg-18}

\medskip

\STATE{$N_{0} \leftarrow \{u\in N_{G^{\prime}}(v_{0}) : |N(u) \cap N_{G^{\prime}}(v_{0})|=0\}$; \ \ \ $N_{1} \leftarrow N_{G^{\prime}}(v_{0}) \setminus N_{0}$} \label{alg-19}

\medskip

\IF{$N_{0} \neq \emptyset$} \label{alg-20}
     \STATE{Let $N_{0} = \{w_{1},w_{2},\ldots,w_{p}\}$} \label{alg-21}
\ENDIF \label{alg-22}

\IF{$N_{1} \neq \emptyset$} \label{alg-23}
     \STATE{Let $N_{1} = \{z_{1},z_{2},\ldots,z_{2q}\}$, where $z_{2i-1}z_{2i}\in E^{\prime}$ for every $i=1,2,\ldots,q$} \label{alg-24}
\ENDIF \label{alg-25}

\medskip

\IF{$G^{\prime} - N_{G^{\prime}}[v_{0}] = \emptyset$} \label{alg-26}
     
     \STATE{Compute a $2$-coloring of $N_{G^{\prime}}(v_{0})$ with colors blue and green\\} \label{alg-27}
     \COMMENT{this $2$-coloring, together with the red color of~$v_{0}$ (cf.~line~\ref{alg-3}) constitutes a~$3$-coloring of $G^{\prime}$}
     
     \vspace{0.1cm}
     
\ELSE[$G^{\prime} - N_{G^{\prime}}{[}v_{0}{]} \neq \emptyset$] \label{alg-28}
     
     \vspace{0.1cm}
     
     \STATE{Compute the connected components $C_{1}^{\prime},C_{2}^{\prime},\ldots,C_{k^{\prime}}^{\prime}$ of $G^{\prime}-N_{G^{\prime}}[v_{0}]$} \label{alg-29}
     
     \STATE{Let $C_{i}^{\prime} = \{u_{i}^{\prime}, v_{i}^{\prime}\}$ for every $i=1,2,\ldots,k^{\prime}$} \label{alg-30}
     
     \medskip
     
     \IF{$N_{1} = \emptyset$} \label{alg-31}

          \STATE{Color all vertices of $N_{G^{\prime}}(v_{0})$ green} \label{alg-32}

          \STATE{Color all vertices $u_{i}^{\prime}$ blue and all vertices $v_{i}^{\prime}$ red, for every $i=1,2,\ldots,k^{\prime}$} \label{alg-33}
          
          \vspace{0.1cm}
          
     \ELSE[$N_{1} \neq \emptyset$] \label{alg-34}
          
          \vspace{0.1cm}
          
          \STATE{$\phi \leftarrow \emptyset$; \ Define boolean variables $x_{i}$ and $y_{j}$, where $1\leq i \leq k^{\prime}$, $1\leq j \leq q$} \label{alg-35}
          
          \STATE{\textbf{if} \ $u_{i}^{\prime} z_{2j-1} \in E^{\prime}$ \ \textbf{then} \ $\phi \leftarrow \phi \wedge (x_{i} \vee \overline{y_{j}})$ 
                                                                        \ \textbf{else} \ $\phi \leftarrow \phi \wedge (x_{i} \vee y_{j})$} \label{alg-36}
                                                                      
          \STATE{\textbf{if} \ $v_{i}^{\prime} z_{2j-1} \in E^{\prime}$ \ \textbf{then} \ $\phi \leftarrow \phi \wedge (\overline{x_{i}} \vee \overline{y_{j}})$ 
                                                                        \ \textbf{else} \ $\phi \leftarrow \phi \wedge (\overline{x_{i}} \vee y_{j})$} \label{alg-37}
          
          \medskip
          
          \IF{$\phi$ is not satisfiable} \label{alg-38}
          
               \RETURN{``$G$ is not $3$-colorable''} \label{alg-39}
               
          \ELSE \label{alg-40}
               
               \STATE{Compute a truth assignment $\tau$ that satisfies the formula $\phi$} \label{alg-41a}
               
               \FOR{$i=1$ to $k^{\prime}$} \label{alg-41b}
                    \STATE{\textbf{if} \ $x_{i}=1$ in $\tau$ \ \textbf{then} \ color $u_{i}^{\prime}$ red and $v_{i}^{\prime}$ blue \ \textbf{else} \ color $u_{i}^{\prime}$ blue and $v_{i}^{\prime}$ red} \label{alg-42}
               \ENDFOR \label{alg-43}
               
               \FOR{$j=1$ to $q$} \label{alg-44}
                    \STATE{\textbf{if} \ $y_{i}=1$ in $\tau$ \ \textbf{then} \ color $z_{2j-1}$ green and $z_{2j}$ blue \ \textbf{else} \ color $z_{2j-1}$ blue and $z_{2j}$ green} \label{alg-45}
               \ENDFOR \label{alg-46}
               
          \ENDIF \label{alg-47}
          
     \ENDIF \label{alg-48}
     
\ENDIF \label{alg-49}

\medskip

\STATE{From the computed $3$-coloring of $G^{\prime}$, compute a $3$-coloring $\chi$ of $G$ by iteratively undoing Reduction Rules~\ref{rule1} and~\ref{rule2}}\label{alg-50}

\medskip

\RETURN{the proper $3$-coloring $\chi$ of $G$} \label{alg-51}
\end{algorithmic}
\end{algorithm}

\begin{theorem}
\label{polynomial-L3-connected-diam-2-thm} Let ${G=(V,E)}$ be irreducible
with $n $ vertices and ${diam(G)=2}$. If $G$ has at least one articulation
neighborhood, then Algorithm~\ref{polynomial-L3-connected-diam-2-alg}
decides $3$-coloring on $G$ in time polynomial in~$n$.
\end{theorem}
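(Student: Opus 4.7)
The plan is to show that Algorithm~\ref{polynomial-L3-connected-diam-2-alg} is sound and complete, and runs in polynomial time. Throughout, any 3-coloring of $G$ (if one exists) may be assumed to color $v_0$ red, whence every vertex of $N(v_0)$ is blue or green. The key structural lemma I would establish first is that every connected component $C_i$ of $G-N[v_0]$ must be bipartite whenever $G$ is 3-colorable. Since $N[v_0]$ is an articulation neighborhood, there is at least one other component $C_j$, and by $diam(G)=2$ every pair $(x,z)$ with $x\in C_i$ and $z\in C_j$ shares a common neighbor, which must lie in $N(v_0)$ (as no edge connects $C_i$ and $C_j$ in $G-N[v_0]$). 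A case analysis on the colors appearing in $C_i$ and the forced colors of these shared neighbors yields a color conflict whenever $C_i$ contains an odd cycle and therefore uses all three colors. This justifies line~\ref{alg-9}.

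With bipartiteness available, the next step is to justify the merging in line~\ref{alg-7}: the quotient graph $G^{*}$ obtained by contracting each of the two color classes of $C_i$ to a single vertex (joined by an edge) is 3-colorable iff $G$ is. One direction is immediate, since a 3-coloring of $G^{*}$ pulls back by copying each representative's color across the corresponding independent bipartition class. The other direction again uses the articulation-plus-diameter-2 structure to smooth any 3-coloring of $G$ into one whose restriction to each bipartition class is monochromatic. The subsequent applications of Reduction Rules~\ref{rule1} and~\ref{rule2} (lines~\ref{alg-13}--\ref{alg-15}) and the $K_4$-check (lines~\ref{alg-16}--\ref{alg-18}) preserve 3-colorability by Observation~\ref{obs1}; that observation further implies that $G^{\prime}[N(v_0)]$ has maximum degree at most $1$, so $N_{G^{\prime}}(v_0)$ splits into the isolated vertices $N_0$ and the matching pairs $N_1$ of lines~\ref{alg-19}--\ref{alg-25}, while each component of $G^{\prime}-N_{G^{\prime}}[v_0]$ is still an edge $\{u_i^{\prime},v_i^{\prime}\}$.

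The three conditional branches are then verified directly. If $G^{\prime}-N_{G^{\prime}}[v_0]=\emptyset$, 3-coloring reduces to $2$-coloring the bipartite graph $G^{\prime}[N(v_0)]$. If $N_1=\emptyset$, then $N(v_0)$ is independent in $G^{\prime}$, so coloring every vertex of $N(v_0)$ green and each edge $\{u_i^{\prime},v_i^{\prime}\}$ with (blue, red) is consistent, since every cross-edge pairs a color from $\{$blue, red$\}$ with green. In the remaining case the algorithm builds a 2SAT instance where each $x_i$ encodes the orientation (red, blue) vs.\ (blue, red) on $\{u_i^{\prime},v_i^{\prime}\}$ and each $y_j$ encodes (green, blue) vs.\ (blue, green) on $\{z_{2j-1},z_{2j}\}$. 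I would verify, via a routine case check over the four sign combinations, that the clauses produced in lines~\ref{alg-36}--\ref{alg-37} correctly encode each proper-coloring constraint between a merged edge and a matching pair, so that $\phi$ is satisfiable iff $G^{\prime}$ admits a 3-coloring of the assumed form.

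The main obstacle is the structural lemma of the first paragraph---ruling out 3-colorability whenever some $C_i$ contains an odd cycle---which is where the interplay between $diam(G)=2$, the articulation property of $N[v_0]$, and irreducibility must be fully exploited. All other ingredients (bipartiteness testing and merging, iterated reduction rules, $K_4$-detection, formula construction, and 2SAT solving) are standard polynomial-time primitives, and the iterated reductions are polynomial in $n$ by the preliminary discussion, so polynomial running time of Algorithm~\ref{polynomial-L3-connected-diam-2-alg} follows at once.
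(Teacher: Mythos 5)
Your overall architecture tracks the paper's proof closely: the bipartiteness lemma obtained by playing a second component against $C_i$ via $diam(G)=2$, the class-merging, the reduction rules and $K_4$-check, and the three branches. The genuine gap is that you have misjudged where the second hard step lies. You dismiss the 2SAT encoding of lines~\ref{alg-36}--\ref{alg-37} as ``a routine case check over the four sign combinations,'' but those lines are an if--then--else on whether $u_i'z_{2j-1}\in E'$, and the else-branch \emph{unconditionally} emits a clause asserting a constraint on the edge $u_i'z_{2j}$. For this to be a faithful encoding one must first prove that every $u_i'$ (and every $v_i'$) is adjacent to \emph{exactly one} of $z_{2j-1},z_{2j}$, for every $i$ and $j$. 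If it were adjacent to neither, the else-branch would add a spurious clause that can make $\phi$ unsatisfiable although $G'$ is $3$-colorable; if it were adjacent to both, a genuine constraint would be dropped. The paper devotes a full paragraph to exactly this: ``at most one'' holds because otherwise $\{u_i',v_0,z_{2j-1},z_{2j}\}$ would induce a diamond, contradicting irreducibility of $G'$; ``at least one'' holds because $diam(G')=2$ forces common neighbors of $u_i'$ with each of $z_{2j-1},z_{2j}$, and the structure of $G'$ (each component of $G'-N_{G'}[v_0]$ is a single edge, and $z_{2j}$ is the unique neighbor of $z_{2j-1}$ inside $N_{G'}(v_0)$) forces both common neighbors to be $v_i'$, pushing the contradiction onto $v_i'$. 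Without this lemma, verifying the four clause shapes does not yield ``$\phi$ satisfiable iff $G'$ is $3$-colorable,'' so your third paragraph does not close.

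A secondary omission: both your structural lemma and the merging step silently assume that every component of $G-N[v_0]$ is non-trivial, i.e.\ contains an edge. This is needed so that the comparison component $C_j$ is guaranteed to contain a non-red vertex (otherwise the diameter-2 argument gives you nothing to compare against, and a component consisting of a single red vertex defeats the ``all non-red vertices share one color'' conclusion), and so that line~\ref{alg-7} really produces two non-empty color classes per component. This is precisely where irreducibility enters at the first stage: a trivial component would be a vertex $u$ with $N(u)\subseteq N(v_0)$, i.e.\ a sibling of $v_0$, which Reduction Rule~\ref{rule2} would have removed. You write that ``irreducibility must be fully exploited'' in the structural lemma but never say how; this, together with the diamond-freeness used in the exactly-one-neighbor lemma above, is how.
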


\begin{proof}
Let~$v_{0}\in V$ such that~${G-N[v_{0}]}$ is disconnected and let $%
C_{1},C_{2},\ldots ,C_{k}$ be the connected components of $G-N[v_{0}]$
(cf.~line~\ref{alg-3} of Algorithm~\ref{polynomial-L3-connected-diam-2-alg}%
). In order to compute a proper $3$-coloring of~$G$, the algorithm assigns
without loss of generality the color red to vertex~$v_{0}$ (cf.~line~\ref%
{alg-4}). Suppose that at least one component~$C_{i}$, $1\leq i\leq k$, is
trivial, i.e.~$C_{i}$ is an isolated vertex~$u$. Then, since~${u\in
V\setminus N[v_{0}]}$ and~$u$ has no adjacent vertices in~$V\setminus
N[v_{0}]$, it follows that~$N(u)\subseteq N(v_{0})$, i.e.~$u$ and~$v_{0}$
are siblings in~$G$. This is a contradiction, since~$G$ is an irreducible
graph. Therefore, every connected component~$C_{i}$ of~${G-N[v_{0}]}$ is
non-trivial, i.e.~it contains at least one edge. Thus, in any proper~$3$%
-coloring of~$G$ (if such exists), there exists at least one vertex of every
connected component~$C_{i}$ of~${G-N[v_{0}]}$ that is colored not red,
i.e.~with a different color than $v_{0}$.

Suppose now that $G$ is $3$-colorable, and let $\chi $ be a proper $3$%
-coloring of $G$. Assume without loss of generality that $\chi $ uses the
colors red, blue, and green (recall that $v_{0}$ is already colored red in
line~\ref{alg-4} of the algorithm). Let $C_{i},C_{j}$ be an arbitrary pair
of connected components of $G-N[v_{0}]$; such a pair always exists, as $%
G-N[v_{0}]$ is assumed to be disconnected. Let $u$ and $v$ be two arbitrary
vertices of~$C_{i}$ and of~$C_{j}$, respectively, such that both $u$ and $v$
are not colored red in $\chi $; such vertices~$u$ and~$v$ always exist, as
we observed above. Then, since $diam(G)=2$, there exists at least one common
neighbor $a$ of $u$ and $v$, where $a\in N(v_{0})$. That is, $a\in
N(v_{0})\cap N(u)\cap N(v)$. Therefore, since $v_{0}$ is colored red in $%
\chi $ and $u,v$ are not colored red in $\chi $, it follows that both $u$
and $v$ are colored by the same color in $\chi $. Assume without loss of
generality that both $u$ and $v$ are colored blue in~$\chi$. Thus, since $u$
and $v$ have been chosen arbitrarily under the single assumption that they
are not colored red in~$\chi $, it follows that for every vertex $u$ of $%
G-N[v_{0}]$, $u$ is either colored red or blue in~$\chi $. Therefore $%
G-N[v_{0}]$ is bipartite, i.e.~each of the components $C_{1},C_{2},\ldots
,C_{k}$ is bipartite. Thus, Algorithm~\ref%
{polynomial-L3-connected-diam-2-alg} correctly returns in line~\ref{alg-9}
that $G$ is not $3$-colorable if at least one component $C_{i}$ of~${%
G-N[v_{0}]}$ is not bipartite.

Since all connected components $C_{1},C_{2},\ldots ,C_{k}$ of $G-N[v_{0}]$
are bipartite, Algorithm~\ref{polynomial-L3-connected-diam-2-alg} merges for
every $i\in \{1,2,\ldots ,k\}$ the vertices of each of the two color-classes
of $C_{i}$ into one vertex (cf.~lines~\ref{alg-5}-\ref{alg-7}). That is, for
every $i\in \{1,2,\ldots ,k\}$, the component $C_{i}$ is replaced by exactly
two adjacent vertices~$\{u_{i},v_{i}\}$, where the one color class of $C_{i}$
is merged into $u_{i}$ and the other one is merged into~$v_{i}$ (cf.~line~%
\ref{alg-7}). Then Algorithm~\ref{polynomial-L3-connected-diam-2-alg}
iteratively applies the Reduction Rules~\ref{rule1} and~\ref{rule2}, until
none of them can be further applied, or until it detects an induced $K_{4}$.
Denote the resulting graph by $G^{\prime }=(V^{\prime },E^{\prime })$,
cf.~lines~\ref{alg-12}-\ref{alg-14} of the algorithm. If $G^{\prime }$
contains an induced $K_{4}$, then $G^{\prime }$ is clearly not $3$%
-colorable, and thus the initial graph $G$ is also not $3$-colorable.
Therefore, in this case, Algorithm~\ref{polynomial-L3-connected-diam-2-alg}
correctly returns in line~\ref{alg-17} that $G$ is not $3$-colorable.
Otherwise, if $G^{\prime }$ does not contain any induced $K_{4}$, then $%
G^{\prime }$ is irreducible by Definition~\ref{irreducible-def}. Note that $%
diam(G^{\prime })=2$ and that $G^{\prime }$ is $3$-colorable, since $%
diam(G)=2$ and $G$ is assumed to be~$3$-colorable as well. Denote by $\chi
^{\prime }$ the $3$-coloring of $G^{\prime }$ that is induced by the $3$%
-coloring $\chi $ of $G$. Then, similarly to $\chi $, every vertex of $%
G^{\prime }-N_{G^{\prime }}[v_{0}]$ is either colored red or blue in $\chi
^{\prime }$.

Note by Observation~\ref{obs1} that $N_{G^{\prime }}(v_{0})$ induces in~$%
G^{\prime }$ a graph with maximum degree~$1$, since $G^{\prime }$ is
irreducible. That is, the vertices of $N_{G^{\prime }}(v_{0})$ can be
partitioned into the sets $N_{0}$ and $N_{1}$ that include the vertices of $%
G^{\prime }[N_{G^{\prime }}(v_{0})]$ with degree $0$ and with degree $1$,
respectively (cf.~line~\ref{alg-19} of Algorithm~\ref%
{polynomial-L3-connected-diam-2-alg}). That is, $N_{0}$ contains the
isolated vertices of $G^{\prime }[N_{G^{\prime }}(v_{0})]$ and $N_{1}$
contains the vertices of $N_{G^{\prime }}(v_{0})$ that are paired in edges
in $G^{\prime }[N_{G^{\prime }}(v_{0})]$. Let $N_{0}=\{w_{1},w_{2},\ldots
,w_{p}\}$ and $N_{1}=\{z_{1},z_{2},\ldots ,z_{2q}\}$, where $%
z_{2i-1}z_{2i}\in E$ for every~$i=1,2,\ldots ,q$ (cf.~lines~\ref{alg-20}-\ref%
{alg-24} of the algorithm). Furthermore, note that $N_{G^{\prime }}(v_{0})$
is bipartite.

After the execution of the latter applications of the Reduction Rules~\ref%
{rule1} and~\ref{rule2} to $G$ (cf.~lines~\ref{alg-13}-\ref{alg-14} of the
algorithm), some of the vertices of $G-N_{G}[v_{0}]$ may not appear any more
in $G^{\prime }-N_{G^{\prime }}[v_{0}]$. If~$G^{\prime }-N_{G^{\prime
}}[v_{0}]=\emptyset$, then $v_{0}$ is adjacent to all other vertices of $%
G^{\prime }$. Therefore, since $N_{G^{\prime }}(v_{0})$ is bipartite,
Algorithm~\ref{polynomial-L3-connected-diam-2-alg} correctly computes a $2$%
-coloring of $N_{G^{\prime }}(v_{0})$ with colors blue and green (cf.~line~%
\ref{alg-27}). This $2$-coloring of $N_{G^{\prime }}(v_{0})$, together with
the red color of $v_{0}$, constitutes a $3$-coloring of $G^{\prime }$.
Suppose now that $G^{\prime }-N_{G^{\prime }}[v_{0}]\neq \emptyset $. Denote
the connected components of $G^{\prime }-N_{G^{\prime }}[v_{0}]$ by $%
C_{1}^{\prime },C_{2}^{\prime },\ldots ,C_{k^{\prime }}^{\prime }$, where ${%
1\leq k^{\prime }\leq k}$ (cf.~line~\ref{alg-29}). Due to the above merging
operations to the connected components~$C_{1},C_{2},\ldots ,C_{k}$ of $%
G-N[v_{0}]$, every connected component $C_{i}^{\prime }$ of $G^{\prime
}-N_{G^{\prime }}[v_{0}]$ has exactly two vertices, where $1\leq i\leq
k^{\prime }$. Denote the vertices of $C_{i}^{\prime }$ by $\{u_{i}^{\prime
},v_{i}^{\prime }\}$, where $1\leq i\leq k^{\prime }$ (cf.~line~\ref{alg-30}%
).

Suppose that $N_{1}=\emptyset $, i.e.~if $q=0$, and thus $N_{G^{\prime
}}(v_{0})=N_{0}$. Then we can construct a proper $3$-coloring of $G^{\prime
} $ as follows. We first color all vertices of $N_{G^{\prime }}(v_{0})$
green. Then, for every $i=1,2,\ldots ,k^{\prime }$, we color $u_{i}$ blue
and $v_{i} $ red (cf.~lines~\ref{alg-31}-\ref{alg-33}). Note that this
coloring is a proper $3$-coloring of $G^{\prime }$.

Suppose now that $N_{1}\neq \emptyset $, i.e.~$q\neq 0$ (cf.~line~\ref%
{alg-34}). We prove that $|N_{G^{\prime }}(u_{i})\cap
\{z_{2j-1},z_{2j}\}|=|N_{G^{\prime }}(v_{i})\cap \{z_{2j-1},z_{2j}\}|=1$ for
every $i\in \{1,2,\ldots ,k^{\prime }\}$ and every $j\in \{1,2,\ldots ,q\}$.
Consider a vertex~$u_{i}\in V^{\prime }\setminus N_{G^{\prime }}[v_{0}]$,
where $1\leq i\leq k^{\prime }$. Suppose that there exists an index $j\in
\{1,2,\ldots ,q\}$ such that both $z_{2j-1},z_{2j}\in N_{G^{\prime }}(u_{i})$%
. Then the vertices $\{u_{i},v_{0},z_{2j-1},z_{2j}\}$ induce a diamond graph
in $G^{\prime }$ (with $u_{i}v_{0}\notin E^{\prime }$). This is a
contradiction, since $G^{\prime }$ is irreducible. Therefore ${|N_{G^{\prime
}}(u_{i})\cap \{z_{2j-1},z_{2j}\}|\leq 1}$ for every $j\in \{1,2,\ldots ,q\}$%
. Similarly we can prove that also $|N_{G^{\prime }}(v_{i})\cap
\{z_{2j-1},z_{2j}\}|\leq 1$ for every $j\in \{1,2,\ldots ,q\}$. Suppose now
that there exists an index $j\in \{1,2,\ldots ,q\}$ such that both~$%
z_{2j-1},z_{2j}\notin N_{G^{\prime }}(u_{i})$. Then, since $diam(G^{\prime
})=2$, there must exist paths $(u_{i},a,z_{2j-1})$ and $(u_{i},a^{\prime
},z_{2j})$ in $G^{\prime }$, each of length two. Recall that $v_{i}$ is the
only neighbor of $u_{i}$ in $G^{\prime }-N_{G^{\prime }}[v_{0}]$.
Furthermore recall that $z_{2j-1}$ is the only neighbor of $z_{2j}$ and $%
z_{2j}$ is the only neighbor of $z_{2j-1}$ in $N_{G^{\prime }}(v_{0})$.
Therefore $a=a^{\prime }=v_{i}$. That is, $z_{2j-1},z_{2j}\in N_{G^{\prime
}}(v_{i})$, which is a contradiction as we proved above. Therefore $%
|N_{G^{\prime }}(u_{i})\cap \{z_{2j-1},z_{2j}\}|\geq 1$. Similarly we can
prove that also $|N_{G^{\prime }}(v_{i})\cap \{z_{2j-1},z_{2j}\}|\geq 1$.
Summarizing, $|N_{G^{\prime }}(u_{i})\cap \{z_{2j-1},z_{2j}\}|=|N_{G^{\prime
}}(v_{i})\cap \{z_{2j-1},z_{2j}\}|=1$ for every $i\in \{1,2,\ldots
,k^{\prime }\}$ and every $j\in \{1,2,\ldots ,q\}$.

We construct a 2SAT formula $\phi $, i.e.~a boolean formula $\phi $ in
conjunctive normal form with two literals in every clause (2-CNF), such that 
$\phi $ is satisfiable if and only if $G^{\prime }$ is $3$-colorable. We
first define one boolean variable $x_{i}$ for every connected component $%
\{u_{i}^{\prime },v_{i}^{\prime }\}$ of $G^{\prime }-N_{G^{\prime }}[v_{0}]$%
, where~$1\leq i\leq k^{\prime }$ (cf.~line~\ref{alg-35}). If $x_{i}=1$,
then $u_{i}^{\prime }$ is colored red and $v_{i}^{\prime }$ is colored blue.
Otherwise, if~$x_{i}=0$, then $u_{i}^{\prime }$ is colored blue and $%
v_{i}^{\prime }$ is colored red. Furthermore, we define one boolean variable~%
$y_{j}$ for every pair $\{z_{2j-1},z_{2j}\}$ of $N_{1}$, where $1\leq j\leq
q $ (cf.~line~\ref{alg-35}). If $y_{j}=1$, then $z_{2j-1}$ is colored green
and $z_{2j}$ is colored blue. Otherwise, if $y_{j}=0$, then $z_{2j-1}$ is
colored blue and $z_{2j}$ is colored green.

Now, for every $j\in \{1,2,\ldots ,q\}$ and every vertex $u_{i}^{\prime }\in
V^{\prime }\setminus N_{G^{\prime }}[v_{0}]$, where $i\in \{1,2,\ldots
,k^{\prime }\}$, we add one clause to formula $\phi $ as follows. If $%
u_{i}^{\prime }z_{2j-1}\in E^{\prime }$, we add the clause $(x_{i}\vee 
\overline{y_{j}})$. Otherwise, if~${u_{i}^{\prime }z_{2j}\in E^{\prime}}$,
we add the clause $(x_{i}\vee y_{j})$, cf.~line~\ref{alg-36} of Algorithm~%
\ref{polynomial-L3-connected-diam-2-alg}. Furthermore, for every~${j\in
\{1,2,\ldots ,q\}}$ and every vertex $v_{i}^{\prime }\in V^{\prime
}\setminus N_{G^{\prime }}[v_{0}]$, where $i\in \{1,2,\ldots ,k^{\prime }\}$%
, we add one more clause to formula $\phi $ as follows. If $v_{i}^{\prime
}z_{2j-1}\in E^{\prime }$, we add the clause $(\overline{x_{i}}\vee 
\overline{y_{j}})$. Otherwise, if $v_{i}^{\prime }z_{2j}\in E^{\prime }$, we
add the clause $(\overline{x_{i}}\vee y_{j})$, cf.~line~\ref{alg-37} of
Algorithm~\ref{polynomial-L3-connected-diam-2-alg}.

By the above correspondence between truth values of the variables $%
x_{i},y_{j}$ and the colors of the vertices $u_{i}^{\prime },v_{i}^{\prime
},z_{2j-1},x_{2j}$, it is now easy to check that an arbitrary clause of the
formula $\phi $ is not satisfiable if and only if the corresponding edge
between one vertex of $\{u_{i}^{\prime },v_{i}^{\prime }\}$ and one vertex
of~$\{z_{2j-1},z_{2j}\}$ is monochromatic, i.e.~both its endpoints are
colored blue. Therefore, $\phi $ is satisfiable if and only if $G^{\prime }$
is $3$-colorable. Thus, if $\phi $ is not satisfiable, Algorithm~\ref%
{polynomial-L3-connected-diam-2-alg} correctly returns in line~\ref{alg-39}
that $G$ is not $3$-colorable. Otherwise, if $\phi $ is satisfiable,
Algorithm~\ref{polynomial-L3-connected-diam-2-alg} correctly computes a~$3$%
-coloring of the vertices of $G^{\prime }$ from a satisfying truth
assignment of $\phi $ (cf.~lines~\ref{alg-41a}-\ref{alg-45} of the
algorithm).

Summarizing, if the input graph $G$ is not $3$-colorable, Algorithm~\ref%
{polynomial-L3-connected-diam-2-alg} correctly returns a negative answer
during its execution (cf.~lines~\ref{alg-9},~\ref{alg-17}, and~\ref{alg-39}%
). Otherwise, Algorithm~\ref{polynomial-L3-connected-diam-2-alg} computes a
proper $3$-coloring $\chi ^{\prime }$ of the irreducible graph~$G^{\prime}$.
Given such a coloring of $G^{\prime }$, it is now straightforward to extend
it to a proper $3$-coloring of the input graph $G$ (cf.~line~\ref{alg-50} of
the algorithm). Namely, we have to iteratively undo all applications of the
Reduction Rules~\ref{rule1} and~\ref{rule2} that have previously transformed~%
$G$ into $G^{\prime }$ during the execution of lines~\ref{alg-13}-\ref%
{alg-14}, respectively. For every application of the Reduction Rule~\ref%
{rule1}, we color each of the original vertices with the color of the merged
vertex in $\chi ^{\prime }$. For every application of the Reduction Rule~\ref%
{rule2}, we color the removed vertex with the color of its sibling in $%
\chi^{\prime}$. Therefore, Algorithm~\ref{polynomial-L3-connected-diam-2-alg}
correctly returns in line~\ref{alg-51} the proper $3$-coloring of $G$ that
it has computed in line~\ref{alg-50}.

Regarding the time complexity of the algorithm, first recall that the size
of the formula $\phi $ is linear to the size of $G^{\prime }$ (and thus also
of $G$) and that the 2SAT problem can be solved in polynomial time.
Furthermore, recall that the Reduction Rules~\ref{rule1} and~\ref{rule2} can
be applied (as well as they can be undone) in polynomial time; the same
holds for the execution of all other lines of Algorithm~\ref%
{polynomial-L3-connected-diam-2-alg}. Therefore, Algorithm~\ref%
{polynomial-L3-connected-diam-2-alg} runs in polynomial time. This completes
the proof of the theorem.\qed
\end{proof}

A question that arises now naturally by Theorem~\ref%
{polynomial-L3-connected-diam-2-thm} is whether there exist any irreducible $%
3$-colorable graph~${G=(V,E)}$ with ${diam(G)=2}$, which has no articulation
neighborhood at all. A negative answer to this question would imply that we
can decide the $3$-coloring problem on \emph{any} graph with diameter~$2$ in
polynomial time using Algorithm~\ref{polynomial-L3-connected-diam-2-alg}.
However, the answer to that question is positive: for every $n\geq 1$, the
graph $G_{n}=(V_{n},E_{n})$ that has been presented in Theorem~\ref%
{example-irreducible-sqrt-n-thm} is irreducible, $3$-colorable, has diameter~%
$2$, and $G_{n}-N[v]$ is connected for every $v\in V_{n}$, i.e.~$G_{n}$ has
no articulation neighborhood. Therefore, Algorithm~\ref%
{polynomial-L3-connected-diam-2-alg} cannot be used in a trivial way to
decide in polynomial time the $3$-coloring problem for an arbitrary graph of
diameter~$2$. We leave the tractability of the $3$-coloring problem of
arbitrary diameter-$2$ graphs as an open~problem.

\section{Almost tight results for graphs with diameter~$3$\label%
{diameter-three-sec}}

In this section we present our results on graphs with diameter~$3$. In
particular, we first provide in Section~\ref{n-over-delta-diam-3-subsec} our
algorithm for $3$-coloring on arbitrary graphs with diameter~$3$ that has
running time~${2^{O(\min \{\delta \Delta ,\ \frac{n}{\delta }\log \delta \})}%
}$, where $n$ is the number of vertices and $\delta $ (resp.~$\Delta $) is
the minimum (resp.~maximum) degree of the input graph. Then we prove in
Section~\ref{diam-3-NP-c-subsec} that $3$-coloring is NP-complete on
irreducible and triangle-free graphs with diameter~$3$ and radius~$2$ by
providing a reduction from $3$SAT. Finally, we provide in Section~\ref%
{diam-3-amplification-subsec} our three different amplification techniques
that extend our hardness results of Section~\ref{diam-3-NP-c-subsec}. In
particular, we provide in Theorems~\ref%
{NP-complete-3-col-diam-3-large-min-degree-thm} and~\ref%
{NP-complete-3-col-diam-3-small-min-degree-thm} our NP-completeness
amplifications, and in Theorems~\ref{ETH-large-min-degree-thm},~\ref%
{ETH-medium-min-degree-thm}, and~\ref{ETH-small-min-degree-thm} our lower
bounds for the time complexity of $3$-coloring, assuming ETH.

\subsection{An $2^{O(\min \{\protect\delta \Delta ,\ \frac{n}{\protect\delta 
}\log \protect\delta \})}$-time algorithm for any graph with diameter~$3$%
\label{n-over-delta-diam-3-subsec}}

In the next theorem we use the DS-approach of Lemma~\ref%
{3-coloring-dominating-set-lem} to provide an improved $3$-coloring
algorithm for the case of graphs with diameter~$3$. The time complexity of
this algorithm is parameterized on the minimum degree $\delta $ of the given
graph $G$, as well as on the fraction $\frac{n}{\delta}$.

\begin{theorem}
\label{min-delta-Delta-n-over-delta-alg-diam-3-thm}Let ${G=(V,E)}$ be an
irreducible graph with $n$ vertices and ${diam(G)=3}$. Let~$\delta $ and~${%
\Delta }$ be the minimum and the maximum degree of $G$, respectively. Then,
the $3$-coloring problem can be decided in $2^{O(\min \{\delta \Delta ,\ 
\frac{n}{\delta }\log \delta \})}$ time on $G$.
\end{theorem}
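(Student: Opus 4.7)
The plan is to mirror the two-track strategy of Theorem~\ref{min-delta-n-over-delta-alg-diam-2-thm}: design two independent $3$-coloring algorithms for $G$ with running times $2^{O(\frac{n}{\delta}\log\delta)}$ and $2^{O(\delta\Delta)}$, respectively, run them in parallel, and stop as soon as one of them terminates. This gives an overall running time of $2^{O(\min\{\delta\Delta,\ \frac{n}{\delta}\log\delta\})}$, as required.

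For the first algorithm I would simply reuse the dominating-set template of Lemma~\ref{3-coloring-dominating-set-lem}. The standard greedy/probabilistic construction (already invoked in the proof of Theorem~\ref{min-delta-n-over-delta-alg-diam-2-thm}) produces in polynomial time a dominating set $D$ of any $n$-vertex graph with minimum degree $\delta$ satisfying $|D| \le n\,\frac{1+\ln(\delta+1)}{\delta+1}$. Applying Lemma~\ref{3-coloring-dominating-set-lem} to $D$ decides $3$-coloring in time $O^{\ast}(3^{|D|}) = 2^{O(\frac{n}{\delta}\log\delta)}$. Irreducibility of $G$ forces $\delta\ge 2$ by Observation~\ref{obs1} (otherwise $G$ has at most two vertices and the problem is trivial), so $\log\delta$ is meaningful.

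For the second algorithm I would exploit the diameter-$3$ hypothesis to produce a \emph{small} dominating set centered at a minimum-degree vertex. Fix $u\in V$ with $\deg(u)=\delta$, and let $D_u := N[u]\cup N(N(u))$, i.e.\ the set of all vertices at distance at most $2$ from $u$. Because $diam(G)=3$, any vertex $w\notin D_u$ satisfies $d(u,w)=3$, so a shortest $u$-$w$ path has the form $(u,a,b,w)$ with $b \in N(N(u)) \subseteq D_u$; thus $w$ has a neighbor in $D_u$, and $D_u$ is indeed a dominating set of $G$. A direct count gives $|D_u| \le 1 + \delta + \delta(\Delta-1) = 1 + \delta\Delta$, since each of the $\delta$ vertices in $N(u)$ contributes at most $\Delta-1$ new vertices to $N(N(u))\setminus N[u]$. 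Feeding $D_u$ into Lemma~\ref{3-coloring-dominating-set-lem} therefore decides $3$-coloring in time $O^{\ast}(3^{|D_u|}) = 2^{O(\delta\Delta)}$.

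The only genuinely new step, compared with the diameter-$2$ case, is the verification that $N[u]\cup N(N(u))$ dominates $G$; this is exactly where the $diam(G)=3$ assumption enters, replacing the simpler observation used for diameter $2$ that $N(u)$ itself is already dominating. Everything else — Lemma~\ref{3-coloring-dominating-set-lem}, the polynomial-time construction of the globally small dominating set $D$, and the bound $\delta\ge 2$ from Observation~\ref{obs1} — is off-the-shelf from earlier in the paper, and combining the two running times via parallel execution is immediate. I do not expect a serious technical obstacle; the whole argument is essentially a careful re-instantiation of Lemma~\ref{3-coloring-dominating-set-lem} with two different dominating sets tailored to the two regimes of the $\min$.
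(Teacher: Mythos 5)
Your proposal is correct and follows essentially the same route as the paper: the first track is the identical dominating-set bound $|D|\le n\frac{1+\ln(\delta+1)}{\delta+1}$ fed into Lemma~\ref{3-coloring-dominating-set-lem}, and your second track's set $N[u]\cup N(N(u))$ of size at most $1+\delta\Delta$ is, up to the inessential inclusion of $N(u)$, the paper's dominating set $A\cup\{u\}$ where $A$ is the set of vertices at distance exactly $2$ from a minimum-degree vertex $u$. The domination argument via the $diam(G)=3$ hypothesis and the counting $|A|\le\delta\Delta$ match the paper's proof.
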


\begin{proof}
First recall that, in an arbitrary graph $G$ with $n$ vertices and minimum
degree $\delta $, we can construct in polynomial time a dominating set $D$
with cardinality $|D|\leq n\frac{1+\ln (\delta +1)}{\delta +1}$~\cite%
{AlonProbabilistic08}. Therefore, similarly to the proof of Theorem~\ref%
{min-delta-n-over-delta-alg-diam-2-thm}, we can use the DS-approach of Lemma~%
\ref{3-coloring-dominating-set-lem} to obtain an algorithm that decides $3$%
-coloring on $G$ in $2^{O(\frac{n}{\delta }\log \delta )}$ time.

The DS-approach of Lemma~\ref{3-coloring-dominating-set-lem} applies to any
graph $G$. However, since $G$ has diameter~$3$ by assumption, we can design
a second algorithm for $3$-coloring of $G$ as follows. Consider a vertex ${%
u\in V}$ with minimum degree, i.e.~${\deg (u)=\delta}$. Since ${diam(G)=3}$,
we can partition the vertices of ${V\setminus N[u]}$ into two sets $A$ and $%
B $, such that ${A=\{v\in V\setminus N[u] : d(u,v)=2\}}$ and~${B=\{v\in
V\setminus N[u] : d(u,v)=3\}}$. Note that every vertex ${v\in A}$ is
adjacent to at least one vertex of $N(u)$. Therefore, since $|N(u)|=\delta $
and the maximum degree in $G$ is $\Delta $, it follows that~${|A|\leq \delta
\cdot \Delta}$. Furthermore, the set $A\cup \{u\}$ is a dominating set of $G$
with cardinality at most~${\delta \Delta +1}$. Thus we can decide $3$%
-coloring on $G $ by considering in worst case all possible $3$-colorings of~%
${A\cup \{u\}} $ in~$O^{\ast }(3^{\delta \Delta +1})=2^{O(\delta \Delta )}$
time by using the DS-approach of Lemma~\ref{3-coloring-dominating-set-lem}.

Summarizing, we can combine these two $3$-coloring algorithms for $G$,
obtaining an algorithm with time complexity $2^{O(\min \{\delta \Delta ,\ 
\frac{n}{\delta }\log \delta \})}$.\qed
\end{proof}

\medskip

To the best of our knowledge, the algorithm of Theorem~\ref%
{min-delta-Delta-n-over-delta-alg-diam-3-thm} is the first subexponential
algorithm for graphs with diameter~$3$, whenever $\delta =\omega (1)$, as
well as whenever $\delta =O(1)$ and $\Delta =o(n)$. As we will later prove
in Section~\ref{diam-3-amplification-subsec}, the running time provided in
Theorem~\ref{min-delta-Delta-n-over-delta-alg-diam-3-thm} is asymptotically
almost tight whenever $\delta =\Theta (n^{\varepsilon })$, for any $%
\varepsilon \in \lbrack \frac{1}{2},1)$.

Note now that for any graph $G$ with $n$ vertices and $diam(G)=3$, the
maximum degree $\Delta $ of~$G$ is~${\Delta =\Omega (n^{\frac{1}{3}})}$.
Indeed, suppose otherwise that $\Delta =o(n^{\frac{1}{3}})$, and let $u$ be
any vertex of~$G$. Then, there are at most $\Delta $ vertices in $G$ at
distance $1$ from $u$, at most $\Delta ^{2}$ vertices at distance $2$ from $%
u $, and at most~$\Delta ^{3}$ vertices at distance $3$ from $u$. That is, $%
G $ contains at most~${1+\Delta +\Delta ^{2}+\Delta ^{3}=o(n)}$ vertices,
since we assumed that ${\Delta =o(n^{\frac{1}{3}})}$, which is a
contradiction. Therefore~${\Delta =\Omega (n^{\frac{1}{3}})}$. Furthermore
note that, whenever ${\delta =\Omega (n^{\frac{1}{3}}\sqrt{\log n})}$ in a
graph with $n$ vertices and diameter~$3$, we have~${\delta \Delta =\Omega (%
\frac{n}{\delta }\log \delta )}$. Indeed, since ${\Delta =\Omega (n^{\frac{1%
}{3}})}$ as we proved above, it follows that in this case ${\delta
^{2}\Delta =\Omega (n\log n)=\Omega (n\log \delta )}$, since ${\delta <n}$,
and thus ${\delta \Delta =\Omega (\frac{n}{\delta }\log \delta )}$.
Therefore, if the minimum degree of $G$ is ${\delta =\Omega (n^{\frac{1}{3}}%
\sqrt{\log n})}$, the running time of the algorithm of Theorem~\ref%
{min-delta-Delta-n-over-delta-alg-diam-3-thm} becomes~${2^{O(\frac{n}{\delta 
}\log \delta )}=2^{O(n^{\frac{2}{3}}\sqrt{\log n})}}$.

\subsection{The $3$-coloring problem is NP-complete on graphs with diameter~$%
3$ and radius~$2$\label{diam-3-NP-c-subsec}}

In this section we provide a reduction from the $3$SAT problem to the $3$%
-coloring problem of triangle-free graphs with diameter~$3$ and radius~$2$.
Given a boolean formula $\phi $ in conjunctive normal form with three
literals in each clause (3-CNF), $\phi $ is \emph{satisfiable} if there is a
truth assignment of $\phi $, such that every clause contains at least one
true literal. The problem of deciding whether a given 3-CNF formula is
satisfiable, i.e.~the \emph{3SAT} problem, is one of the most known
NP-complete problems~\cite{GareyJohnson79}. Let $\phi $ be a 3-CNF formula
with $n$ variables $x_{1},x_{2},\ldots ,x_{n}$ and $m$ clauses $\alpha
_{1},\alpha _{2},\ldots ,\alpha _{m}$. We can assume in the following
without loss of generality that each clause has three distinct literals. We
now construct an irreducible and triangle-free graph $H_{\phi }=(V_{\phi
},E_{\phi })$ with diameter~$3$ and radius~$2$, such that $\phi $ is
satisfiable if and only if $H_{\phi }$ is $3$-colorable. Before we construct 
$H_{\phi }$, we first construct an auxiliary graph $G_{n,m}$ that depends
only on the number $n$ of the variables and the number $m$ of the clauses in 
$\phi $, rather than on $\phi $ itself.

We construct the graph ${G_{n,m}=(V_{n,m},E_{n,m})}$ as follows. Let $v_{0}$
be a vertex with~$8m$ neighbors $v_{1}, v_{2}, \ldots , v_{8m}$, which
induce an independent set. Consider also the sets ${U=\{u_{i,j}:1\leq i\leq
n+5m,1\leq j\leq 8m\}}$ and ${W=\{w_{i,j}:1\leq i\leq n+5m,1\leq j\leq 8m\}}$
of vertices. Each of these sets has $({n+5m)}8m$ vertices. The set $V_{n,m}$
of vertices of $G_{n,m}$ is defined as ${V_{n,m}=U\cup W\cup
\{v_{0},v_{1},v_{2},\ldots ,v_{8m}\}}$. That is,~${|V_{n,m}|=2\cdot ({n+5m}%
)8m+8m+1}$, and thus~$|V_{n,m}|=\Theta (m^{2})$, since $m=\Omega (n)$.

The set $E_{n,m}$ of the edges of $G_{n,m}$ is defined as follows. Define
first for every $j\in \{1,2,\ldots ,8m\}$ the subsets $U_{j}=%
\{u_{1,j},u_{2,j},\ldots ,u_{{n+5m},j}\}$ and $W_{j}=\{w_{1,j},w_{2,j},%
\ldots ,w_{{n+5m},j}\}$ of $U$ and $W$, respectively. Then define $%
N(v_{j})=\{v_{0}\}\cup U_{j}\cup W_{j}$ for every $j\in \{1,2,\ldots ,8m\}$,
where $N(v_{j})$ denotes the set of neighbors of vertex $v_{j}$ in $G_{n,m}$%
. For simplicity of the presentation, we arrange the vertices of $U\cup W$
on a rectangle matrix of size $2({n+5m)}\times 8m$, cf.~Figure~\ref%
{G-n-m-reduction-fig}(a). In this matrix arrangement, the $(i,j)$th element
is vertex $u_{i,j}$ if $i\leq {n+5m}$, and vertex $w_{i-{n-5m},j}$ if $i\geq 
{n+5m}+1$. In particular, for every $j\in \{1,2,\ldots ,8m\}$, the $j$th
column of this matrix contains exactly the vertices of $U_{j}\cup W_{j}$,
cf.~Figure~\ref{G-n-m-reduction-fig}(a). Note that, for every $j\in
\{1,2,\ldots ,8m\}$, vertex $v_{j}$ is adjacent to all vertices of the $j$th
column of this matrix. Denote now by $\ell _{i}=\{u_{i,1},u_{i,2},\ldots
,u_{i,8m}\}$ (resp.~$\ell _{i}^{\prime }=\{w_{i,1},w_{i,2},\ldots
,w_{i,8m}\} $) the $i$th (resp.~the $({n+5m}+i)$th) row of this matrix, cf.
Figure~\ref{G-n-m-reduction-fig}(a). For every $i\in \{1,2,\ldots ,{n+5m}\}$%
, the vertices of $\ell _{i}$ and of $\ell _{i}^{\prime }$ induce two
independent sets in $G_{n,m}$. We then add between the vertices of $\ell
_{i} $ and the vertices of $\ell _{i}^{\prime }$ all possible edges, except
those of $\{u_{i,j}w_{i,j}:1\leq j\leq 8m\}$. That is, we add all possible $%
(8m)^{2}-8m $ edges between the vertices of $\ell _{i}$ and of $\ell
_{i}^{\prime }$, such that they induce a complete bipartite graph without a
perfect matching between the vertices of $\ell _{i}$ and of $\ell
_{i}^{\prime }$, cf.~Figure~\ref{G-n-m-reduction-fig}(b). Note by the
construction of $G_{n,m}$ that both~$U$ and~$W$ are independent sets in $%
G_{n,m}$. Furthermore note that the minimum degree in~$G_{n,m}$ is $\delta
(G_{n,m})=\Theta (m)$ and the maximum degree is $\Delta (G_{n,m})=\Theta
(n+m)$. Thus, since $m=\Omega (n)$, we have that $\delta (G_{n,m})=\Delta
(G_{n,m})=\Theta (m)$. The construction of the graph $G_{n,m}$ is
illustrated in Figure~\ref{G-n-m-reduction-fig}.

\begin{figure}[h!tb]
\centering 
\includegraphics[scale=0.59]{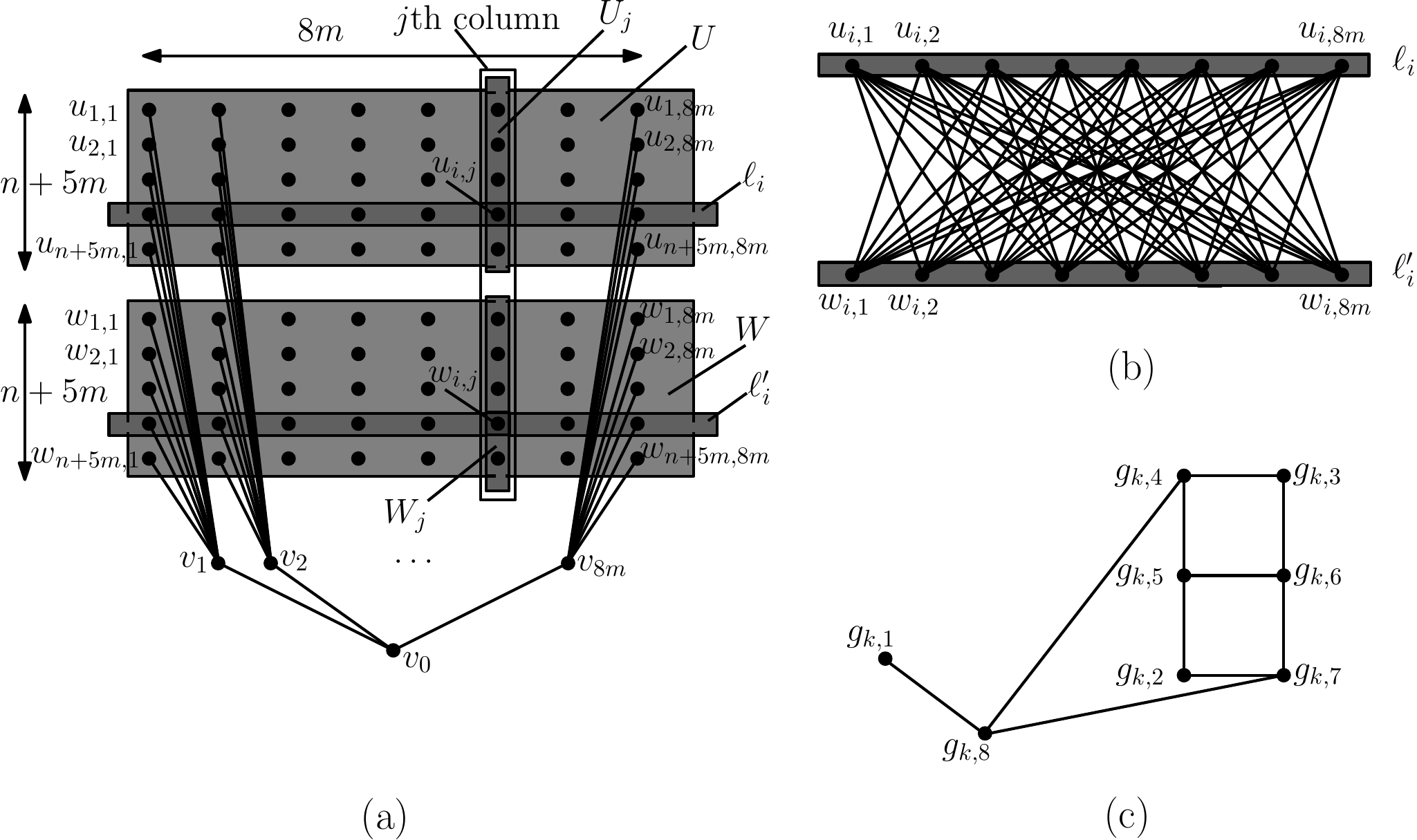}
\caption{(a)~The $2(n+5m)\times 8m$-matrix arrangement of the vertices $%
U\cup W$ of $G_{n,m}$ and their connections with the vertices $%
\{v_{0},v_{1},v_{2},\ldots ,v_{8m}\}$,~(b)~the edges between the vertices of
the $i$th row $\ell _{i}$ and the $(n+5m+i)$th row $\ell _{i}^{\prime }$ in
this matrix, and~(c)~the gadget with $8$ vertices and $10$ edges that we
associate in $H_{\protect\phi }$ to the clause $\protect\alpha _{k}$ of $%
\protect\phi $, where~${1\leq k\leq m}$.\protect\vspace{-0.4cm}}
\label{G-n-m-reduction-fig}
\end{figure}

\begin{lemma}
\label{G-phi-diameter-3-lem}For every $n,m\geq 1$, the graph $G_{n,m}$ has
diameter~$3$ and radius~$2$.
\end{lemma}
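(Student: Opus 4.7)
The plan is to first pin down the radius by exhibiting $v_{0}$ as a center, and then to bound the diameter by a case analysis based on where two vertices sit in the matrix arrangement of $U\cup W$.

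For the radius, I would argue that $v_{0}$ has eccentricity exactly $2$. Indeed, $v_{0}$ is adjacent to each $v_{j}$, and for every $u_{i,j}\in U$ (respectively $w_{i,j}\in W$) the path $v_{0}\,v_{j}\,u_{i,j}$ (respectively $v_{0}\,v_{j}\,w_{i,j}$) has length~$2$; since $v_{0}$ is non-adjacent to every vertex of $U\cup W$, its eccentricity is exactly $2$. Thus $\mathrm{rad}(G_{n,m})\leq 2$. On the other hand, no vertex of $G_{n,m}$ is universal: $v_{0}$ misses all of $U\cup W$, each $v_{j}$ misses the columns different from $j$, and any vertex of $U\cup W$ misses at least $v_{0}$. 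Hence $\mathrm{rad}(G_{n,m})\geq 2$, which gives $\mathrm{rad}(G_{n,m})=2$.

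For the diameter, I would systematically verify $d(x,y)\leq 3$ for each type of pair, using $v_{0}$, the column-stars centered at the $v_{j}$'s, and the bipartite-minus-matching between row $\ell_{i}$ and row $\ell'_{i}$ as the three basic building blocks. Pairs involving $v_{0}$ or pairs of the form $(v_{i},v_{j})$ have distance $\leq 2$ via $v_{0}$. A pair $(v_{j},u_{i,j'})$ with $j'\neq j$ is connected by $v_{j}\,w_{i,j}\,u_{i,j'}$ since $w_{i,j}\in N(v_{j})$ and $w_{i,j}u_{i,j'}\in E_{n,m}$ (as $j\neq j'$); the case $(v_{j},w_{i,j'})$ is symmetric. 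Pairs in $U\cup W$ lying in the same column are adjacent to the same $v_{j}$, hence at distance $2$; pairs $u_{i,j},u_{i,j'}$ in the same row-pair share any common neighbor $w_{i,j''}$ with $j''\notin\{j,j'\}$ (such $j''$ exists since $8m\geq 8$), and similarly for the $W$-side; a pair $u_{i,j},w_{i,j'}$ is either adjacent (if $i$ matches and $j\neq j'$) or linked through $v_{j}$. The only genuinely nontrivial sub-case is a pair lying in \emph{different} rows \emph{and} \emph{different} columns of the matrix, e.g.\ $u_{i,j}$ and $u_{i',j'}$ with $i\neq i'$ and $j\neq j'$; here I would use the length-$3$ path $u_{i,j}\,w_{i,j'}\,v_{j'}\,u_{i',j'}$, valid because $j\neq j'$ (so $u_{i,j}w_{i,j'}\in E_{n,m}$) and $v_{j'}$ is adjacent to every vertex of column $j'$ in $U\cup W$. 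The analogous pairs $(u_{i,j},w_{i',j'})$ and $(w_{i,j},w_{i',j'})$ with $i\neq i'$, $j\neq j'$ are handled by the same trick. This establishes $\mathrm{diam}(G_{n,m})\leq 3$.

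Finally, to see that the diameter is exactly $3$ (and not $2$), I would exhibit a pair realising distance $3$: take $u_{1,1}$ and $u_{2,2}$. The neighborhood of $u_{1,1}$ is $\{v_{1}\}\cup\{w_{1,j''}:j''\neq 1\}$, the neighborhood of $u_{2,2}$ is $\{v_{2}\}\cup\{w_{2,j''}:j''\neq 2\}$, and these two neighborhoods are disjoint (different $v_{j}$'s and different $\ell'$-rows), so $d(u_{1,1},u_{2,2})\geq 3$, while the path from the previous paragraph shows $d(u_{1,1},u_{2,2})\leq 3$. Hence $\mathrm{diam}(G_{n,m})=3$, completing the proof.

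The main obstacle is the ``different row, different column'' case within $U\cup W$: every shorter attempt fails because $U$ and $W$ are each independent, the bipartite-minus-matching is confined to a single row-pair, and $v_{0}$ is at distance $2$ from everything in $U\cup W$; hence any length-$2$ path from $u_{i,j}$ to $u_{i',j'}$ would have to go through either a common $v_{\ast}$ or a common vertex in the opposite row-pair, neither of which exists. The combined path $u_{i,j}\,w_{i,j'}\,v_{j'}\,u_{i',j'}$ is the canonical way out, and once this is identified the rest of the case analysis is mechanical.
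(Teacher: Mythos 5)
Your proposal is correct and follows essentially the same approach as the paper: a direct case analysis over the vertex types $v_{0}$, $v_{j}$, and $U\cup W$, exhibiting explicit short paths (the key length-$3$ path through $w_{i,j'}$ and $v_{j'}$ for the different-row, different-column case matches the paper's, up to relabelling). Your version is in fact slightly more complete, since you also verify the lower bounds $\mathrm{diam}(G_{n,m})\geq 3$ and $\mathrm{rad}(G_{n,m})\geq 2$ with explicit witnesses, which the paper leaves implicit.
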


\begin{proof}
First note that for any $j\in \{1,2,\ldots ,8m\}$, every vertex $u_{i,j}$
(resp.~$w_{i,j}$) is adjacent to $v_{j}$. Therefore, since $v_{j}\in
N(v_{0}) $ for every $j\in \{1,2,\ldots ,8m\}$, it follows that $%
d(v_{0},u)\leq 2$ for every $u\in V_{n,m}\setminus \{v_{0}\}$, and thus $%
G_{n,m}$ has radius~$2 $. Furthermore note that $d(v_{j},v_{k})\leq 2$ for
every $j,k\in \{1,2,\ldots ,8m\}$, since $v_{j},v_{k}\in N(v_{0})$. Consider
now an arbitrary vertex~$u_{i,j}$ (resp.~$w_{i,j}$) and an arbitrary vertex $%
v_{k}$. If $j=k$, then ${d(v_{k},u_{i,j})=1}$ (resp.~${d(v_{k},w_{i,j})=1}$%
). Otherwise, if $j\neq k$, then there exists the path $%
(v_{k},v_{0},v_{j},u_{i,j})$ (resp.~$(v_{k},v_{0},v_{j},w_{i,j})$) of length 
$3$ between~$v_{k}$ and $u_{i,j}$ (resp.~$w_{i,j}$). Therefore also $%
d(v_{k},u)\leq 3$ for every $k=1,2,\ldots ,8m$ and every~${u\in
V_{n,m}\setminus \{v_{k}\}}$.

We will now prove that $d(u_{i,j},u_{p,q})\leq 3$ for every $(i,j)\neq (p,q)$%
. If $p\neq i$ and $q\neq j$, then there exists the path $%
(u_{i,j},v_{j},w_{p,j},u_{p,q})$ of length $3$ between $u_{i,j}$ and $%
u_{p,q} $. If $q=j$, then $d(u_{i,j},u_{p,q})=2$, as $u_{i,j},u_{p,q}\in
N(v_{j})$. Finally, if $p=i$, then there exists the path $%
(u_{i,j},v_{j},w_{i,j},u_{p,q})$ of length~$3$ between $u_{i,j}$ and $%
u_{p,q} $. Therefore $d(u_{i,j},u_{p,q})\leq 3$ for every $(i,j)\neq (p,q)$.
Similarly we can prove that~${d(w_{i,j},w_{p,q})\leq 3}$ for every $%
(i,j)\neq (p,q)$. It remains to prove that $d(u_{i,j},w_{p,q})\leq 3$ for
every $(i,j)$ and every $(p,q)$. If $p\neq i$ and $q\neq j$, then there
exists the path $(u_{i,j},v_{j},u_{p,j},w_{p,q})$ of length~$3$ between~$%
u_{i,j}$ and $w_{p,q}$. If $q=j$, then $d(u_{i,j},w_{p,q})=2$, since $%
u_{i,j},w_{p,q}\in N(v_{j})$. If $p=i$ and $q\neq j$, then $w_{p,q}=w_{i,q}$%
, and thus~$u_{i,j}w_{p,q}\in E_{\phi }$. Therefore $d(u_{ij},w_{pq})\leq 3$
for every $(i,j)\neq (p,q)$, and thus $G_{n,m}$ has diameter~$3$.\qed
\end{proof}

\begin{lemma}
\label{G-phi-irreducible-lem}For every $n,m\geq 1$, the graph $G_{n,m}$ is
irreducible and triangle-free.
\end{lemma}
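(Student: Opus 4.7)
\medskip

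\noindent\emph{Proof proposal.} The plan is to verify the two properties separately, first triangle-freeness, then irreducibility, leveraging the fact that a triangle-free graph is automatically both $K_{4}$-free and diamond-free.

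For triangle-freeness I would split according to which type of vertex a putative triangle meets. If $v_{0}$ lies in the triangle, we would need an edge among its neighbors $\{v_{1},\ldots ,v_{8m}\}$, but this set is independent by construction. If some $v_{j}$ with $j\geq 1$ lies in the triangle, its neighborhood is $\{v_{0}\}\cup U_{j}\cup W_{j}$; vertex $v_{0}$ is non-adjacent to everything in $U\cup W$, and within the column $U_{j}\cup W_{j}$ there are no edges at all (both $U$ and $W$ are independent, and a would-be edge $u_{i,j}w_{p,j}$ between them requires $i=p$ together with column index differing from~$j$, which is impossible). Finally, a triangle contained entirely in $U\cup W$ would have to include two vertices from one of $U$ or $W$, contradicting the independence of each of those sets. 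Hence $G_{n,m}$ is triangle-free, and in particular $K_{4}$-free and diamond-free, so Reduction Rule~\ref{rule1} does not apply.

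To prove irreducibility it therefore remains to show that $G_{n,m}$ is siblings-free. The key idea is that each vertex has a very restricted \emph{spine-signature}, i.e.~set of neighbors inside $\{v_{0},v_{1},\ldots ,v_{8m}\}$: the spine-neighbors of $v_{0}$ are $\{v_{1},\ldots ,v_{8m}\}$; the spine-neighbor of $v_{j}$ with $j\geq 1$ is $\{v_{0}\}$; and the spine-neighbor of any $u_{i,j}$ or $w_{i,j}$ is exactly $\{v_{j}\}$. For any non-adjacent pair whose spine-signatures differ in either direction, one vertex has a spine vertex in its neighborhood that the other lacks, which already blocks $N(u)\subseteq N(v)$ or $N(v)\subseteq N(u)$. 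In particular, this disposes of every pair that mixes a spine vertex with a matrix vertex, every pair of distinct spine vertices $v_{j}, v_{k}$ (the differing spine-signature is directly visible), and every pair of matrix vertices lying in distinct columns.

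The only remaining non-adjacent pairs are two matrix vertices in the same column $j$: namely $\{u_{i,j},u_{p,j}\}$ with $i\neq p$, $\{w_{i,j},w_{p,j}\}$ with $i\neq p$, and $\{u_{i,j},w_{p,j}\}$ for any $i,p$ (adjacency across $U,W$ in a fixed column is excluded by construction). For these I would compare the non-spine neighbors: the non-spine neighbors of $u_{i,j}$ lie in the $W$-row $\ell_{i}^{\prime}\setminus\{w_{i,j}\}$, those of $u_{p,j}$ lie in $\ell_{p}^{\prime}\setminus\{w_{p,j}\}$, and these two sets are disjoint when $i\neq p$; the $w$-$w$ case is symmetric; and the $u$-$w$ case is even easier, since the non-spine neighbors of $u_{i,j}$ lie in $W$ while those of $w_{p,j}$ lie in $U$. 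In each case neither neighborhood is contained in the other, so no siblings exist. Together with diamond- and $K_{4}$-freeness, Definition~\ref{irreducible-def} then gives that $G_{n,m}$ is irreducible. The main (if unglamorous) obstacle is merely organizing the case analysis compactly; the spine-signature viewpoint is what makes it short.
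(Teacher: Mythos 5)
Your triangle-freeness argument is essentially identical to the paper's: the paper also observes that $v_{0}$ and each $v_{j}$ have independent neighborhoods (so no triangle can use a spine vertex) and then applies pigeonhole to conclude that a triangle inside $U\cup W$ would force two vertices in $U$ or two in $W$, contradicting their independence. For siblings-freeness you actually go further than the paper, which merely asserts ``by construction, $G_{n,m}$ has no pair of sibling vertices''; your spine-signature case analysis is a genuine proof of that claim and is correct in all but one spot. The loose end: for the pair $\{v_{0},u_{i,j}\}$ (and symmetrically $\{v_{0},w_{i,j}\}$) the spine-signature of the matrix vertex is $\{v_{j}\}$, which is \emph{contained} in the spine-signature $\{v_{1},\ldots ,v_{8m}\}$ of $v_{0}$, so the signature comparison only rules out $N(v_{0})\subseteq N(u_{i,j})$ and not the reverse containment $N(u_{i,j})\subseteq N(v_{0})$; your blanket claim that differing signatures ``in either direction'' dispose of every spine--matrix pair does not cover this case. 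It is a one-line patch --- $u_{i,j}$ has a non-spine neighbor $w_{i,k}\in \ell _{i}^{\prime }$ for any $k\neq j$ (such $k$ exists since $8m\geq 2$), and $w_{i,k}\notin N(v_{0})$ --- but as written the step is incomplete. Everything else, including the same-column comparisons via disjoint row-neighborhoods, checks out.
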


\begin{proof}
First observe that, by construction, $G_{n,m}$ has no pair of sibling
vertices, and thus the Reduction Rule~\ref{rule2} does not apply to $G_{n,m}$%
. Thus, in order to prove that $G_{n,m}$ is irreducible, it suffices to
prove that $G_{n,m}$ is triangle-free, and thus also diamond-free, i.e.~also
the Reduction Rule~\ref{rule1} does not apply to $G_{n,m}$. Suppose
otherwise that $G_{n,m}$ has a triangle with vertices $a,b,c$. Note that
vertex $v_{0}$ does not belong to any triangle in $G_{n,m}$, since its
neighbors $N(v_{0})=\{v_{1},v_{2},\ldots ,v_{8m}\}$ induce an independent
set in $G_{n,m}$. Furthermore, note that also vertex $v_{j}$, where $1\leq
j\leq 8m$, does not belong to any triangle in $G_{n,m}$. Indeed, by
construction of $G_{n,m}$, its neighbors $N(v_{j})=\{v_{0}\}\cup U_{j}\cup
W_{j}$ induce an independent set in $G_{n,m}$. Therefore all the vertices $%
a,b,c$ of the assumed triangle of $G_{n,m}$ belong to $U\cup W$. Therefore,
at least two vertices among $a,b,c$ belong to $U$, or at least two of them
belong to $W$. This is a contradiction, since by the construction of $%
G_{n,m} $ the sets $U$ and $W$ induce two independent sets. Therefore $%
G_{n,m}$ is triangle-free. Thus $G_{n,m}$ is also diamond-free, i.e.~the
Reduction Rule \ref{rule1} does not apply to $G_{n,m}$. Summarizing, $%
G_{n,m} $ is irreducible and triangle-free.\qed
\end{proof}

We now construct the graph $H_{\phi }=(V_{\phi },E_{\phi })$ from $\phi $ by
adding $10m$ edges to $G_{n,m}$ as follows. Let $k\in \{1,2,\ldots ,m\}$ and
consider the clause $\alpha _{k}=(l_{k,1}\vee l_{k,2}\vee l_{k,3})$, where $%
l_{k,p}\in \{x_{i_{k,p}},\overline{x_{i_{k,p}}}\}$ for $p\in \{1,2,3\}$ and $%
i_{k,1},i_{k,2},i_{k,3}\in \{1,2,\ldots ,n\}$. For this clause $\alpha _{k}$%
, we add on the vertices of $G_{n,m}$ an isomorphic copy of the gadget in
Figure~\ref{G-n-m-reduction-fig}(c), which has $8$ vertices and $10$ edges,
as follows. Let $p\in \{1,2,3\}$. The literal $l_{k,p}$ corresponds to
vertex $g_{k,p}$ of this gadget. If $l_{k,p}=x_{i_{k,p}}$, we set $%
g_{k,p}=u_{i_{k,p},8k+1-p}$. Otherwise, if $l_{k,p}=\overline{x_{i_{k,p}}}$,
we set $g_{p}=w_{i_{k,p},8k+1-p}$. Furthermore, for $p\in \{4,\ldots ,8\}$,
we set $g_{k,p}=u_{n+5k+4-p,8k+1-p}$.

Note that, by construction, the graphs $H_{\phi }$ and $G_{n,m}$ have the
same vertex set, i.e.~$V_{\phi }=V_{n,m}$, and that $E_{n,m}\subset E_{\phi
} $. Therefore $diam(H_{\phi })=3$ and $rad(H_{\phi })=2$, since $%
diam(G_{n,m})=3$ and $rad(G_{n,m})=2$ by Lemma~\ref{G-phi-diameter-3-lem}.
Observe now that every positive literal of $\phi $ is associated to a vertex
of $U$, while every negative literal of $\phi $ is associated to a vertex of 
$W$. In particular, each of the $3m$ literals of $\phi $ corresponds by this
construction to a different column in the matrix arrangement of the vertices
of $U\cup W$. If a literal of $\phi $ is the variable $x_{i}$ (resp.~the
negated variable $\overline{x_{i}}$), where $1\leq i\leq n$, then the vertex
of $U$ (resp.~$W$) that is associated to this literal lies in the $i$th row $%
\ell _{i}$ (resp.~in the $(n+5m+i)$th row $\ell _{i}^{\prime }$) of the
matrix. Moreover, note by the above construction that each of the $8m$
vertices $\{q_{k,1},q_{k,2},\ldots ,q_{k,8}\}_{k=1}^{m}$ corresponds to a
different column in the matrix of the vertices of $U\cup W$. Finally, each
of the $5m$ vertices $\{q_{k,4},q_{k,5},q_{k,6},q_{k,7},q_{k,8}\}_{k=1}^{m}$
corresponds to a different row in the matrix of the vertices of $U$.

\begin{observation}
\label{gadget-no-2-coloring-obs}The gadget of Figure~\ref%
{G-n-m-reduction-fig}(c) has no proper $2$-coloring, as it contains an
induced cycle of length $5$.
\end{observation}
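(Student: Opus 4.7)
The plan is to exhibit an induced cycle of length $5$ inside the gadget and then invoke the standard fact that odd cycles are not bipartite. First I would read off from Figure~\ref{G-n-m-reduction-fig}(c) a quintuple of vertices among $\{g_{k,1},g_{k,2},\ldots,g_{k,8}\}$ that are pairwise connected in a cyclic pattern of exactly five edges, and verify that no additional edge of the gadget joins two of these five vertices (so the cycle is induced rather than merely a spanning $5$-cycle with chords). Since the gadget has only $8$ vertices and $10$ edges, this verification is a finite case check that can be done directly from the picture.

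Once the induced $C_5$ is identified, the argument is essentially immediate: in any proper $2$-coloring of the gadget, the colors would have to alternate around the five consecutive vertices of the cycle, but after an odd number of alternations one returns to the starting vertex with the opposite color, contradicting properness. Equivalently, $\chi(C_5)=3$, and the chromatic number is monotone under taking (induced) subgraphs, so $\chi$ of the gadget is at least $3$ as well.

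The only real work, and the main (mild) obstacle, is the bookkeeping step of locating the induced $5$-cycle: one has to be careful to pick five vertices whose induced subgraph in the gadget is exactly $C_5$ and not $C_5$ plus a chord, since an induced cycle is required for the argument to be completely transparent. Once this combinatorial check is carried out, the rest of the proof is a one-line appeal to the non-bipartiteness of odd cycles, and the observation follows.
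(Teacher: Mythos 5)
Your proposal is correct and matches the paper's (implicit) justification exactly: the observation is stated without further proof precisely because exhibiting the induced $5$-cycle in the gadget and appealing to the non-bipartiteness of odd cycles is all that is needed. One small simplification: the cycle need not be induced for this purpose, since any graph containing an odd cycle as a (not necessarily induced) subgraph already fails to be $2$-colorable, so the chord-checking step you flag as the main bookkeeping obstacle can be skipped entirely.
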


\begin{observation}
\label{gadget-coloring-obs}Consider the gadget of Figure~\ref%
{G-n-m-reduction-fig}(c). If we assign to vertices $g_{k,1},g_{k,2},g_{k,3}$
the same color, we cannot extend this coloring to a proper $3$-coloring of
the gadget. Furthermore, if we assign to vertices $g_{k,1},g_{k,2},g_{k,3}$
in total two or three colors, then we can extend this coloring to a proper $%
3 $-coloring of the gadget.
\end{observation}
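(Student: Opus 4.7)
Both assertions will be established by a direct case analysis on the coloring of $\{g_{k,1}, g_{k,2}, g_{k,3}\}$, using the explicit edge structure of the gadget in Figure~\ref{G-n-m-reduction-fig}(c).

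For the first assertion, I will argue by contradiction: suppose there is a proper $3$-coloring of the gadget in which $g_{k,1}, g_{k,2}, g_{k,3}$ all receive the same color, which, by the symmetry among the three colors, I may assume to be red. Every neighbor of $g_{k,1}, g_{k,2}, g_{k,3}$ among $\{g_{k,4}, \ldots, g_{k,8}\}$ is then forced to use a color from the two-element set $\{\text{blue}, \text{green}\}$. Combined with the remaining edges among $\{g_{k,4}, \ldots, g_{k,8}\}$, this reduces the extension problem to properly $2$-coloring a subgraph that contains the induced $5$-cycle isolated in Observation~\ref{gadget-no-2-coloring-obs}. Since an odd cycle admits no proper $2$-coloring, this yields a contradiction.

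For the second assertion, I will enumerate all partial colorings of $\{g_{k,1}, g_{k,2}, g_{k,3}\}$ that use either exactly two or exactly three colors. Up to the permutation symmetry of the three colors, this reduces to a short list of canonical cases. For each such case, I will exhibit an explicit color assignment to the remaining five vertices $g_{k,4}, \ldots, g_{k,8}$ and verify directly that every one of the $10$ edges of the gadget is properly colored by the resulting complete assignment, thereby giving the required extension.

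The main obstacle will be making sure the enumeration in the second part is exhaustive with respect to the true symmetries of the gadget (as opposed to symmetries one might naively assume), and, in the first part, that the propagation of color restrictions from $\{g_{k,1}, g_{k,2}, g_{k,3}\}$ really does isolate the odd cycle of Observation~\ref{gadget-no-2-coloring-obs}. Once the correct list of canonical partial colorings is identified, each individual verification is a mechanical check against the ten edges of the gadget and carries no further difficulty.
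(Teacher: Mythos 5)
The paper states this as an unproved Observation -- there is no proof in the text to compare against, so the relevant question is only whether your plan constitutes a correct verification. It does, and it is the natural one. For the first assertion, your reduction to Observation~\ref{gadget-no-2-coloring-obs} is exactly right: the gadget has $10$ edges, of which $5$ form the induced $5$-cycle on $\{g_{k,4},\ldots,g_{k,8}\}$, so the remaining $5$ edges join $\{g_{k,1},g_{k,2},g_{k,3}\}$ to the five cycle vertices, and (as the figure shows) each cycle vertex receives exactly one such edge; hence a monochromatic assignment to $g_{k,1},g_{k,2},g_{k,3}$ forces a proper $2$-coloring of an odd cycle, which is impossible. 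This is the one structural fact you flag as needing confirmation, and it does hold. For the second assertion your case enumeration will go through, but you could replace it with a uniform argument that also subsumes the first part: each vertex of the $5$-cycle inherits a list of two allowed colors (the two colors different from that of its unique neighbor among $g_{k,1},g_{k,2},g_{k,3}$), and an odd cycle with $2$-element lists is list-colorable if and only if the lists are not all identical. The lists are all identical precisely when $g_{k,1},g_{k,2},g_{k,3}$ are monochromatic, so the coloring extends exactly when two or three colors appear on $g_{k,1},g_{k,2},g_{k,3}$. This spares you the bookkeeping over canonical cases, which is the only place your plan could realistically go wrong (e.g.\ by assuming a symmetry among $g_{k,1},g_{k,2},g_{k,3}$ that the asymmetric attachment pattern to the cycle does not actually provide).
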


The next observation follows by the construction of $H_{\phi }$ and by our
initial assumption that each clause of $\phi $ has three distinct literals.

\begin{observation}
\label{no-adj-pair-in-line-obs}For every $i\in \{1,2,\ldots ,n+5m\}$, there
exists no pair of adjacent vertices in the same row $\ell _{i}$ or $\ell
_{i}^{\prime }$ in $H_{\phi }$.
\end{observation}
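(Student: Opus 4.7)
The plan is to deduce the observation from two facts: first, that each row of the matrix is already independent in $G_{n,m}$, and second, that every edge added in passing from $G_{n,m}$ to $H_\phi$ lies inside a single clause gadget whose eight vertices all occupy pairwise distinct rows.

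The first fact is immediate from the construction of $G_{n,m}$: for every $i$, the vertices of $\ell_i$ (resp.~$\ell_i'$) are defined so that the only incidences of these vertices are to some $v_j$ and to vertices of $\ell_i'$ (resp.~$\ell_i$), so $\ell_i$ and $\ell_i'$ are independent sets in $G_{n,m}$. Since $E_{n,m}\subset E_{\phi}$ and the new edges come exclusively from the clause gadgets, it suffices to verify that, for every $k\in\{1,\ldots,m\}$, the gadget associated with clause $\alpha_k$ does not contain an edge whose endpoints lie in a common row.

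For this, I would inspect the row indices of the eight gadget vertices $g_{k,1},\ldots,g_{k,8}$. By definition, for $p\in\{1,2,3\}$ the vertex $g_{k,p}$ equals $u_{i_{k,p},8k+1-p}$ if $l_{k,p}=x_{i_{k,p}}$ and $w_{i_{k,p},8k+1-p}$ if $l_{k,p}=\overline{x_{i_{k,p}}}$, so $g_{k,p}$ lies in row $i_{k,p}\in\{1,\ldots,n\}$ in the first case and in row $n+5m+i_{k,p}\in\{n+5m+1,\ldots,n+5m+n\}$ in the second. Since the three literals of $\alpha_k$ are distinct, the indices $i_{k,1},i_{k,2},i_{k,3}$ are pairwise distinct, and hence the rows of $g_{k,1},g_{k,2},g_{k,3}$ are pairwise distinct. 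For $p\in\{4,\ldots,8\}$ the vertex $g_{k,p}=u_{n+5k+4-p,\,8k+1-p}$ lies in row $n+5k+4-p$, giving the five pairwise distinct rows $\{n+5k-4,\ldots,n+5k\}\subseteq\{n+1,\ldots,n+5m\}$. These five rows are disjoint from both $\{1,\ldots,n\}$ and $\{n+5m+1,\ldots,n+5m+n\}$, so they are disjoint from any of the rows used by $g_{k,1},g_{k,2},g_{k,3}$. Hence all eight row indices within the gadget are distinct.

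The main obstacle, such as it is, is simply bookkeeping the three disjoint row ranges $\{1,\ldots,n\}$, $\{n+1,\ldots,n+5m\}$, $\{n+5m+1,\ldots,n+5m+n\}$ into which the literal vertices and the five padding vertices of each gadget are placed, and checking that distinct clauses use disjoint ``padding row'' blocks $\{n+5k-4,\ldots,n+5k\}$ (so that, even though no new edges cross gadgets, one sees that the entire construction is globally consistent). Once this is set up, the observation follows because the sets of edges we have to consider, namely the at most $10$ edges of each gadget, can never connect two vertices in the same row, and there are no other edges outside of $G_{n,m}$ to examine.
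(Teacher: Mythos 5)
Your proposal is correct and is essentially the argument the paper intends: the paper offers no explicit proof, stating only that the observation ``follows by the construction of $H_{\phi}$ and by our initial assumption that each clause of $\phi$ has three distinct literals,'' and your write-up is precisely the spelled-out version of that (rows are independent in $G_{n,m}$, and the eight gadget vertices of each clause occupy eight pairwise distinct rows, so the $10$ new edges per gadget never stay within a row). One small imprecision: distinct \emph{literals} do not by themselves force the variable indices $i_{k,1},i_{k,2},i_{k,3}$ to be pairwise distinct (a clause could a priori contain both $x_{i}$ and $\overline{x_{i}}$); however, your conclusion that the three rows are distinct still holds in that case, since a positive occurrence is placed in $\ell_{i}$ and a negative one in $\ell_{i}^{\prime}$, which are different rows of the matrix, so the argument goes through after this one-line repair.
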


\begin{lemma}
\label{H-phi-irreducible-lem}For every formula $\phi $, the graph $H_{\phi }$
is irreducible and triangle-free.
\end{lemma}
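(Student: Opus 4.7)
My plan is to invoke Lemma~\ref{G-phi-irreducible-lem}, which already gives that $G_{n,m}$ is triangle-free and siblings-free, and then to show that the $10m$ gadget edges added in passing from $G_{n,m}$ to $H_{\phi}$ preserve both properties. Triangle-freeness is stronger than diamond-freeness, so Reduction Rule~\ref{rule1} will remain inapplicable; the absence of sibling pairs rules out Reduction Rule~\ref{rule2}; together this yields irreducibility.

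The key combinatorial observation is that within each gadget the eight vertices $g_{k,1},\ldots,g_{k,8}$ occupy eight pairwise distinct columns of the $2(n+5m)\times 8m$ matrix (namely $8k,8k-1,\ldots,8k-7$) and eight pairwise distinct rows: the three literal rows are distinct because the clause $\alpha_{k}$ uses three distinct variables, and they lie in $\{1,\ldots,n\}\cup\{n+5m+1,\ldots,2n+10m\}$; the five auxiliary rows $n+5k-4,\ldots,n+5k$ are pairwise distinct and lie in the disjoint band $\{n+1,\ldots,n+5m\}$. Different gadgets use disjoint column-blocks. Since a $G_{n,m}$-edge inside $U\cup W$ joins only a $U$-vertex and a $W$-vertex sharing a row, it follows that \emph{no} $G_{n,m}$-edge connects two vertices of one and the same gadget.

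For triangle-freeness I would do a case analysis on the number $t\in\{1,2,3\}$ of gadget edges in a hypothetical triangle $abc$ (at least one, since $G_{n,m}$ is triangle-free). If $t=3$, then $a,b,c$ lie in the same gadget, and the gadget of Figure~\ref{G-n-m-reduction-fig}(c) is triangle-free by inspection; if $t=2$ with the two gadget edges sharing a common vertex $b$, then $a,b,c$ again lie in one gadget and the third edge $ac$ would be a $G_{n,m}$-edge inside a gadget, ruled out above; if $t=1$ with gadget edge $ab$, then a common $G_{n,m}$-neighbor $c$ would be either some $v_{j}$ (forcing $a,b$ to share column $j$) or a $u/w$-vertex (forcing $a,b$ to share the row of $c$), both contradictions since $a,b$ lie in distinct rows and distinct columns.

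For the absence of siblings I would use column-indicators: for every $j\in\{1,\ldots,8m\}$, the vertex $v_{j}$ is adjacent in $H_{\phi}$ exactly to $v_{0}$ and the $u/w$-vertices of column~$j$, a property that survives from $G_{n,m}$ to $H_{\phi}$ because gadget edges touch only $u/w$-vertices. A case analysis on the types of $u,v$ (any of $v_{0},\,v_{j},\,u_{i,j},\,w_{i,j}$, gadget or not) then exhibits for every non-adjacent pair a neighbor each possesses that the other does not: if the columns of $u,v$ differ, suitable $v_{j}$'s distinguish them; if their columns coincide but rows differ, a same-row indicator in the opposite half does the job. The main obstacle is ensuring that the $10m$ added edges cannot symmetrize a near-sibling pair of $G_{n,m}$ into a true sibling pair; this is settled by a pigeonhole/rank observation, since every gadget vertex gains at most three neighbors from its gadget whereas every $u/w$-vertex already has $8m-1$ opposite-half neighbors along its row, far too many for gadget edges to equalize between a pair.
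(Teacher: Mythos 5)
Your proof is correct and follows essentially the same route as the paper's: both reduce to the triangle-freeness and sibling-freeness of $G_{n,m}$ (Lemma~\ref{G-phi-irreducible-lem}) and then exploit the fact that the eight vertices of each gadget occupy pairwise distinct rows and pairwise distinct columns (the substance of Observation~\ref{no-adj-pair-in-line-obs}) to kill any triangle using a gadget edge. If anything you are more explicit than the paper, which merely asserts sibling-freeness of $H_{\phi}$ and does not separately treat a triangle whose three vertices all lie in one gadget but whose edges are of mixed type; the only caveat is that your claim that the three literal rows are distinct relies on each clause containing three distinct \emph{variables}, a standard normalization that the paper implicitly needs as well but states only as ``three distinct literals''.
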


\begin{proof}
First observe that, similarly to $G_{n,m}$, the graph $H_{\phi }$ has no
pair of sibling vertices, and thus the Reduction Rule~\ref{rule2} does not
apply to $H_{\phi }$. We will now prove that $H_{\phi }$ is triangle-free.
Suppose otherwise that $H_{\phi }$ has a triangle with vertices $a,b,c$.
Similarly to $G_{n,m}$, note that the neighbors of $v_{0}$ in $H_{\phi }$
induce an independent set, and thus vertex $v_{0}$ does not belong to any
triangle in $H_{\phi }$. Furthermore, note by the construction of $H_{\phi }$
that we do not add any edge between vertices $u_{i,j}$ and $w_{i,j}$, where $%
1\leq i\leq n+5m$ and $1\leq j\leq 8m$. Therefore, similarly to $G_{n,m}$,
the neighbors of $v_{j}$ induce an independent set in $H_{\phi }$, where $%
j\in \{1,2,\ldots ,8m\}$, and thus $v_{j}$ does not belong to any triangle
in $H_{\phi }$. Therefore all the vertices $a,b,c$ of the assumed triangle
of $H_{\phi }$ belong to $U\cup W$. Since $G_{n,m}$ is triangle-free by
Lemma~\ref{G-phi-irreducible-lem}, it follows that at least one edge in this
triangle belongs to $E_{\phi }\setminus E_{n,m}$, i.e.~to at least one of
the copies of the gadget in Figure~\ref{G-n-m-reduction-fig}(c). Note that
not all vertices $a,b,c$ belong to the same copy of this gadget, since the
gadget is triangle-free, cf.~Figure~\ref{G-n-m-reduction-fig}(c). Assume
without loss of generality that vertices $a$ and $b$ (and thus also the edge 
$ab$) of the assumed triangle of $H_{\phi }$ belong to the copy of the
gadget that corresponds to clause $\alpha _{k}$, where $1\leq k\leq m$.
Then, since vertex $c$ does not belong to this gadget, the edges $ac$ and $%
bc $ of the assumed triangle belong also to the graph $G_{n,m}$. Therefore,
since $a,b,c\in U\cup W$, it follows by the construction of $G_{n,m}$ that $%
a $ and $b$ belong to some row $\ell _{i}$ and $c$ belongs to the row $\ell
_{i}^{\prime }$, or $a$ and $b$ belong to some row $\ell _{i}^{\prime }$ and 
$c$ belongs to the row $\ell _{i}$. This is a contradiction, since no pair
of adjacent vertices (such as $a$ and $b$) belong to the same row $\ell _{i}$
or $\ell _{i}^{\prime }$ in $H_{\phi }$ by Observation \ref%
{no-adj-pair-in-line-obs}. Therefore $H_{\phi }$ is triangle-free, and thus
also diamond-free, i.e. the Reduction Rule~\ref{rule2} does not apply to $%
H_{\phi }$. Summarizing, $H_{\phi }$ is irreducible and triangle-free.\qed
\end{proof}

\medskip

We are now ready to state the main theorem of this section.

\begin{theorem}
\label{3-col-diam-3-SAT-thm}The formula $\phi $ is satisfiable if and only
if $H_{\phi }$ is $3$-colorable.
\end{theorem}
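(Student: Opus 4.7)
The plan is to prove both directions of the biconditional by explicit, mirroring constructions, leaning on Observation~\ref{gadget-coloring-obs} as the main combinatorial lever.

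For the $(\Rightarrow)$ direction, given a satisfying assignment $\tau$ I build a $3$-coloring $\chi$ of $H_{\phi}$ with colors $\{R,B,G\}$. Set $\chi(v_{0})=R$ and color each $v_{j}$, $1\leq j\leq 8m$, with a fixed value in $\{B,G\}$ chosen so that the triple $\chi(v_{8k}),\chi(v_{8k-1}),\chi(v_{8k-2})$ is never monochromatic for any clause $\alpha_{k}$ (e.g.\ $\chi(v_{8k})=\chi(v_{8k-2})=B$ and $\chi(v_{8k-1})=G$). For each $i\in\{1,\ldots,n\}$, color the row pair $(\ell_{i},\ell_{i}^{\prime})$ by one of two canonical patterns determined by $\tau(x_{i})$: if $\tau(x_{i})$ is true, put $\chi(u_{i,j})=\{B,G\}\setminus\{\chi(v_{j})\}$ and $\chi(w_{i,j})=R$ for all $j$; if $\tau(x_{i})$ is false, swap the two sides. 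One checks that this is a proper coloring of $G_{n,m}$ and moreover $\chi(g_{k,p})=R$ iff literal $l_{k,p}$ is false under $\tau$. Consequently, for each satisfied clause $\alpha_{k}$ the triple $\{g_{k,1},g_{k,2},g_{k,3}\}$ uses at least two colors: either because some $g_{k,p}$ is $R$ and some is non-$R$ (mixed literals), or because all three are non-$R$ with colors dictated by the non-monochromatic triple $\chi(v_{8k}),\chi(v_{8k-1}),\chi(v_{8k-2})$. Observation~\ref{gadget-coloring-obs} then extends $\chi$ to $g_{k,4},\ldots,g_{k,8}$; the five extra rows $\{n+5k+4-p:p=4,\ldots,8\}$ carry independent type choices and, being disjoint across clauses, they offer enough freedom to realize the extension.

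For the $(\Leftarrow)$ direction, given a proper $3$-coloring $\chi$ of $H_{\phi}$, normalize so that $\chi(v_{0})=R$; then $\chi(v_{j})\in\{B,G\}$ and each vertex of $U\cup W$ takes a color in $\{R,B,G\}\setminus\{\chi(v_{j})\}$. The core rigidity claim is that in every row pair $(\ell_{i},\ell_{i}^{\prime})$, one side is uniformly $R$ and the other takes, at each column $j$, the unique color in $\{B,G\}\setminus\{\chi(v_{j})\}$. This follows from the complete-bipartite-minus-a-perfect-matching structure: if the set $\{u_{i,j}\}_{j}$ used all three colors, then for $k$ different from three suitable witnesses the vertex $w_{i,k}$ would have to avoid all three colors via its adjacencies, which is impossible; a short case analysis on the remaining two-color patterns pins down the two canonical types. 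Define $\tau(x_{i})=\textrm{true}$ iff $\ell_{i}$ is the non-$R$ side. By the same bookkeeping as in the forward direction, $\chi(g_{k,p})=R$ iff $l_{k,p}$ is false under $\tau$; Observation~\ref{gadget-coloring-obs} guarantees that $\{g_{k,1},g_{k,2},g_{k,3}\}$ uses at least two colors, so at least one $g_{k,p}$ is non-$R$, i.e.\ at least one literal of $\alpha_{k}$ is true under $\tau$, so $\tau$ satisfies $\phi$.

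The hardest step is the rigidity argument in the backward direction, which must rule out every mixed coloring of $(\ell_{i},\ell_{i}^{\prime})$ outside the two canonical types; it combines the bipartite-minus-matching adjacencies with the column constraints imposed by the $\chi(v_{j})$'s, and hinges on the fact that $8m$ is large enough to supply the three required witnesses $j_{R},j_{B},j_{G}$ all distinct from any given $k$. A secondary subtlety, in the forward direction, is that the colors of $g_{k,4},\ldots,g_{k,8}$ are forced by the canonical type chosen on their rows rather than freely assignable; the $2^{5}$ type choices per clause (which are independent across clauses) must be shown to include a configuration compatible with the gadget extension supplied by Observation~\ref{gadget-coloring-obs}.
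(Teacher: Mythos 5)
Your overall architecture matches the paper's: rows encode truth values via two canonical types, a literal vertex is red iff its literal is false, and Observations~\ref{gadget-no-2-coloring-obs} and~\ref{gadget-coloring-obs} do the combinatorial work. However, both directions as you have written them contain genuine gaps. The more serious one is in the backward direction: your ``core rigidity claim'' does \emph{not} follow from the bipartite-minus-matching adjacencies together with the column constraints, and the ``short case analysis'' you defer to cannot be completed locally. Concretely, fix a column $j_{0}$ and consider the coloring in which $u_{i,j_{0}}$ and $w_{i,j_{0}}$ are both blue, every other vertex of $\ell_{i}$ is red, every other vertex of $\ell_{i}^{\prime}$ is green, $v_{j_{0}}$ is green, and $v_{j}$ is blue for all $j\neq j_{0}$. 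This is a proper coloring of the subgraph induced by $\{v_{0}\}\cup\{v_{1},\ldots,v_{8m}\}\cup\ell_{i}\cup\ell_{i}^{\prime}$ (the matching edge $u_{i,j_{0}}w_{i,j_{0}}$ is absent, so the two blue vertices are independent), and it is neither of your canonical types. What kills such mixed patterns is a \emph{global} argument: they force all but at most two of the $v_{j}$ to a single color, hence every vertex of $U\cup W$ outside at most two columns is restricted to two colors, and therefore some clause gadget (whose eight columns avoid the two exceptional ones) is properly $2$-colored --- contradicting its induced $C_{5}$ via Observation~\ref{gadget-no-2-coloring-obs}. This appeal to the gadgets is the crux of the paper's backward direction and is entirely missing from your rigidity step; you only invoke the gadget observations \emph{after} rigidity is assumed.

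The forward direction has a gap you yourself flag but do not close, and it stems from the order of your choices. By fixing all the colors $\chi(v_{j})$ in advance, you leave each auxiliary vertex $g_{k,p}$, $p\in\{4,\ldots,8\}$, only two admissible colors: red (if its private row is made a red line) or the one element of $\{B,G\}\setminus\{\chi(v_{8k+1-p})\}$ (if it is made a white line). Observation~\ref{gadget-coloring-obs} only guarantees that \emph{some} proper $3$-coloring of the gadget extends the coloring of $g_{k,1},g_{k,2},g_{k,3}$; it says nothing about an extension respecting these per-vertex two-color lists, and your $2^{5}$ row-type choices do not obviously realize one. The paper avoids the issue by reversing the order: it first extends to a proper $3$-coloring of the whole gadget, and only \emph{then} sets $\chi(v_{8k+1-p})$ to the complement of $\chi(g_{k,p})$ (or arbitrarily in $\{B,G\}$ when $g_{k,p}$ is red) and declares the corresponding auxiliary row red or white accordingly. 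This is consistent precisely because each column $8k+1-p$ contains exactly one gadget vertex and each auxiliary row is used by exactly one clause, so no two gadgets compete for the same $v_{j}$ or the same row type. You should restructure the forward direction along these lines rather than trying to verify compatibility of the pre-fixed $\chi(v_{j})$ with the gadget extensions.
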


\begin{proof}
We will first prove that $G_{n,m}$ is always $3$-colorable. Recall that both 
$U$ and $W$ are independent sets in $G_{n,m}$, and that the only edges among
the vertices of $U\cup W$ in $G_{n,m}$ are all possible edges between the
rows $\ell _{i}$ (that contains only vertices of $U$) and $\ell _{i}^{\prime
}$ (that contains only vertices of $W$), except of a perfect matching
between the vertices of $\ell _{i}$ and of $\ell _{i}^{\prime }$. Consider
three colors, say red, green, and blue. We assign to vertex $v_{0}$ the
color red. Furthermore we assign arbitrarily the color blue or green to each
of its neighbors~$v_{j}$, $1\leq j\leq 8m$. For each of these $2^{8m}$
different colorings of vertex $v_{0}$ and its neighbors, we can construct~$%
2^{n+5m}$ different proper $3$-colorings of $G_{n,m}$ as follows. For every $%
i\in \{1,2,\ldots ,n+5m\}$, we have at least two possibilities of coloring
the vertices of $\ell _{i}$ and of $\ell _{i}^{\prime }$: (a) color all
vertices of $\ell _{i}$ red, and for every vertex $w_{i,j}$ of $\ell
_{i}^{\prime }$, color $w_{i,j}$ blue (resp.~green) if $v_{j}$ is colored
green (resp.~blue), and (b) color all vertices of $\ell _{i}^{\prime }$ red,
and for every vertex $u_{i,j}$ of $\ell _{i}$, color $u_{i,j}$ blue
(resp.~green) if $v_{j}$ is colored green (resp.~blue). Therefore, there are
at least $2^{8m}\cdot 2^{n+5m}$ different proper $3$-colorings of~$G_{n,m}$,
in which vertex $v_{0}$ obtains color red.

\medskip

($\Rightarrow $) Suppose first that $\phi $ is satisfiable, and let $\tau $
be a satisfying truth assignment of $\phi $. Given this truth assignment $%
\tau $, we construct a proper $3$-coloring $\chi _{\phi }$ of $H_{\phi }$ as
follows. First assign to $v_{0}$ the color red in $\chi _{\phi }$. By
construction, this coloring $\chi _{\phi }$ will be one of the above $%
2^{8m}\cdot 2^{n+5m}$ proper $3$-colorings of $G_{n,m}$. That is, for every $%
i\in \{1,2,\ldots ,n+5m\}$, either all vertices of row $\ell _{i}$ are red
and all vertices of row $\ell _{i}^{\prime }$ are green or blue in $\chi
_{\phi }$, or vice versa. For every $i\in \{1,2,\ldots ,n+5m\}$, if the
vertices of row $\ell _{i}$ (resp.~of row $\ell _{i}^{\prime }$) are red in $%
\chi _{\phi }$, then we call row $\ell _{i}$ (resp.~row $\ell _{i}^{\prime }$%
) a \emph{red line}. Otherwise, if the vertices of row $\ell _{i}$ (resp. of
row $\ell _{i}^{\prime }$) are not red in $\chi _{\phi }$, then we call row $%
\ell _{i}$ (resp.~row $\ell _{i}^{\prime }$) a \emph{white line}.
Furthermore, for every vertex $u_{i,j}$ (resp.~$w_{i,j}$) of a white line $%
\ell _{i}$ (resp.~of a white line $\ell _{i}^{\prime }$), the color of $%
u_{i,j}$ (resp.~of $w_{i,j}$) in $\chi _{\phi }$ is uniquely determined by
the color of its neighbor $v_{j}$ in $\chi _{\phi }$. That is, if $v_{j}$ is
blue then $u_{i,j}$ (resp.~$w_{i,j}$) is green in $\chi _{\phi }$.
Otherwise, if $v_{j}$ is green then $u_{i,j}$ (resp.~$w_{i,j}$) is blue in $%
\chi _{\phi }$.

Let $x_{i}$ be an arbitrary variable in $\phi $, where $1\leq i\leq n$. If $%
x_{i}=0$ in $\tau $, we define row $\ell _{i}$ to be a red line and row $%
\ell _{i}^{\prime }$ to be a white line in $\chi _{\phi }$, respectively.
Otherwise, if $x_{i}=1$ in $\tau $, we define row $\ell _{i}^{\prime }$ to
be a red line and row $\ell _{i}$ to be a white line in $\chi _{\phi }$,
respectively. Consider an arbitrary clause $\alpha _{k}=(l_{k,1}\vee
l_{k,2}\vee l_{k,3})$ of $\phi $, where $l_{k,p}\in \{x_{i_{k,p}},\overline{%
x_{i_{k,p}}}\}$ for $p\in \{1,2,3\}$. Furthermore consider the copy of the
gadget of Figure~\ref{G-n-m-reduction-fig}(c) that is associated to clause $%
\alpha _{k}$ in $H_{\phi }$. Recall by the construction of $H_{\phi }$ that
the literals $l_{k,1}$, $l_{k,2}$, and $l_{k,3}$ correspond to the vertices $%
q_{k,1}$, $q_{k,2}$, and $q_{k,3}$ of this gadget, respectively. Since $\tau 
$ is a satisfying assignment of $\phi $, at least one of the literals $%
l_{k,1}$, $l_{k,2}$, and $l_{k,3}$ is true in $\tau $. Therefore at least
one of the vertices $q_{k,1}$, $q_{k,2}$, and $q_{k,3}$ belongs to a white
line in $\chi _{\phi }$, i.e.~at least one of them is green or blue in $\chi
_{\phi }$. If one (resp.~two) of the vertices $q_{k,1},q_{k,2},q_{k,3}$
belongs (resp.~belong) to a red line of $\chi _{\phi }$, then we color the
other two (resp.~the other one) green in $\chi _{\phi }$. Otherwise, if all
three of the vertices $q_{k,1}$, $q_{k,2}$, and $q_{k,3}$ belong to a white
line in $\chi _{\phi }$, then we color $q_{k,1},q_{k,2}$ green and vertex $%
q_{k,3}$ blue in $\chi _{\phi }$. For every $p\in \{1,2,3\}$, if we color
vertex $q_{k,p}$ green (resp.~blue) in $\chi _{\phi }$, then we color its
neighbor $v_{8k+1-p}$ blue (resp.~green) in $\chi _{\phi }$, cf.~the
construction of the graph $H_{\phi }$. Otherwise, if we color vertex $%
q_{k,p} $ red in $\chi _{\phi }$, then we color its neighbor $v_{8k+1-p}$
either blue or green in $\chi _{\phi }$ (both choices lead to a proper $3$%
-coloring of $H_{\phi }$).

Once we have colored the vertices $q_{k,1},q_{k,2},q_{k,3}$ with two colors
in total, we extend the coloring of these three vertices to a proper $3$%
-coloring $\chi _{k}$ of the gadget associated to clause $\alpha _{k}$ (cf.
Observation~\ref{gadget-coloring-obs}). Let $p\in \{4,5,6,7,8\}$. If $%
g_{k,p} $ is colored green (resp.~blue) in $\chi _{k}$, then we color its
neighbor $v_{8k+1-p}$ blue (resp.~green) in $\chi _{\phi }$, and we define
row $\ell _{i}$ to be a white line and row $\ell _{i}^{\prime }$ to be a red
line in $\chi _{\phi }$, respectively. Otherwise, if $g_{k,p}$ is colored
red in $\chi _{k}$, then we define row $\ell _{i}$ to be a red line and row $%
\ell _{i}^{\prime }$ to be a white line in $\chi _{\phi }$, respectively.
Furthermore, in this case we color the neighbor $v_{8k+1-p}$ of $g_{k,p}$
either blue or green $\chi _{\phi }$ (both choices lead to a proper $3$%
-coloring of $H_{\phi }$). After performing the above coloring operations
for every clause $\alpha _{k}$, where $1\leq k\leq m$, we obtain a well
defined coloring $\chi _{\phi }$ of all vertices of $H_{\phi }$. Note that
in this coloring $\chi _{\phi }$, all copies of the gadget of Figure~\ref%
{G-n-m-reduction-fig}(c) are properly colored with at most $3$ colors.
Furthermore, it can be easily verified that this coloring $\chi _{\phi }$ is
also a proper $3$-coloring of $G_{n,m}$. Therefore $\chi _{\phi }$ is a
proper $3$-coloring~of~$H_{\phi }$.

\medskip

($\Leftarrow $) Suppose now that $H_{\phi }$ is $3$-colorable and let $\chi
_{\phi }$ be a proper $3$-coloring of $H_{\phi }$. Assume without loss of
generality that vertex $v_{0}$ is colored red in $\chi _{\phi }$. We will
construct a satisfying assignment $\tau $ of $\phi $. Consider an index $%
i\in \{1,2,\ldots ,n+5m\}$ and the rows $\ell _{i}$ and $\ell _{i}^{\prime }$
of the matrix. Suppose that $\ell _{i}$ has at least one vertex $u_{i,j_{1}}$
that is colored red and at least one vertex $u_{i,j_{2}}$ that is colored
blue in $\chi _{\phi }$. Then clearly all vertices of $\ell _{i}^{\prime }$,
except possibly of $w_{i,j_{1}}$ and $w_{i,j_{2}}$, are colored green in $%
\chi _{\phi }$, since they are adjacent to both $u_{i,j_{1}}$ and $%
u_{i,j_{2}}$. Therefore all vertices of $\{v_{1},v_{2},\ldots
,v_{8m}\}\setminus \{v_{j_{1}},v_{j_{2}}\}$ are colored blue in $\chi _{\phi
}$. Thus, all vertices of $(U\cup W)\setminus (U_{j_{1}}\cup U_{j_{2}}\cup
W_{j_{1}}\cup W_{j_{2}})$ are colored either green or red in $\chi _{\phi }$%
. However there exists at least one copy of the gadget of Figure~\ref%
{G-n-m-reduction-fig}(c) on these vertices, by the construction of $H_{\phi
} $. That is, this gadget has a proper coloring (induced by $\chi _{\phi }$)
with the colors green and red. This is a contradiction by Observation~\ref%
{gadget-no-2-coloring-obs}. Thus, there exists no row $\ell _{i}$ with at
least one vertex colored red and another one colored blue in $\chi _{\phi }$%
. Similarly we can prove that there exists no row $\ell _{i}$ (resp.~$\ell
_{i}^{\prime }$) with at least one vertex colored red and another one
colored blue or green in $\chi _{\phi }$. That is, if at least one vertex of
a row $\ell _{i}$ (resp.~$\ell _{i}^{\prime }$) is colored red in $\chi
_{\phi }$, then all vertices of $\ell _{i}$ (resp.~$\ell _{i}^{\prime }$)
are colored red in~$\chi _{\phi }$.

We will now prove that for any $i\in \{1,2,\ldots ,n+5m\}$, at least one
vertex of $\ell _{i}$ or at least one vertex of $\ell _{i}^{\prime }$ is red
in $\chi _{\phi }$. Suppose otherwise that every vertex of the rows $\ell
_{i}$ and $\ell _{i}^{\prime }$ is colored either green or blue in $\chi
_{\phi }$. Then, since the vertices of $\ell _{i}$ and of $\ell _{i}^{\prime
}$ induce a connected bipartite graph, it follows that all vertices of $\ell
_{i}$ are colored green and all vertices of $\ell _{i}^{\prime }$ are
colored blue in $\chi _{\phi }$, or vice versa. Thus, in particular, for
every $j=1,2,\ldots ,8m$, vertex $v_{j}$ is adjacent to one blue and to one
green vertex (one from $\ell _{i}$ and the other one from $\ell _{i}^{\prime
}$). Thus, since $\chi _{\phi }$ is a proper $3$-coloring of $H_{\phi }$, it
follows that $v_{j}$ is colored red in $\chi _{\phi }$. This is a
contradiction, since $v_{0}\in N(v_{j})$ and $v_{0}$ is colored red in $\chi
_{\phi }$ by assumption. Therefore, for any $i\in \{1,2,\ldots ,n+5m\}$, at
least one vertex of $\ell _{i}$ or at least one vertex of $\ell _{i}^{\prime
}$ is colored red in $\chi _{\phi }$.

Summarizing, for every ${i\in \{1,2,\ldots ,n+5m\}}$, either all vertices of
the row $\ell _{i}$ or all vertices of the row $\ell _{i}^{\prime }$ are
colored red in $\chi _{\phi }$. We define now the truth assignment $\tau $
of $\phi $ as follows. For every ${i\in \{1,2,\ldots ,n+5m\}}$, we set ${%
x_{i}=0}$ in $\tau $ if all vertices of $\ell _{i}$ are colored red in $\chi
_{\phi }$; otherwise, if all vertices of $\ell _{i}^{\prime }$ are colored
red in $\chi _{\phi }$, then we set ${x_{i}=1}$ in $\tau $. We will prove
that $\tau $ is a satisfying assignment of $\phi $. Consider a clause $%
\alpha _{k}=(l_{k,1}\vee l_{k,2}\vee l_{k,3})$ of $\phi $, where $l_{k,p}\in
\{x_{i_{k,p}},\overline{x_{i_{k,p}}}\}$ for $p\in \{1,2,3\}$. By the
construction of the graph $H_{\phi }$, this clause corresponds to a copy of
the gadget of Figure~\ref{G-n-m-reduction-fig}(c) in $H_{\phi }$. Thus,
since $\chi _{\phi }$ is a proper $3$-coloring of $H_{\phi }$ by assumption,
the vertices of this gadget are colored with three colors by Observation~\ref%
{gadget-no-2-coloring-obs}. Furthermore, not all three vertices $%
q_{k,1},q_{k,2},q_{k,3}$ have the same color in $\chi _{\phi }$ by
Observation~\ref{gadget-coloring-obs}. Moreover, note by the construction of 
$H_{\phi }$ and by the definition of the truth assignment $\tau $, that $%
l_{k,p}=0$ in $\tau $ if and only if vertex $q_{k,p}$ is colored red in $%
\chi _{\phi }$, where $p\in \{1,2,3\}$. Thus, since the vertices $%
q_{k,1},q_{k,2},q_{k,3}$ are not all red in $\chi _{\phi }$, it follows that
the literals $l_{k,1},l_{k,2},l_{k,3}$ of clause $\alpha _{k}$ are not all
false in $\tau $. Therefore $\alpha _{k}$ is satisfied by $\tau $, and thus $%
\tau $ is a satisfying truth assignment of $\phi $. This completes the proof
of the theorem.\qed
\end{proof}

\medskip

The next theorem follows by Lemma~\ref{H-phi-irreducible-lem} and Theorem~%
\ref{3-col-diam-3-SAT-thm}.

\begin{theorem}
\label{NP-complete-3-col-diam-3-thm}The $3$-coloring problem is NP-complete
on irreducible and triangle-free graphs with diameter~$3$ and radius~$2$.
\end{theorem}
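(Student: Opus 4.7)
The plan is to assemble the structural facts already proved about $H_{\phi}$ into the NP-completeness proof. First I would dispatch NP membership: a proposed $3$-coloring of an $n$-vertex graph is a polynomial-size certificate and is verifiable in linear time by scanning each edge.

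For NP-hardness, I would reduce from 3SAT. Given a 3-CNF formula $\phi$ with $n$ variables and $m$ clauses (assuming without loss of generality that each clause has three distinct literals), construct $H_{\phi}$ exactly as described in the paragraph preceding the theorem. The construction runs in polynomial time: $H_{\phi}$ has $|V_{\phi}| = 2(n+5m)\cdot 8m + 8m + 1 = \Theta(m^{2})$ vertices and its edge set consists of $E_{n,m}$ together with the $10m$ gadget edges, which is polynomial in $|\phi|$.

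To finish, I would invoke the three facts already in hand. Lemma~\ref{G-phi-diameter-3-lem} shows $diam(G_{n,m})=3$ and $rad(G_{n,m})=2$; since $H_{\phi}$ is obtained from $G_{n,m}$ only by adding edges among existing vertices, radius and diameter cannot increase, and as noted in the text surrounding the construction, in fact $diam(H_{\phi})=3$ and $rad(H_{\phi})=2$. Lemma~\ref{H-phi-irreducible-lem} shows that $H_{\phi}$ is irreducible and triangle-free. Theorem~\ref{3-col-diam-3-SAT-thm} establishes that $\phi$ is satisfiable if and only if $H_{\phi}$ is $3$-colorable. Together these yield a polynomial-time many-one reduction from 3SAT to $3$-coloring on irreducible, triangle-free graphs with diameter $3$ and radius $2$, completing the NP-completeness argument.

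There is essentially no substantive obstacle at this stage, since all the real work (membership of the construction in the target graph class, and the satisfiability equivalence) has been done in the preceding lemmas and theorem. The only item worth verifying explicitly is that adding the $10m$ gadget edges does not inadvertently collapse $diam(H_{\phi})$ to $2$; this is immediate because pairs of $U\cup W$-vertices not incident to any gadget edge retain their $G_{n,m}$-distance, and $G_{n,m}$ contains pairs at distance exactly $3$ by Lemma~\ref{G-phi-diameter-3-lem}.
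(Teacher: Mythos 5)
Your proposal is correct and follows essentially the same route as the paper, which obtains the theorem directly by combining the polynomial-time construction of $H_{\phi}$ with Lemma~\ref{H-phi-irreducible-lem}, Theorem~\ref{3-col-diam-3-SAT-thm}, and the already-established facts that $diam(H_{\phi})=3$ and $rad(H_{\phi})=2$. Your extra check that the $10m$ gadget edges do not collapse the diameter to $2$ is a sensible (and valid) addition, since vertices of $U\cup W$ in distinct rows and distinct columns that avoid all gadget vertices keep their $G_{n,m}$-neighborhoods and hence remain at distance exactly $3$.
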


\subsection{Lower time complexity bounds and general NP-completeness results 
\label{diam-3-amplification-subsec}}

In this section we present our three different amplification techniques of
the reduction of Theorem~\ref{3-col-diam-3-SAT-thm}. In particular, using
these three amplifications we extend for every~${\varepsilon \in \lbrack 0,1)%
}$ the result of Theorem~\ref{NP-complete-3-col-diam-3-thm} (by providing
both NP-completeness and lower time complexity bounds) to irreducible
triangle-free graphs with diameter~$3$ and radius~$2$ and minimum degree ${%
\delta =\Theta (n^{\varepsilon })}$. We use our first amplification
technique in Theorems~\ref{NP-complete-3-col-diam-3-large-min-degree-thm}
and~\ref{ETH-large-min-degree-thm}, our second one in Theorems~\ref%
{NP-complete-3-col-diam-3-small-min-degree-thm} and~\ref%
{ETH-small-min-degree-thm}, and our third one in~Theorem~\ref%
{ETH-medium-min-degree-thm}.

\begin{theorem}
\label{NP-complete-3-col-diam-3-large-min-degree-thm}Let ${G=(V,E)}$ be an
irreducible and triangle-free graph with diameter~$3$ and radius~$2$. If the
minimum degree of $G$ is ${\delta (G)=\Theta (|V|^{\varepsilon })}$, where ${%
\varepsilon \in \lbrack \frac{1}{2},1)}$, then it is NP-complete to decide
whether~$G$ is~$3$-colorable.
\end{theorem}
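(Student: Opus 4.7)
The plan is to base the proof on Theorem~\ref{NP-complete-3-col-diam-3-thm} and amplify the reduction of Theorem~\ref{3-col-diam-3-SAT-thm}. Observe first that the graph $H_{\phi}$ produced by that reduction from a 3-CNF formula $\phi$ with $n$ variables and $m$ clauses already has $|V_{\phi}|=\Theta(m^{2})$ vertices and minimum degree $\delta(H_{\phi})=\Theta(m)$---the minimum being attained simultaneously at each $v_{j}$ (of degree $1+2(n+5m)$) and at each $u_{i,j}$, $w_{i,j}$ (of degree $8m$). Hence $\delta(H_{\phi})=\Theta(|V_{\phi}|^{1/2})$ and Theorem~\ref{NP-complete-3-col-diam-3-thm} directly gives the endpoint case $\varepsilon=\tfrac12$ of the claim.

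For $\varepsilon\in(\tfrac12,1)$, the plan is to polynomially amplify $H_{\phi}$ into a larger irreducible triangle-free graph $H_{\phi}^{(\varepsilon)}$ of diameter~$3$ and radius~$2$, still $3$-colorable if and only if $\phi$ is satisfiable, with $|V(H_{\phi}^{(\varepsilon)})|=\Theta(m^{2})$ but with minimum degree boosted from $\Theta(m)$ to $\Theta(m^{2\varepsilon})=\Theta(|V(H_{\phi}^{(\varepsilon)})|^{\varepsilon})$. Concretely, I would attach to $H_{\phi}$ a new family $Z$ of $\Theta(m^{2})$ booster vertices, each joined to a prescribed independent subset $S_{z}\subseteq V_{\phi}$ of size $\Theta(m^{2\varepsilon})$. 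The subsets $S_{z}$ are drawn from the row/column partition of the matrix arrangement of $U\cup W$ (which, by Observation~\ref{no-adj-pair-in-line-obs}, provides an abundant supply of large independent sets in $H_{\phi}$) and organized into a balanced block design, so that each old vertex of $H_{\phi}$ lies in $\Theta(m^{2\varepsilon})$ of the subsets $S_{z}$. A double-counting argument then simultaneously boosts every old vertex's degree by $\Theta(m^{2\varepsilon})$ and gives every $z\in Z$ degree $\Theta(m^{2\varepsilon})$, yielding the target $\delta(H_{\phi}^{(\varepsilon)})=\Theta(m^{2\varepsilon})$.

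Granting this design, triangle-freeness follows since each $S_{z}$ is independent in $H_{\phi}$ and $Z$ is an independent set; irreducibility follows by choosing the $S_{z}$'s pairwise distinct and distinct from every closed neighborhood in $H_{\phi}$; the radius~$2$ condition follows by requiring each $S_{z}$ to intersect $\{v_{0},v_{1},\dots,v_{8m}\}$ (so that each $z$ stays at distance $\le 2$ from $v_{0}$); and the $3$-colorability equivalence follows by choosing each $S_{z}$ compatibly with the ``red row / white row'' structure forced in every $3$-coloring of $H_{\phi}$ (cf.\ the proof of Theorem~\ref{3-col-diam-3-SAT-thm}), so that in the specific $3$-coloring built from a satisfying assignment of $\phi$ each $S_{z}$ sees at most two of the three colors and can be properly extended to $z$. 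The main obstacle in turning this plan into a proof is the simultaneous control of all these constraints: the $v_{j}$-vertices, being degree-bottlenecks, cannot gain any new neighbor inside $V_{\phi}$ without creating a triangle through $U_{j}\cup W_{j}$, so their boost must come entirely from $Z$, and the $S_{z}$'s containing a given $v_{j}$ must avoid the entire $j$-th column of the matrix. Likewise, the $S_{z}$'s must be drawn carefully (e.g.\ within a single side of the row/column bipartition, with only restricted mixing across sides) so that the at-most-two-colors condition holds structurally. Naive alternatives---direct vertex blow-ups, apex-style vertices adjacent to many $v_{j}$'s, or pendant additions---fail by producing twin/sibling pairs, by creating triangles through $v_{0}$, or by not boosting the minimum degree sufficiently; the combinatorial richness of the matrix row/column structure is what allows all five constraints to be threaded at once.
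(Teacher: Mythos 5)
Your base case is fine: $H_{\phi}$ does have $\Theta(m^{2})$ vertices and minimum degree $\Theta(m)$, so Theorem~\ref{3-col-diam-3-SAT-thm} already gives the case $\varepsilon=\tfrac12$. For $\varepsilon\in(\tfrac12,1)$, however, what you have written is a list of desiderata, not a construction, and the desiderata are in genuine tension for the parameters you insist on. Every booster $z$ must end up with degree $\Omega(\delta)=\Omega(m^{2\varepsilon})=\omega(m)$, so $S_{z}$ cannot fit inside a single row ($8m$ vertices), a single column ($2(n+5m)$ vertices), the set $\{v_{1},\dots,v_{8m}\}$, or any union of boundedly many of these; it must straddle $\omega(1)$ rows \emph{and} $\omega(1)$ columns of the matrix. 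But in the colorings of $H_{\phi}$ arising from a satisfying assignment $\tau$ (the only colorings you are guaranteed to have in the forward direction), which rows are red and which of blue/green each $v_{j}$ receives are dictated by $\tau$ and by the gadget colorings, and cannot be fixed when you build the graph. A set meeting both a red row and a white row, and meeting two columns $j,j'$ with $\chi(v_{j})\neq\chi(v_{j'})$, sees all three colors, and then $z$ cannot be colored at all. So the "choose each $S_{z}$ compatibly with the red/white structure" step --- which is the entire content of the proof --- is not something you can do obliviously of $\tau$, and you give no mechanism for doing it. (Your irreducibility remark is also too weak: "pairwise distinct" sets do not preclude $S_{z}\subseteq S_{z'}$ or $S_{z}\subseteq N(v_{j})$, either of which triggers Reduction Rule~\ref{rule2}.)

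The difficulty is self-inflicted: nothing in the statement forces $|V(H_{\phi}^{(\varepsilon)})|=\Theta(m^{2})$. The paper's proof drops that constraint and performs a purely local blow-up: each $v_{j}$ is replaced by two sets $A_{j},B_{j}$ of size $k_{0}=m^{\varepsilon/(1-\varepsilon)}$ inducing a complete bipartite graph minus a perfect matching, with $A_{j}$ inheriting $v_{j}$'s adjacencies to $\{v_{0}\}\cup U_{j}\cup W_{j}$ and $v_{0}$ joined to all of $A_{j}\cup B_{j}$. Then $N=\Theta(m^{1/(1-\varepsilon)})$, every degree is $\Omega(k_{0})$, so $\delta=\Theta(N^{\varepsilon})$; triangle-freeness, irreducibility, diameter~$3$, radius~$2$, and the equivalence with satisfiability of $\phi$ (via $H_{\phi}$ being an induced subgraph, plus coloring $A_{j}$ with $v_{j}$'s color and $B_{j}$ with the remaining non-red color) are all immediate. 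If you want to salvage your route you would need to exhibit the block design explicitly and prove the two-color property for every satisfying assignment; as it stands the proposal does not establish the theorem.
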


\begin{proof}
Let $\phi $ be a boolean formula with $n$ variables and $m$ clauses. Using
the reduction of Section~\ref{diam-3-NP-c-subsec}, we construct from the
formula $\phi $ the irreducible and triangle-free graph $H_{\phi }=(V_{\phi
},E_{\phi })$, such that $H_{\phi }$ has diameter~$3$ and radius~$2$.
Furthermore $|V_{\phi }|=\Theta (nm)$ by the construction of $H_{\phi }$.
Then, $\phi $ is satisfiable if and only if $H_{\phi }$ is $3$-colorable by
Theorem~\ref{3-col-diam-3-SAT-thm}.

Let now ${\varepsilon \in \lbrack \frac{1}{2},1)}$. Define $\varepsilon _{0}=%
\frac{\varepsilon }{1-\varepsilon }$ and $k_{0}=m^{\varepsilon _{0}}$. Since 
${\varepsilon \in \lbrack \frac{1}{2},1)}$ by assumption, it follows that $%
\varepsilon _{0}\geq 1$. We construct now from the graph $H_{\phi }$ the
irreducible graph $H_{1}(\phi ,\varepsilon )$ with diameter~$3$ and radius~$%
2 $, as follows. For every vertex $v_{j}$ in $H_{\phi }$, where $j\in
\{1,2,\ldots ,8m\}$, remove $v_{j}$ and introduce the $2k_{0}$ new vertices $%
A_{j}=\{v_{j,1}^{\prime },v_{j,2}^{\prime },\ldots ,v_{j,k_{0}}^{\prime }\}$
and $B_{j}=\{v_{j,1}^{\prime \prime },v_{j,2}^{\prime \prime },\ldots
,v_{j,k_{0}}^{\prime \prime }\}$. The vertices of $A_{j}$ and of $B_{j}$
induce two independent sets in $H_{1}(\phi ,\varepsilon )$. We then add
between the vertices of $A_{j}$ and of $B_{j}$ all possible edges, except
those of $\{v_{j,p}^{\prime }v_{j,p}^{\prime \prime }:1\leq p\leq k_{0}\}$.
That is,we add $k_{0}^{2}-k_{0}$ edges between the vertices of $A_{j}$ and $%
B_{j}$, such that they induce a complete bipartite graph without a perfect
matching between $A_{j}$ and $B_{j}$. This replacement of vertex $v_{j}$ by
the vertex sets~$A_{j}$ and~$B_{j}$ in~$H_{1}(\phi ,\varepsilon )$ is
illustrated in Figure~\ref{first-amplification-construction-fig}.
Furthermore, we add all $2k_{0}$ edges between $v_{0}$ and the vertices of $%
A_{j}\cup B_{j}$. Finally, we add all $2(n+5m)\cdot k_{0}$ edges between the 
$2(n+5m)$ vertices of $U_{j}\cup W_{j}$ (i.e.~the $j$th column of the matrix
arrangement of the vertices of $U\cup V$) and the $k_{0}$ vertices of $A_{j}$%
. Denote the resulting graph by $H_{1}(\phi ,\varepsilon )$.

\begin{figure}[h!tb]
\centering 
\subfigure[]{ \label{first-amplification-construction-fig-1}
\includegraphics[scale=0.41]{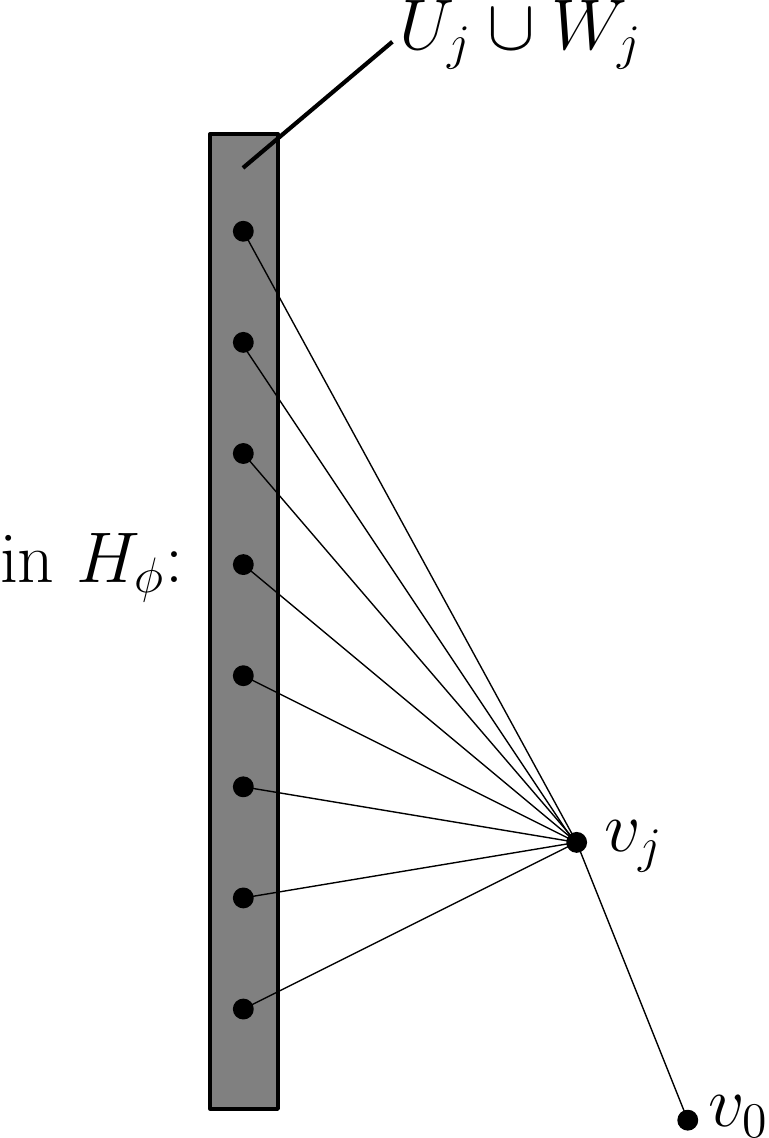}} 
\hspace{0.8cm} 
\subfigure[]{ \label{first-amplification-construction-fig-2}
\includegraphics[scale=0.41]{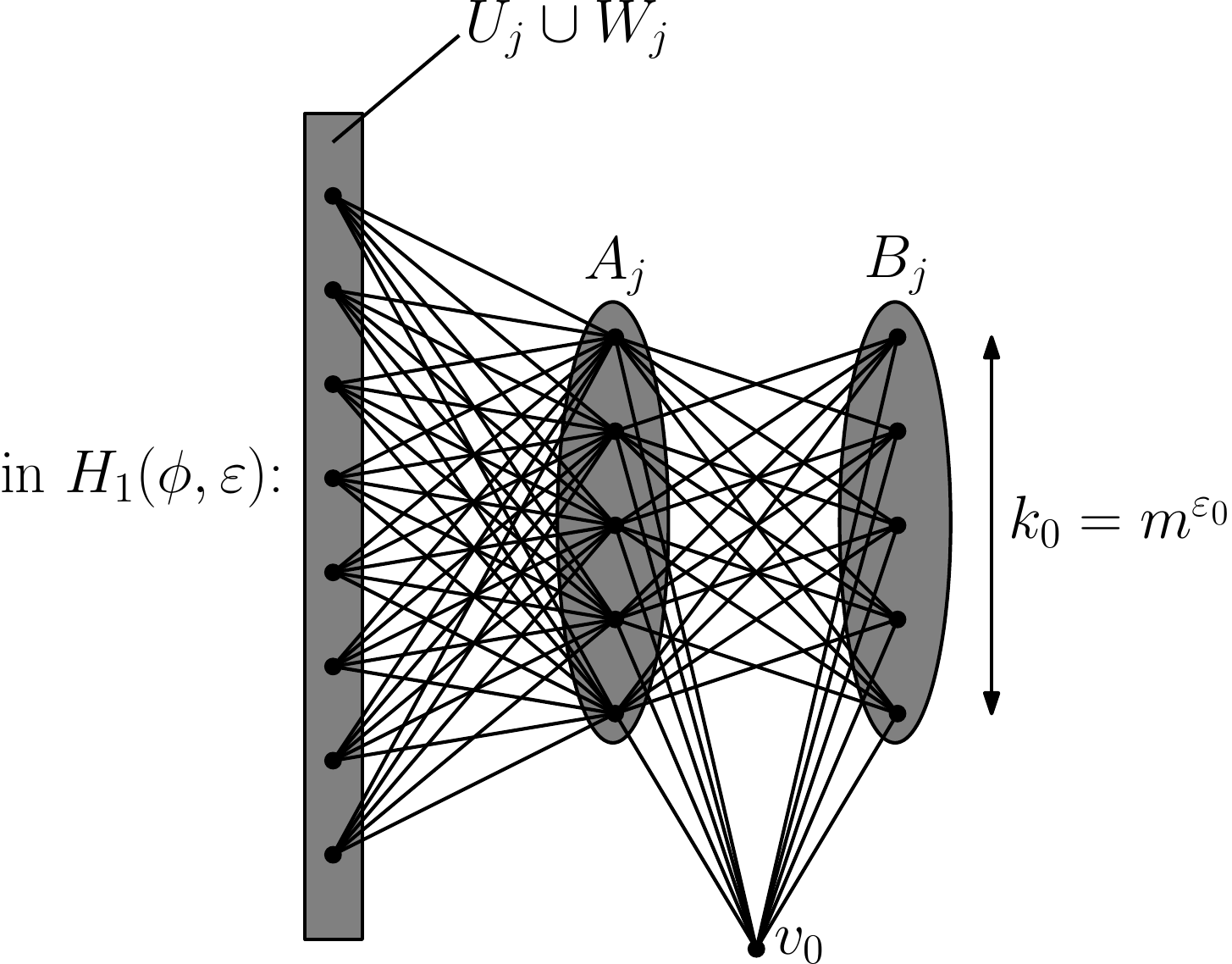}}
\caption{(a)~The vertex $v_{j}$ with its neighbors $N(v_{j})=\{v_{0}\} \cup
U_{j} \cup W_{j}$ in the graph $H_{\protect\phi}$ and~(b)~the vertex sets~$%
A_{j}$ and~$B_{j}$ that replace vertex $v_{j}$ in the graph $H_{1}(\protect%
\phi,\protect\varepsilon)$.}
\label{first-amplification-construction-fig}
\end{figure}

Observe that, by this construction, for every $j\in \{1,2,\ldots ,8m\}$, all
neighbors of vertex $v_{j}$ in the graph $H_{\phi }$ are included in the
neighborhood of every vertex $v_{j,p}^{\prime }$ of $A_{j}$ in the graph $%
H_{1}(\phi ,\varepsilon )$, where $1\leq p\leq k_{0}$. In particular, $%
H_{\phi }$ is an induced subgraph of $H_{1}(\phi ,\varepsilon )$: if we
remove from $H_{1}(\phi ,\varepsilon )$ the vertices of $B_{j}\cup
A_{j}\setminus \{v_{j,1}^{\prime }\}$, for every $j\in \{1,2,\ldots ,8m\}$,
we obtain a graph isomorphic to $H_{\phi }$, where $v_{j,1}^{\prime }$ of $%
H_{1}(\phi ,\varepsilon )$ corresponds to vertex $v_{j}$ of $H_{\phi }$, for
every $j\in \{1,2,\ldots ,8m\}$.

Note that, similarly to $H_{\phi }$, the graph $H_{1}(\phi ,\varepsilon )$
has radius~$2$, since $d(v_{0},u)\leq 2$ in $H_{1}(\phi ,\varepsilon )$ for
every vertex $u$ of $H_{1}(\phi ,\varepsilon )-\{v_{0}\}$. We now prove that 
$H_{1}(\phi ,\varepsilon )$ has also diameter~$3$. First note that the
distance between any two vertices of $\cup _{j=1}^{8m}A_{j}\cup
_{j=1}^{8m}B_{j}$ is at most $2$, since they all have $v_{0}$ as common
neighbor. Consider now an arbitrary vertex $z\in \cup _{j=1}^{8m}A_{j}\cup
_{j=1}^{8m}B_{j}$ and an arbitrary vertex $u_{i,j}\in U$ (resp.~$w_{i,j}\in
W $). If $z$ and $u_{i,j}$ are not adjacent in $H_{1}(\phi ,\varepsilon )$,
there exists the path $(z,v_{0},v_{j,1}^{\prime },u_{i,j})$ (resp.~the path $%
(z,v_{0},v_{j,1}^{\prime },w_{i,j})$) of length $3$ between $z$ and $u_{i,j}$
(resp.~$w_{i,j}$). Therefore $d(z,u)\leq 3$ in $H_{1}(\phi ,\varepsilon )$,
for every vertex $z\in \cup _{j=1}^{8m}A_{j}\cup _{j=1}^{8m}B_{j}$ and every
vertex $u\neq z$ of $H_{1}(\phi ,\varepsilon )$. Moreover, since $H_{\phi }$
is an induced subgraph of $H_{1}(\phi ,\varepsilon )$, it follows by Lemma~%
\ref{G-phi-diameter-3-lem} that also $d(u,u^{\prime })\leq 3$ in $H_{1}(\phi
,\varepsilon )$, for every pair of vertices $u,u^{\prime }\notin \cup
_{j=1}^{8m}A_{j}\cup _{j=1}^{8m}B_{j}$ of $H_{1}(\phi ,\varepsilon )$. Thus $%
H_{1}(\phi ,\varepsilon )$ has diameter~$3$.

Recall now that for every $j\in \{1,2,\ldots ,8m\}$, vertex $v_{j}$ of $%
H_{\phi }$ has been replaced by the vertices of $A_{j}\cup B_{j}$ in $%
H_{1}(\phi ,\varepsilon )$. Furthermore, recall that the vertices of $%
A_{j}\cup B_{j}$ induce in $H_{1}(\phi ,\varepsilon )$ a complete bipartite
graph without a perfect matching between $A_{j}$ and $B_{j}$. Therefore
there exists no pair of sibling vertices in $A_{j}\cup B_{j}$, for every $%
j\in \{1,2,\ldots ,8m\}$. Thus, since $H_{\phi }$ is irreducible by Lemma~%
\ref{H-phi-irreducible-lem}, it follows that $H_{1}(\phi ,\varepsilon )$ is
irreducible as well.

We now prove that $H_{1}(\phi ,\varepsilon )$ is $3$-colorable if and only
if $H_{\phi }$ is $3$-colorable. Suppose first that $H_{1}(\phi ,\varepsilon
)$ is $3$-colorable. Then, since $H_{\phi }$ is an induced subgraph of $%
H_{1}(\phi ,\varepsilon )$, it follows immediately that $H_{\phi }$ is also $%
3$-colorable. Suppose first that $H_{\phi }$ is $3$-colorable, and let $\chi 
$ be a proper $3$-coloring of $H_{\phi }$. Assume without loss of generality
that $v_{0}$ is colored red in $\chi $. We will extend $\chi $ into a proper 
$3$-coloring of $H_{1}(\phi ,\varepsilon )$ as follows. Consider the vertex $%
v_{j}$ of $H_{\phi }$, where $1\leq j\leq 8m$. Since $v_{0}$ is colored red
in $\chi $, it follows that $v_{j}$ is colored either blue or green in $\chi 
$. If $v_{j}$ is colored green in $\chi $, then we color in $H_{1}(\phi
,\varepsilon )$ all vertices of $A_{j}$ green and all vertices of $B_{j}$
blue. Otherwise, if $v_{j}$ is colored blue in $\chi $, then we color in $%
H_{1}(\phi ,\varepsilon )$ all vertices of $A_{j}$ blue and all vertices of $%
B_{j}$ green. It is now straightforward to check that the resulting $3$%
-coloring of $H_{1}(\phi ,\varepsilon )$ is proper, i.e.~that $H_{1}(\phi
,\varepsilon )$ is $3$-colorable. That is, $H_{1}(\phi ,\varepsilon )$ is $3$%
-colorable if and only if $H_{\phi }$ is $3$-colorable. Therefore Theorem~%
\ref{3-col-diam-3-SAT-thm} implies that the formula $\phi $ is satisfiable
if and only if $H_{1}(\phi ,\varepsilon )$ is $3$-colorable.

By construction, the graph $H_{1}(\phi ,\varepsilon )$ has $N={2({n+5m}%
)8m+8m\cdot 2k_{0}+1}$ vertices, where $k_{0}=m^{\varepsilon _{0}}$. Thus,
since $m=\Omega (n)$ and $\varepsilon _{0}\geq 1$, it follows that $N=\Theta
(m^{1+\varepsilon _{0}})$. Therefore $m=\Theta (N^{\frac{1}{1+\varepsilon
_{0}}})$, where $N$ is the number of vertices in $H_{1}(\phi ,\varepsilon )$%
. Furthermore, the degree of $v_{0}$ in $H_{1}(\phi ,\varepsilon )$ is~${%
\Theta (m\cdot k_{0})=\Theta (m^{1+\varepsilon _{0}})}$, the degree of every
vertex $v_{j,p}^{\prime }$ in $H_{1}(\phi ,\varepsilon )$ is~${\Theta
(n+m+k_{0})=\Theta (m^{\varepsilon _{0}})}$, the degree of every vertex~$%
v_{j,p}^{\prime \prime }$ in $H_{1}(\phi ,\varepsilon )$ is $\Theta
(k_{0})=\Theta (m^{\varepsilon _{0}})$, and the degree of every vertex~$%
u_{i,j}$ (resp.~$w_{i,j}$) in $H_{1}(\phi ,\varepsilon )$ is~$\Theta
(m+k_{0})=\Theta (m^{\varepsilon _{0}})$. Therefore the minimum degree of $%
H_{1}(\phi ,\varepsilon )$ is~$\delta =\Theta (m^{\varepsilon _{0}})$. Thus,
since~${m=\Theta (N^{\frac{1}{1+\varepsilon _{0}}})}$, it follows that~$%
\delta =\Theta (N^{\frac{\varepsilon _{0}}{1+\varepsilon _{0}}})$, i.e.~$%
\delta =\Theta (N^{\varepsilon })$.

Summarizing, for every ${\varepsilon \in \lbrack \frac{1}{2},1)}$ and for
every formula $\phi $ with $n$ variables and $m$ clauses, we can construct
in polynomial time a graph $H_{1}(\phi ,\varepsilon )$ with ${N=\Theta
(m^{1+\varepsilon _{0}})}$ (i.e.~$N={\Theta (m^{\frac{1}{1-\varepsilon }})}$%
) vertices and minimum degree~$\delta =\Theta (m^{\varepsilon _{0}})$ (i.e.~$%
{\delta ={\Theta (N^{\varepsilon })}}$), such that $H_{1}(\phi ,\varepsilon
) $ is $3$-colorable if and only if $\phi $ is satisfiable. Moreover, the
constructed graph $H_{1}(\phi ,\varepsilon )$ is irreducible and
triangle-free, and it has diameter~$3$ and radius~$2$. This completes the
proof of the~theorem.\qed
\end{proof}

\medskip

In the next theorem we provide an amplification of the reduction of Theorem~%
\ref{3-col-diam-3-SAT-thm} to the case of graphs with minimum degree ${%
\delta =\Theta (|V|^{\varepsilon })}$, where ${\varepsilon \in \lbrack 0,%
\frac{1}{2})}$.

\begin{theorem}
\label{NP-complete-3-col-diam-3-small-min-degree-thm}Let ${G=(V,E)}$ be an
irreducible and triangle-free graph with diameter~$3$ and radius~$2$. If the
minimum degree of $G$ is ${\delta (G)=\Theta (|V|^{\varepsilon })}$, where ${%
\varepsilon \in \lbrack 0,\frac{1}{2})}$, then it is NP-complete to decide
whether~$G$ is~$3$-colorable.
\end{theorem}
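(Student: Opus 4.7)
The plan is to follow the template of Theorem~\ref{NP-complete-3-col-diam-3-large-min-degree-thm}, but with a different amplification of $H_\phi$ tailored to the regime $\varepsilon < \tfrac{1}{2}$. The base graph $H_\phi$ from Section~\ref{diam-3-NP-c-subsec} already has $\Theta(m^2)$ vertices and minimum degree $\Theta(m)$, which corresponds precisely to the borderline $\varepsilon = \tfrac{1}{2}$. To reach a smaller $\varepsilon$, I would inflate the vertex count while keeping the minimum degree essentially unchanged, rather than inflating each $v_j$ as was done in the first amplification (that strategy only boosts $\delta$ and is useful only for $\varepsilon \geq \tfrac{1}{2}$).

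Concretely, given $\varepsilon \in (0, \tfrac{1}{2})$, I would set $T := \lceil m^{1/\varepsilon - 1} \rceil$, which is polynomial in $m$. Starting from $H_\phi$, add $T$ extra ``padding'' row-pairs $(\ell_i, \ell_i')$ for $i = n+5m+1, \ldots, n+5m+T$, each containing $8m$ new vertices per row, and insert exactly the same intra-matrix edges as the rows of $G_{n,m}$: every $v_j$ is made adjacent to both $u_{i,j}$ and $w_{i,j}$, and the two rows $\ell_i$ and $\ell_i'$ induce a complete bipartite graph minus the perfect matching $\{u_{i,j} w_{i,j}\}_{j=1}^{8m}$. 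None of the clause-gadget edges touches the new vertices. Call the resulting graph $H_2(\phi, \varepsilon)$.

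The required verifications then essentially copy those of Section~\ref{diam-3-NP-c-subsec}. Diameter $3$ and radius $2$ follow by replaying the distance arguments of Lemma~\ref{G-phi-diameter-3-lem} on the enlarged matrix; triangle-freeness and irreducibility follow as in Lemma~\ref{H-phi-irreducible-lem}, since no new sibling pair or triangle is created (the new rows inherit the bipartite complete-minus-matching structure, and distinct rows have distinct neighborhoods on the opposite side of the matrix). Moreover $H_\phi$ is an induced subgraph of $H_2(\phi, \varepsilon)$, so any proper $3$-coloring of $H_\phi$ (with $v_0$ red) extends canonically to the padded rows by coloring each new $\ell_i$ entirely red and each $w_{i,j} \in \ell_i'$ with the unique element of $\{\text{blue}, \text{green}\}$ distinct from the color of $v_j$. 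Combined with Theorem~\ref{3-col-diam-3-SAT-thm}, this yields that $\phi$ is satisfiable if and only if $H_2(\phi, \varepsilon)$ is $3$-colorable. Counting vertices gives $N = \Theta(mT) = \Theta(m^{1/\varepsilon})$, while the minimum degree is still attained by $v_0$ with $\deg(v_0) = 8m = \Theta(m) = \Theta(N^\varepsilon)$, as desired.

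The main obstacle I anticipate is the endpoint $\varepsilon = 0$, where $T = m^{1/\varepsilon - 1}$ diverges and row-padding alone cannot drive $\delta$ down to $\Theta(1)$. Handling that case requires additionally reducing the degrees of the globally high-degree vertices $v_0$ and $v_1, \ldots, v_{8m}$, for instance by replacing each such vertex by a bounded-degree tree-like gadget that spreads out its neighborhood while preserving the 3-coloring constraints, and checking that diameter $3$, radius $2$, triangle-freeness, irreducibility, and the equivalence with 3SAT all survive this replacement. That bookkeeping at the endpoint is the one delicate part of the argument; for any fixed $\varepsilon$ strictly inside $(0, \tfrac{1}{2})$, the row-padding construction above proceeds routinely.
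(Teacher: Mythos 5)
For every fixed $\varepsilon\in(0,\tfrac{1}{2})$ your row-padding construction is correct, and it is a genuinely different amplification from the one the paper uses for this theorem. Adding $T=\Theta(m^{1/\varepsilon-1})$ new row-pairs with the same complete-bipartite-minus-perfect-matching structure keeps $\deg(v_{0})=8m$ and gives every new matrix vertex degree exactly $8m$, while pushing only the $v_{j}$ up to $\Theta(T)$; hence $N=\Theta(mT)=\Theta(m^{1/\varepsilon})$ and $\delta=\Theta(m)=\Theta(N^{\varepsilon})$, and the distance, triangle-freeness, siblings-freeness and coloring-extension arguments do replay verbatim from Lemmas~\ref{G-phi-diameter-3-lem} and~\ref{H-phi-irreducible-lem}. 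Your construction is in fact the row-wise twin of the paper's \emph{third} amplification $H_{2}(\phi,\varepsilon)$ in Theorem~\ref{ETH-medium-min-degree-thm} (which pads columns rather than rows). The paper's proof of the present theorem takes a different route: it attaches to \emph{every} vertex $v_{i}$ of a hard instance $G_{1}$ (with $\delta(G_{1})=\Theta(\sqrt{n})$) a small pendant gadget $A_{i}\cup B_{i}$ of $2k_{0}=2n^{\varepsilon/(1-\varepsilon)}$ new vertices inducing a complete bipartite graph minus a perfect matching, joins $v_{i}$ to $A_{i}$, and adds one new hub $v_{0}$ adjacent to all gadget vertices. There the minimum degree is realized by the \emph{new} vertices, of degree $\Theta(k_{0})$, and the hub guarantees radius~$2$ and diameter~$3$; this buys exactly the endpoint your construction cannot reach.

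That endpoint, $\varepsilon=0$, is a genuine gap in your argument, and it belongs to the claimed range $[0,\tfrac{1}{2})$. Row padding can never push $\delta$ below $\Theta(m)$, since every vertex of the padded matrix keeps degree $\Theta(m)$, and your proposed fallback --- replacing $v_{0}$ and the $v_{j}$ by bounded-degree tree-like gadgets --- attacks the problem from the wrong side and, as described, would fail: a bounded-degree replacement of a hub with $\Omega(m)$ neighbours must have depth $\Omega(\log m)$, and it is precisely the property that $v_{0}$ (respectively each $v_{j}$) lies within distance~$2$ of everything that underpins $rad=2$ and $diam=3$ in all of these constructions; no concrete gadget preserving these distances is given. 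The clean resolution is the paper's: do not lower the degree of any existing vertex, but \emph{introduce new} low-degree vertices, namely constant-size pendant gadgets $A_{i}\cup B_{i}$ with $k_{0}=O(1)$ (one needs $k_{0}\geq 2$ so that no sibling pair is created) together with the new hub $v_{0}$ adjacent to all of them, which keeps every vertex within distance~$2$ of $v_{0}$ while creating minimum degree $\Theta(1)$. To complete your proof you would either have to add such a construction for $\varepsilon=0$, or simply adopt the paper's single construction, which covers all of $[0,\tfrac{1}{2})$ uniformly.
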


\begin{proof}
Let $G_{1}=(V_{1},E_{1})$ be an arbitrary irreducible and triangle-free
graph with diameter~$3$ and radius~$2$, such that $G_{1}$ has $n$ vertices
and minimum degree $\delta (G_{1})=\Theta (\sqrt{n})$. Note that such a
graph $G_{1}$ exists by the construction of $H_{\phi }$ in Section \ref%
{diam-3-NP-c-subsec} (see also Theorems \ref{3-col-diam-3-SAT-thm} and~\ref%
{NP-complete-3-col-diam-3-large-min-degree-thm}). For simplicity, we
arbitrarily enumerate the vertices of $G_{1}$ as $v_{1},v_{2},\ldots ,v_{n}$%
. Let ${\varepsilon \in \lbrack 0,\frac{1}{2})}$. Define $\varepsilon _{0}=%
\frac{\varepsilon }{1-\varepsilon }$ and $k_{0}=n^{\varepsilon _{0}}$. Since 
${\varepsilon \in \lbrack 0,\frac{1}{2})}$ by assumption, it follows that $%
\varepsilon _{0}\in \lbrack 0,1)$. We now construct from the graph $G_{1}$
the irreducible graph $G_{2}(\varepsilon )$ with diameter~$3$ and radius~$2$
as follows. For every $i\in \{1,2,\ldots ,n\}$, we add the $2k_{0}$ new
vertices $A_{i}=\{v_{i,1}^{\prime },v_{i,2}^{\prime },\ldots
,v_{i,k_{0}}^{\prime }\}$ and $B_{i}=\{v_{i,1}^{\prime \prime
},v_{i,2}^{\prime \prime },\ldots ,v_{i,k_{0}}^{\prime \prime }\}$. The
vertices of $A_{i}$ and of $B_{i}$ induce two independent sets in $%
G_{2}(\varepsilon )$. We then add between the vertices of $A_{i}$ and of $%
B_{i}$ all possible edges, except those of $\{v_{i,\ell }^{\prime }v_{i,\ell
}^{\prime \prime }:1\leq \ell \leq k_{0}\}$. That is,we add $k_{0}^{2}-k_{0}$
edges between the vertices of $A_{i}$ and $B_{i}$, such that they induce a
complete bipartite graph without a perfect matching between $A_{i}$ and $%
B_{i}$. Furthermore, for every $i\in \{1,2,\ldots ,n\}$, we add all possible 
$k_{0}$ edges between $v_{i}$ and the vertices of $A_{i}$. Finally, we
introduce a new vertex $v_{0}$ and we add all $n\cdot 2k_{0}$ possible edges
between $v_{0}$ and the vertices of $\cup _{i=1}^{n}A_{i}\cup
_{i=1}^{n}B_{i} $. Denote the resulting graph by $G_{2}(\varepsilon )$.

Note that, by construction, $G_{1}$ is an induced subgraph of $%
G_{2}(\varepsilon )$. Furthermore, $G_{2}(\varepsilon )$ has $N=n+n\cdot
2k_{0}+1$ vertices, and thus $N=\Theta (n^{1+\varepsilon _{0}})$. Therefore $%
n=\Theta (N^{\frac{1}{1+\varepsilon _{0}}})$, where $N$ is the number of
vertices in $G_{2}(\varepsilon )$. Furthermore, the degree of $v_{0}$ in $%
G_{2}(\varepsilon )$ is~${\Theta (n\cdot k_{0})=\Theta (n^{1+\varepsilon
_{0}})}$, the degree of every vertex $v_{i,\ell }^{\prime }$ (resp.~$%
v_{i,\ell }^{\prime \prime }$) in $G_{2}(\varepsilon )$ is~${\Theta
(k_{0})=\Theta (n^{\varepsilon _{0}})}$, the degree of every vertex~$%
v_{i}\in V_{1}$ in $G_{2}(\varepsilon )$ is at least $\delta
(G_{1})+k_{0}=\Theta (\sqrt{n}+n{^{\varepsilon _{0}}})$. Therefore, for
every $\varepsilon _{0}\in \lbrack 0,1)$, the minimum degree of $%
G_{2}(\varepsilon )$ is $\delta =\Theta ({n^{\varepsilon _{0}}})$. Thus,
since~$n=\Theta (N^{\frac{1}{1+\varepsilon _{0}}})$, it follows that~$\delta
=\Theta (N^{\frac{\varepsilon _{0}}{1+\varepsilon _{0}}})$, i.e.~$\delta
=\Theta (N^{\varepsilon })$.

Note that the graph $G_{2}(\varepsilon )$ has radius~$2$, since $%
d(v_{0},u)\leq 2$ in $G_{2}(\varepsilon )$ for every vertex $u$ of $%
G_{2}(\varepsilon )-\{v_{0}\}$. We now prove that $G_{2}(\varepsilon )$ has
also diameter~$3$. First note that the distance between any two vertices of $%
\cup _{i=1}^{n}A_{i}\cup _{i=1}^{n}B_{i}$ is at most $2$, since they all
have $v_{0}$ as common neighbor. Consider now an arbitrary vertex $z\in \cup
_{i=1}^{n}A_{i}\cup _{i=1}^{n}B_{i}$ and an arbitrary vertex $v_{i}\in V_{1}$%
, where $1\leq i\leq n$. If $z$ and $v_{i}$ are not adjacent in $%
G_{2}(\varepsilon )$, there exists the path $(z,v_{0},v_{j,1}^{\prime
},v_{i})$ of length $3$ between $z$ and $v_{i}$. Therefore $d(z,u)\leq 3$ in 
$G_{2}(\varepsilon )$, for every vertex $z\in \cup _{i=1}^{n}A_{i}\cup
_{i=1}^{n}B_{i}$ and every vertex $u$ of $G_{2}(\varepsilon )-\{z\}$.
Moreover, since $diam(G_{1})=3$ by assumption and $G_{1}$ is an induced
subgraph of $G_{2}(\varepsilon )$, it follows that also $d(v_{i},v_{j})\leq
3 $ in $G_{2}(\varepsilon )$ for every pair of vertices $v_{i},v_{j}\in
V_{1} $. Therefore $G_{2}(\varepsilon )$ has diameter~$3$.

Recall that for every $i\in \{1,2,\ldots ,n\}$ the vertices of $A_{i}\cup
B_{i}$ induce in $G_{2}(\varepsilon )$ a complete bipartite graph without a
perfect matching between $A_{i}$ and $B_{i}$. Therefore there exists no pair
of sibling vertices in $A_{i}\cup B_{i}$, for every $i\in \{1,2,\ldots ,n\}$%
. Thus, since $G_{1}$ is irreducible by assumption, it follows that $%
G_{2}(\varepsilon )$ is irreducible as well.

We now prove that $G_{2}(\varepsilon )$ is $3$-colorable if and only if $%
G_{1}$ is $3$-colorable. If $G_{2}(\varepsilon )$ is $3$-colorable, then
clearly $G_{1}$ is also $3$-colorable, since $G_{1}$ is an induced subgraph
of $G_{2}(\varepsilon )$. Suppose first that $G_{1}$ is $3$-colorable, and
let $\chi $ be a proper $3$-coloring of $G_{1}$ that uses the colors red,
blue, and green. We will extend $\chi $ into a proper $3$-coloring $\chi
^{\prime }$ of $G_{2}(\varepsilon )$ as follows. Consider the vertex $%
v_{i}\in V_{1}$, where $1\leq i\leq n$. If $v_{i}$ is colored red or blue in 
$\chi $, then we color all vertices of $A_{i}$ green and all vertices of $%
B_{i}$ blue in $\chi ^{\prime }$. Otherwise, if $v_{i}$ is colored green in $%
\chi $, then we color all vertices of $A_{i}$ blue and all vertices of $%
B_{i} $ green in $\chi ^{\prime }$. Finally, we color $v_{0}$ red in $\chi
^{\prime }$. It is now straightforward to check that the resulting $3$%
-coloring $\chi ^{\prime }$ of $G_{2}(\varepsilon )$ is proper, i.e.~that $%
G_{2}(\varepsilon )$ is $3$-colorable. That is, $G_{2}(\varepsilon )$ is $3$%
-colorable if and only if $G_{1}$ is $3$-colorable.

Summarizing, for every irreducible graph $G_{1}$ with diameter~$3$ and
radius~$2$, such that $G_{1}$ has $n$ vertices and minimum degree $\delta
(G_{1})=\Theta (\sqrt{n})$, we can construct in polynomial time a graph $%
G_{2}(\varepsilon )$ with ${N=\Theta (n^{1+\varepsilon _{0}})}$ (i.e.~$N={%
\Theta (n^{\frac{1}{1-\varepsilon }})}$) vertices and minimum degree ${%
\delta ={\Theta (N^{\varepsilon })}}$, such that $G_{2}(\varepsilon )$ is $3$%
-colorable if and only if $G_{1}$ is $3$-colorable. Moreover, the
constructed graph $G_{2}(\varepsilon )$ is irreducible and has diameter~$3$
and radius~$2$. This completes the proof of the~theorem, since it is
NP-complete to decide whether $G_{1}$ is $3$-colorable by Theorem~\ref%
{NP-complete-3-col-diam-3-large-min-degree-thm}.\qed
\end{proof}

\medskip

Therefore, Theorems~\ref{NP-complete-3-col-diam-3-large-min-degree-thm} and~%
\ref{NP-complete-3-col-diam-3-small-min-degree-thm} imply that, for every $%
\varepsilon \in \lbrack 0,1)$, the $3$-coloring problem remains NP-complete
for irreducible and triangle-free graphs $G=(V,E)$ with diameter~$3$ and
radius~$2$, where the minimum degree ${\delta (G)}$ is ${\Theta
(|V|^{\varepsilon })}$. However, Theorems~\ref%
{NP-complete-3-col-diam-3-large-min-degree-thm} and~\ref%
{NP-complete-3-col-diam-3-small-min-degree-thm} do not provide any
information about how efficiently (although not polynomially, assuming P$%
\neq $NP) we can decide $3$-coloring on such graphs. In the context of
providing lower bounds for the time complexity of solving NP-complete
problems, Impagliazzo, Paturi, and Zane formulated the \emph{Exponential
Time Hypothesis (ETH)}~\cite{Impagliazzo-Strongly-ETH-01}, which seems to be
very reasonable with the current state of art:

\vspace{0.2cm} \noindent \textbf{Exponential Time Hypothesis (ETH)}~\cite%
{Impagliazzo-Strongly-ETH-01}: There exists no algorithm solving $3$SAT in
time $2^{o(n)}$, where $n$ is the number of variables in the input CNF
formula. \vspace{0.2cm}

Furthermore, they proved the celebrated \emph{Sparsification Lemma}~\cite%
{Impagliazzo-Strongly-ETH-01}, which has the following theorem as a direct
consequence. This result is quite useful for providing lower bounds assuming
ETH, as it parameterizes the running time with the size of the input CNF
formula, rather than only the number of its variables.

\begin{theorem}[\hspace{-0.0005cm}\protect\cite{Impagliazzo-ETH-01}]
\label{ETH-parameter-m-thm}$3$SAT can be solved in time $2^{o(n)}$ if and
only if it can be solved in time $2^{o(m)}$, where $n$ is the number of
variables and $m$ is the number of clauses in the input CNF~formula.
\end{theorem}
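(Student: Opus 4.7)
\medskip

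\noindent\textbf{Proof proposal for Theorem~\ref{ETH-parameter-m-thm}.} The plan is to handle the two directions separately, since the equivalence is asymmetric in difficulty. For the easy direction, suppose $3$SAT admits an algorithm with running time $2^{o(n)}$. Given an input formula, first remove in linear time every variable that does not occur in any clause (such variables can be assigned arbitrarily). On the cleaned instance every variable occurs in at least one $3$-clause, so ${n\leq 3m}$. Consequently a ${2^{o(n)}}$-time algorithm runs in ${2^{o(3m)}=2^{o(m)}}$ time, establishing this implication.

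The harder direction — from $2^{o(m)}$ to $2^{o(n)}$ — is where the real content lies, and for this I would invoke the \emph{Sparsification Lemma} of Impagliazzo, Paturi and Zane as a black box. That lemma states that for every $\varepsilon >0$ there is a constant $C(\varepsilon)$ such that any $3$-CNF formula $\phi$ on $n$ variables can be rewritten, in time $2^{\varepsilon n}\cdot \mathrm{poly}(n)$, as a disjunction $\phi \equiv \bigvee_{i=1}^{t}\phi_{i}$ of $t\leq 2^{\varepsilon n}$ $3$-CNF formulas on the same variables, each with at most $C(\varepsilon)\cdot n$ clauses. In particular $\phi$ is satisfiable iff some $\phi_{i}$ is.

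Given a hypothetical algorithm $\mathcal{A}$ running in time $2^{g(m)}$ with ${g(m)=o(m)}$, my reduction is: fix any $\varepsilon >0$, sparsify $\phi$ into the (at most $2^{\varepsilon n}$) subformulas $\phi_{i}$, each of size $m_{i}\leq C(\varepsilon)\,n$, and run $\mathcal{A}$ on each. The total running time is bounded by
\[
2^{\varepsilon n}\cdot \mathrm{poly}(n)\ +\ 2^{\varepsilon n}\cdot 2^{g(C(\varepsilon)n)}\ =\ 2^{\varepsilon n\,+\,g(C(\varepsilon)n)\,+\,O(\log n)}.
\]
Since $g(m)/m\to 0$, for each fixed $\varepsilon$ we have $g(C(\varepsilon)n)=o_{\varepsilon}(n)$, yielding a family $\{\mathcal{A}_{\varepsilon}\}_{\varepsilon>0}$ of $3$SAT algorithms with running times $2^{(\varepsilon + o_{\varepsilon}(1))\,n}$. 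A standard diagonalization — pick a slowly decreasing sequence $\varepsilon_{k}\downarrow 0$ and thresholds $n_{k}\uparrow \infty$ such that $\mathcal{A}_{\varepsilon_{k}}$ runs in time at most $2^{2\varepsilon_{k}\,n}$ on all instances of size $\geq n_{k}$, then on input of size $n\in[n_{k},n_{k+1})$ invoke $\mathcal{A}_{\varepsilon_{k}}$ — produces a single algorithm of running time $2^{o(n)}$, as required.

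The main obstacle, as is well known, is the Sparsification Lemma itself; its proof is a delicate combinatorial induction that clusters heavily overlapping clauses and replaces each cluster either by a short "core" or by branching on a small "flower", and it is this lemma that does the real work. Since the statement is cited from~\cite{Impagliazzo-ETH-01}, I would invoke it as a black box and keep the argument above focused on how the $2^{o(m)}\Rightarrow 2^{o(n)}$ implication is assembled from it.
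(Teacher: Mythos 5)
The paper does not actually prove this theorem --- it is imported verbatim from Impagliazzo--Paturi--Zane and stated only as a ``direct consequence'' of the Sparsification Lemma --- and your derivation is precisely the standard one: the trivial $n\leq 3m$ bound (after deleting unused variables) for the easy direction, and sparsification into at most $2^{\varepsilon n}$ subformulas of $O_{\varepsilon}(n)$ clauses each, followed by a diagonalization over a sequence $\varepsilon_{k}\downarrow 0$, for the hard direction. Your argument is correct and is exactly the route the paper implicitly relies on, so there is nothing to fix.
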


The following very well known theorem about the $3$-coloring problem is
based on the fact that there exists a standard polynomial-time reduction
from Not-All-Equal-$3$-SAT to $3$-coloring\footnote{%
Note that there exists a polynomial-time reduction from $3$SAT to
Not-All-Equal-$3$-SAT, such that the size of the output monotone formula is
linear to the size of the input formula.\vspace{-0.2cm}}~\cite%
{PapadimitriouComplexity94}, in which the constructed graph has diameter~$4$
and radius~$2$, and its number of vertices is linear in the size of the
input formula~(see also~\cite{Survey-ETH-11}).

\begin{theorem}[\hspace{-0.0005cm}\protect\cite%
{PapadimitriouComplexity94,Survey-ETH-11}]
\label{no-subexp-diam-4-thm}Assuming ETH, there exists no $2^{o(n)}$ time
algorithm for $3$-coloring on graphs~$G$ with diameter~$4$, radius~$2$, and $%
n$ vertices.
\end{theorem}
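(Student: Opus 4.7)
The plan is to chain two size-linear reductions, $3$SAT to NAE-$3$SAT and NAE-$3$SAT to $3$-coloring, and then invoke Theorem~\ref{ETH-parameter-m-thm}. The first reduction is the textbook trick: a clause $(a \vee b \vee c)$ of a $3$SAT formula $\phi$ is replaced by the two NAE-clauses $(a,b,d)$ and $(\overline{d},c,F)$, where $d$ is a fresh variable per clause and $F$ is a single global variable (whose value can be fixed to false by NAE's complement-invariance). Applied to $\phi$ with $n$ variables and $m$ clauses, this yields a NAE-$3$SAT formula $\psi$ with $O(n+m)$ variables and $O(m)$ clauses which is NAE-satisfiable if and only if $\phi$ is satisfiable.

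From $\psi$ I would build a graph $G_\psi$ using the classical NAE gadget: a palette triangle on three vertices $\{T,F,U\}$; for each variable $x_i$, two new literal vertices $x_i$ and $\overline{x_i}$, both adjacent to $U$ and to each other; and for each NAE-clause $(\ell_1,\ell_2,\ell_3)$, a fresh triangle whose three vertices are respectively adjacent to the literal vertices representing $\ell_1,\ell_2,\ell_3$. In any proper $3$-coloring, each pair $\{x_i,\overline{x_i}\}$ receives the two non-$U$ palette colors (encoding a truth value), while each clause-triangle needs all three palette colors, which is possible precisely when its three adjacent literal vertices do not all carry the same color, i.e., exactly when the NAE-clause is satisfied. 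Hence $G_\psi$ is $3$-colorable iff $\psi$ is NAE-satisfiable, and $|V(G_\psi)|=3+2n+O(m)=O(n+m)$, linear in the size of $\phi$.

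The diameter and radius properties come for free from the construction. The vertex $U$ is adjacent to every literal vertex and to $T,F$, hence at distance at most $2$ from every clause-triangle vertex, so $rad(G_\psi)=2$. Because every vertex sits within distance $2$ of $U$, any two vertices of $G_\psi$ are joined by a walk of length at most $4$; hence $diam(G_\psi)\le 4$, with equality attained between clause-triangle vertices belonging to clauses that share no literal (one can pad $\phi$ by a constant number of dummy clauses, if needed, to make the diameter exactly $4$).

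Finally, suppose for contradiction that some algorithm decides $3$-coloring on every graph with diameter~$4$, radius~$2$, and $N$ vertices in time $2^{o(N)}$. Running it on $G_\psi$ would decide satisfiability of $\phi$ in time $2^{o(N)}=2^{o(n+m)}=2^{o(m)}$, contradicting ETH via Theorem~\ref{ETH-parameter-m-thm}. The only step demanding genuine care is verifying the backward direction of the NAE gadget, namely that from a proper $3$-coloring of $G_\psi$ one can extract a NAE-satisfying assignment of $\psi$ independently of how the two non-$U$ palette colors permute across distinct variables, which boils down to a short case analysis on the two possible distributions of literal-colors in a NAE-satisfied clause.
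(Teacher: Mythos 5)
Your proof is correct and follows exactly the route the paper intends for Theorem~\ref{no-subexp-diam-4-thm}: the paper does not spell out a proof but cites the standard linear-size chain $3$SAT $\rightarrow$ NAE-$3$SAT $\rightarrow$ $3$-coloring (noting that the resulting graph has diameter~$4$, radius~$2$, and linearly many vertices) combined with the Sparsification Lemma via Theorem~\ref{ETH-parameter-m-thm}, which is precisely what you fleshed out. The details you supply (the per-clause fresh variable $d$ with global $F$, the palette-triangle gadget, and the observation that every vertex lies within distance~$2$ of $U$) are the standard ones and are all sound.
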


We provide in the next three theorems subexponential lower bounds for the
time complexity of $3$-coloring on irreducible and triangle-free graphs with
diameter~$3$ and radius~$2$. Moreover, the lower bounds provided in Theorem~%
\ref{ETH-large-min-degree-thm} are asymptotically almost tight, due to the
algorithm of Theorem~\ref{min-delta-Delta-n-over-delta-alg-diam-3-thm}.

\begin{theorem}
\label{ETH-large-min-degree-thm}Let $\varepsilon \in \lbrack \frac{1}{2},1)$%
. Assuming ETH, there exists no algorithm with running time $2^{o(\frac{N}{%
\delta })}=2^{o(N^{1-\varepsilon })}$ for $3$-coloring on irreducible and
triangle-free graphs $G$ with diameter~$3$, radius~$2$, and $N$ vertices,
where the minimum degree of $G$ is $\delta (G)=\Theta (N^{\varepsilon })$.
\end{theorem}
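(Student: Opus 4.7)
The plan is to contrapose an assumed subexponential algorithm against ETH via the Sparsification Lemma (Theorem~\ref{ETH-parameter-m-thm}), using the amplified reduction already constructed in the proof of Theorem~\ref{NP-complete-3-col-diam-3-large-min-degree-thm} as a black box. Concretely, given an arbitrary 3-CNF formula $\phi$ with $n$ variables and $m$ clauses, I would invoke that construction to obtain in polynomial time the irreducible, triangle-free graph $H_{1}(\phi,\varepsilon)$ of diameter~$3$ and radius~$2$, which is $3$-colorable if and only if $\phi$ is satisfiable.

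Next I would read off the quantitative parameters of $H_{1}(\phi,\varepsilon)$ that are already established inside that proof. Setting $\varepsilon_{0}=\varepsilon/(1-\varepsilon)$ (which is $\geq 1$ precisely because $\varepsilon\in[\tfrac{1}{2},1)$), the graph has $N=\Theta(m^{1+\varepsilon_{0}})=\Theta(m^{1/(1-\varepsilon)})$ vertices and minimum degree $\delta=\Theta(N^{\varepsilon})$. The key algebraic observation is that these two facts together give
\begin{equation*}
N^{1-\varepsilon}\;=\;\Theta(m)\qquad\text{and}\qquad \frac{N}{\delta}\;=\;\Theta(N^{1-\varepsilon}),
\end{equation*}
so a hypothetical $2^{o(N/\delta)}$-time, equivalently $2^{o(N^{1-\varepsilon})}$-time, algorithm for $3$-coloring on the target graph class would decide $3$-colorability of $H_{1}(\phi,\varepsilon)$, and hence satisfiability of $\phi$, in time $2^{o(m)}$ (absorbing the polynomial construction overhead).

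Finally I would close the argument by invoking Theorem~\ref{ETH-parameter-m-thm}: a $2^{o(m)}$-time decision procedure for 3SAT parameterized by the number of clauses yields a $2^{o(n)}$-time procedure parameterized by the number of variables, contradicting ETH. Since the whole implication is ``hypothetical algorithm $\Rightarrow$ refutation of ETH'', this establishes the claimed lower bound for every fixed $\varepsilon\in[\tfrac{1}{2},1)$.

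There is essentially no hard step here: the nontrivial work, namely verifying that $H_{1}(\phi,\varepsilon)$ is irreducible, triangle-free, of diameter~$3$ and radius~$2$, with the stated vertex count and minimum degree, has already been done in Theorem~\ref{NP-complete-3-col-diam-3-large-min-degree-thm}. The only place to be careful is the exponent arithmetic $N^{1-\varepsilon}=\Theta(m)$, which must hold for the translation of the complexity bound to go through cleanly; this is where the specific choice $\varepsilon_{0}=\varepsilon/(1-\varepsilon)$ in the amplification is tailored precisely to make the Sparsification Lemma applicable.
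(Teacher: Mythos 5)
Your proposal is correct and follows essentially the same route as the paper's own proof: both reuse the graph $H_{1}(\phi,\varepsilon)$ from Theorem~\ref{NP-complete-3-col-diam-3-large-min-degree-thm}, observe that $N/\delta=\Theta(m)$ (equivalently $N^{1-\varepsilon}=\Theta(m)$), and conclude that a $2^{o(N/\delta)}$-time algorithm would solve $3$SAT in time $2^{o(m)}$, contradicting ETH via Theorem~\ref{ETH-parameter-m-thm}. The exponent arithmetic you flag as the one delicate point is exactly the calculation the paper performs.
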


\begin{proof}
Let $\varepsilon \in \lbrack \frac{1}{2},1)$ and define $\varepsilon _{0}=%
\frac{\varepsilon }{1-\varepsilon }$. In the reduction of Theorem~\ref%
{NP-complete-3-col-diam-3-large-min-degree-thm}, given the value of $%
\varepsilon $ and a boolean formula $\phi $ with $n$ variables and $m$
clauses, we constructed a graph $H_{1}(\phi ,\varepsilon )$ with ${N=\Theta
(m^{1+\varepsilon _{0}})}$ vertices and minimum degree $\delta =\Theta
(m^{\varepsilon _{0}})$. Therefore $\delta =\Theta (N^{\frac{\varepsilon _{0}%
}{1+\varepsilon _{0}}})$, i.e.~$\delta =\Theta (N^{\varepsilon })$.
Furthermore the graph $H_{1}(\phi ,\varepsilon )$ is by construction
irreducible and triangle-free, and it has diameter~$3$ and radius~$2$.
Moreover $\phi $ is satisfiable if and only if $H_{1}(\phi ,\varepsilon )$
is $3$-colorable by Theorem~\ref%
{NP-complete-3-col-diam-3-large-min-degree-thm}.

Suppose now that there exists an algorithm $\mathcal{A}$ that, given an
irreducible triangle-free graph $G$ with $N$ vertices, diameter~$3$, radius~$%
2$, and minimum degree $\delta =\Theta (N^{\varepsilon })$ for some $%
\varepsilon \in \lbrack \frac{1}{2},1)$, decides $3$-coloring on $G$ in time 
$2^{o(\frac{N}{\delta })}$. Then $\mathcal{A}$ decides $3$-coloring on input 
$G=H_{1}(\phi ,\varepsilon )$ in time $2^{o(\frac{N}{\delta })}=2^{o(m)}$.
However, since the formula $\phi $ is satisfiable if and only if $H_{1}(\phi
,\varepsilon )$ is $3$-colorable, algorithm $\mathcal{A}$ can be used to
decide the $3$SAT problem in time $2^{o(m)}$, where $m$ is the number of
clauses in the given boolean formula $\phi $. Thus $\mathcal{A}$ can decide $%
3$SAT in time $2^{o(m)}$. This is a contradiction by Theorem~\ref%
{ETH-parameter-m-thm}, assuming ETH. This completes the proof of the theorem.%
\qed
\end{proof}

\begin{theorem}
\label{ETH-medium-min-degree-thm}Let ${\varepsilon \in \lbrack \frac{1}{3},}%
\frac{1}{2}{)}$. Assuming ETH, there exists no algorithm with running time ${%
2^{o(\delta )}=2^{o(N^{\varepsilon })}}$ for $3$-coloring on irreducible and
triangle-free graphs $G$ with diameter~$3$, radius~$2$, and $N$ vertices,
where the minimum degree of $G$ is $\delta (G)=\Theta (N^{\varepsilon })$.
\end{theorem}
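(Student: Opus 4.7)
The plan is to adapt the first amplification technique from the proof of Theorem~\ref{NP-complete-3-col-diam-3-large-min-degree-thm} to a smaller parameter regime that fits the target range $\varepsilon \in [\tfrac{1}{3}, \tfrac{1}{2})$. Given a 3CNF formula $\phi$ with $n$ variables and $m$ clauses, I first construct the graph $H_{\phi}$ from the reduction of Theorem~\ref{3-col-diam-3-SAT-thm}, which already has $\Theta(m^{2})$ vertices and minimum degree $\Theta(m)$. Then, rather than performing the first amplification with the original parameter $k_{0} = m^{\varepsilon/(1-\varepsilon)}$ (which needs $\varepsilon \geq \tfrac{1}{2}$ to have $k_{0} \geq m$), I set $k_{0} = m^{2\varepsilon}$, i.e., the reduced value $\varepsilon_{0} = 2\varepsilon \in [\tfrac{2}{3}, 1)$ inside the first amplification. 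Call the resulting graph $H_{3}(\phi, \varepsilon)$. Since $2\varepsilon < 1$ we have $k_{0} < m$, but still $k_{0} \geq 2$ for large enough $m$, which is the only size requirement that the bipartite-minus-matching gadget between $A_{j}$ and $B_{j}$ actually needs to avoid creating siblings.

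I would then check that all structural properties established in the proof of Theorem~\ref{NP-complete-3-col-diam-3-large-min-degree-thm} transfer verbatim to $H_{3}(\phi, \varepsilon)$: irreducibility, triangle-freeness, diameter~$3$, radius~$2$, and the equivalence that $\phi$ is satisfiable if and only if $H_{3}(\phi, \varepsilon)$ is $3$-colorable. This is immediate because none of those arguments actually rely on $\varepsilon_{0} \geq 1$; they only use that each $A_{j} \cup B_{j}$ has $k_{0} \geq 2$ and is connected to $v_{0}$ and to $U_{j} \cup W_{j}$ through the same interface as before.

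Next I would recompute the two key parameters. Since $k_{0} = m^{2\varepsilon} \leq m$, the vertex count $N = 2(n+5m)\cdot 8m + 8m\cdot 2k_{0} + 1$ is now dominated by the first summand, giving $N = \Theta(m^{2})$ rather than $\Theta(m^{1+\varepsilon_{0}})$. The minimum degree is still attained by the vertices of $B_{j}$ (each of degree exactly $k_{0}$, being adjacent to $v_{0}$ and to $A_{j}$ minus its matched partner), so $\delta = \Theta(k_{0}) = \Theta(m^{2\varepsilon}) = \Theta(N^{\varepsilon})$, matching the theorem's hypothesis exactly.

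Finally, I would obtain the ETH lower bound by contradiction. Suppose an algorithm $\mathcal{A}$ decides $3$-coloring on irreducible triangle-free graphs with diameter~$3$, radius~$2$, $N$ vertices and minimum degree $\delta = \Theta(N^{\varepsilon})$ in time $2^{o(N^{\varepsilon})}$. Running $\mathcal{A}$ on $H_{3}(\phi, \varepsilon)$ takes time $2^{o((m^{2})^{\varepsilon})} = 2^{o(m^{2\varepsilon})}$. Since $2\varepsilon < 1$, we have $m^{2\varepsilon} = o(m)$, and by the above equivalence $\mathcal{A}$ would decide 3SAT in $2^{o(m)}$ time, contradicting ETH via Theorem~\ref{ETH-parameter-m-thm}. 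The main obstacle is verifying that the first amplification's analysis really does go through without change in the reduced regime $\varepsilon_{0} < 1$; the only delicate point is irreducibility, which reduces to the observation that as long as $k_{0} \geq 2$ no pair of vertices of $A_{j} \cup B_{j}$ forms siblings, so the construction remains irreducible.
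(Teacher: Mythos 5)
Your proposal is correct in its arithmetic and its ETH logic, but it takes a genuinely different route from the paper. The paper proves Theorem~\ref{ETH-medium-min-degree-thm} with a \emph{third} amplification that does not touch the $v_j$-gadgets at all: it widens the matrix of $H_{\phi}$, extending every row $\ell_i$ and $\ell_i^{\prime}$ from $8m$ to $m^{1+\varepsilon_0}$ columns with $\varepsilon_0=\frac{1}{\varepsilon}-2\in(0,1]$ and adding a column vertex $v_j$ for each new column. This keeps the minimum degree at $\Theta(m)$ (attained by the column vertices $v_j$, whose degree $\Theta(n+m)$ is unaffected by the widening) while inflating $N$ to $\Theta(m^{2+\varepsilon_0})$, so that $\delta=\Theta(N^{1/(2+\varepsilon_0)})=\Theta(N^{\varepsilon})$ and a $2^{o(\delta)}$ algorithm would solve $3$SAT in $2^{o(m)}$. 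You instead keep $N=\Theta(m^2)$ and push the minimum degree \emph{down} to $\Theta(m^{2\varepsilon})$ by shrinking the $A_j,B_j$ gadgets of the first amplification. Your re-parameterization $\varepsilon_0=2\varepsilon$ is exactly the right one once $N$ is dominated by the $\Theta(m^2)$ matrix vertices, your identification of the $B_j$-vertices (degree exactly $k_0$) as the new minimum-degree vertices is correct, and the final step ($2^{o(N^{\varepsilon})}=2^{o(m^{2\varepsilon})}\subseteq 2^{o(m)}$, contradiction via Theorem~\ref{ETH-parameter-m-thm}) is sound. The paper's route buys an essentially free verification of the structural properties, since it only replicates the matrix pattern already analyzed in Lemma~\ref{H-phi-irreducible-lem}; your route keeps the instances smaller, but the conclusion is the same.

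The one place where your route is more exposed than the paper's is the structural verification you defer to a ``verbatim'' transfer. The theorem requires the hard instances to be irreducible \emph{and} triangle-free, and your irreducibility check addresses only Reduction Rule~\ref{rule2} (siblings), not Rule~\ref{rule1} (diamonds). In the first-amplification gadget as described, $v_0$ is adjacent to \emph{all} of $A_j\cup B_j$ while $A_j$--$B_j$ carries every edge except a perfect matching; hence $\{v_0,v_{j,p}^{\prime},v_{j,q}^{\prime\prime}\}$ is a triangle for $p\neq q$, and for $k_0\geq 3$ (which your choice $k_0=m^{2\varepsilon}$ forces for large $m$) the quadruple $\{v_{j,p}^{\prime},v_{j,q}^{\prime},v_0,v_{j,r}^{\prime\prime}\}$ induces a diamond. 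This tension is already present in the paper's own first amplification, so you inherit it rather than introduce it; but since you are re-instantiating that construction with a new parameter, you take on the burden of re-verifying it and cannot simply cite the earlier analysis. Either the gadget must be repaired (while preserving diameter~$3$ and radius~$2$), or one should follow the paper's matrix-widening amplification, which avoids the issue entirely because the extended graph contains no new adjacencies beyond the triangle-free, diamond-free pattern of $G_{n,m}$.
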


\begin{proof}
We provide for the purposes of the proof an amplification of the reduction
of Theorem~\ref{3-col-diam-3-SAT-thm}. In Section~\ref{diam-3-NP-c-subsec},
given a boolean formula $\phi $ with $n$ variables and $m$ clauses, we
construct the graph $H_{\phi }$, which is irreducible and triangle-free by
Lemma~\ref{H-phi-irreducible-lem}. By its construction in Section~\ref%
{diam-3-NP-c-subsec}, the graph $H_{\phi }$ has ${\Theta (m^{2})}$ vertices.
Furthermore, the degree of vertex $v_{0}$ in $H_{\phi }$ is~${\Theta (m)}$.
The degree of every vertex $v_{j}$ in $H_{\phi }$ is~${\Theta (n+m)=\Theta
(m)}$, where $j\in \{1,2,\ldots ,8m\}$. Finally, the degree of every vertex~$%
u_{i,j}$ (resp.~$w_{i,j}$) in $H_{\phi }$ is~$\Theta (m)$, where $i\in
\{1,2,\ldots ,n+5m\}$ and $j\in \{1,2,\ldots ,8m\}$. Therefore the minimum
degree of $H_{\phi }$ is~$\delta =\Theta (m)$.

Let ${\varepsilon \in \lbrack \frac{1}{3},}\frac{1}{2}{)}$ and define $%
\varepsilon _{0}=\frac{1}{\varepsilon }-2$. Note that $\varepsilon _{0}\in
(0,1]$, since ${\varepsilon \in \lbrack \frac{1}{3},}\frac{1}{2}{)}$. We
construct now from $H_{\phi }$ the graph $H_{2}(\phi ,\varepsilon )$ as
follows. Consider the rows $\ell _{i}$ and $\ell _{i}^{\prime }$ of the
matrix arrangement of the vertices $U\cup W$ of $H_{\phi }$, where $i\in
\{1,2,\ldots ,n+5m\}$. For every $i\in \{1,2,\ldots ,n+5m\}$, we extend row $%
\ell _{i}$ by the new $(m^{1+\varepsilon _{0}}-8m)$ vertices $%
\{u_{i,8m+1},u_{i,8m+2},\ldots ,u_{i,m^{1+\varepsilon _{0}}}\}$. Denote the
resulting row of the matrix with the $m^{1+\varepsilon _{0}}$ vertices $%
\{u_{i,1},u_{i,2},\ldots ,u_{i,m^{1+\varepsilon _{0}}}\}$ by $\widehat{\ell
_{i}}$. Similarly, we extend row $\ell _{i}^{\prime }$ by the new $%
(m^{1+\varepsilon _{0}}-8m)$ vertices $\{w_{i,8m+1},w_{i,8m+2},\ldots
,w_{i,m^{1+\varepsilon _{0}}}\}$. Denote the resulting row of the matrix
with the $m^{1+\varepsilon _{0}}$ vertices $\{w_{i,1},w_{i,2},\ldots
,w_{i,m^{1+\varepsilon _{0}}}\}$ by $\widehat{\ell _{i}^{\prime }}$.
Furthermore, add all necessary edges between vertices of $\widehat{\ell _{i}}
$ and of $\widehat{\ell _{i}^{\prime }}$, such that they induce a complete
bipartite graph. For simplicity of the notation, denote by $U^{\prime }={%
\{u_{i,j}:1\leq i\leq n+5m,1\leq j\leq m^{1+\varepsilon _{0}}\}}$ and ${W}%
^{\prime }{=\{w_{i,j}:1\leq i\leq n+5m,1\leq j\leq m^{1+\varepsilon _{0}}\}}$
the vertex sets that extend the sets $U$ and $W$, respectively. Moreover,
similarly to the notation of Section~\ref{diam-3-NP-c-subsec}, denote $%
U_{j}^{\prime }=\{u_{1,j},u_{2,j},\ldots ,u_{n+5m,j}\}$ and $W_{j}^{\prime
}=\{w_{1,j},w_{2,j},\ldots ,w_{n+5m,j}\}$, for every $j\in \{1,2,\ldots
,m^{1+\varepsilon _{0}}\}$. Then, the vertices of $U_{j}^{\prime }\cup
W_{j}^{\prime }$ contain the vertices of the $j$th column in the (updated)
matrix arrangement of the vertices of $U^{\prime }\cup W^{\prime }$.
Similarly to the construction of the graph $H_{\phi }$ (cf.~Section~\ref%
{diam-3-NP-c-subsec}), we add for every $j\in \{8m+1,8m+2,\ldots
,m^{1+\varepsilon _{0}}\}$ a vertex $v_{j}$ that is adjacent to $%
U_{j}^{\prime }\cup W_{j}^{\prime }\cup \{v_{0})$. Denote the resulting
graph by~${H_{2}(\phi ,\varepsilon )}$.

Now, following exactly the same argumentation as in the proof of Lemma~\ref%
{H-phi-irreducible-lem} and Theorem~\ref{3-col-diam-3-SAT-thm}, we can prove
that: (a)~$H_{2}(\phi ,\varepsilon )$ has diameter~$3$ and radius~$2$, (b)~$%
H_{2}(\phi ,\varepsilon )$ is irreducible and triangle-free, and (c)~the
formula $\phi $ is satisfiable if and only if $H_{2}(\phi ,\varepsilon )$ is 
$3$-colorable.

By the above construction, the graph $H_{2}(\phi ,\varepsilon )$ has $%
N=2(n+5m)\cdot m^{1+\varepsilon _{0}}+m^{1+\varepsilon _{0}}+1$ vertices.
Thus, since $m=\Omega (n)$, it follows that $N=\Theta (m^{2+\varepsilon
_{0}})$. Furthermore, the degree of vertex $v_{0}$ in $H_{2}(\phi
,\varepsilon )$ is~${\Theta (m^{1+\varepsilon _{0}})}$. The degree of every
vertex $v_{j}$ in $H_{2}(\phi ,\varepsilon )$ is~${\Theta (n+m)=\Theta (m)}$%
, where $j\in \{1,2,\ldots ,{m^{1+\varepsilon _{0}}}\}$. Finally, the degree
of every vertex~$u_{i,j}$ (resp.~$w_{i,j}$) in $H_{2}(\phi ,\varepsilon )$
is~$\Theta ({m^{1+\varepsilon _{0}}})$, where $i\in \{1,2,\ldots ,n+5m\}$
and $j\in \{1,2,\ldots ,{m^{1+\varepsilon _{0}}}\}$. Thus the minimum degree
of $H_{2}(\phi ,\varepsilon )$ is~$\delta =\Theta (m)$. Therefore, since $%
N=\Theta (m^{2+\varepsilon _{0}})$, it follows that $\delta =\Theta (N^{%
\frac{1}{2+\varepsilon _{0}}})=\Theta (N^{\varepsilon })$, where $N$ is the
number of vertices in the graph $H_{2}(\phi ,\varepsilon )$.

Suppose now that there exists an algorithm $\mathcal{A}$ that, given an
irreducible triangle-free graph $G$ with $N$ vertices, diameter~$3$, radius~$%
2$, and minimum degree $\delta =\Theta (N^{\varepsilon })$ for some~$%
\varepsilon \in {[\frac{1}{3},}\frac{1}{2}{)}$, decides $3$-coloring on~$G$
in time $2^{o(\delta )}$. Then $\mathcal{A}$ decides $3$-coloring on input $%
G=H_{2}(\phi ,\varepsilon )$ in time $2^{o(\delta )}=2^{o(m)}$. However,
since the formula $\phi $ is satisfiable if and only if $H_{2}(\phi
,\varepsilon )$ is $3$-colorable, algorithm $\mathcal{A}$ can be used to
decide the $3$SAT problem in time $2^{o(m)}$, where $m$ is the number of
clauses in the given boolean formula $\phi $. Therefore $\mathcal{A}$ can
decide $3$SAT in time $2^{o(m)}$. This is a contradiction by Theorem~\ref%
{ETH-parameter-m-thm}, assuming ETH. This completes the proof of the theorem.%
\qed
\end{proof}

\begin{theorem}
\label{ETH-small-min-degree-thm}Let ${\varepsilon \in \lbrack 0,\frac{1}{3})}
$. Assuming ETH, there exists no algorithm with running time ${2^{o(\sqrt{%
\frac{N}{\delta }})}=2^{o(N^{(\frac{1-\varepsilon }{2})})}}$ for $3$%
-coloring on irreducible and triangle-free graphs $G$ with diameter~$3$,
radius~$2$, and $N$ vertices, where the minimum degree of $G$ is $\delta
(G)=\Theta (N^{\varepsilon })$.
\end{theorem}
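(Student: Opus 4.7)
The approach mirrors the proofs of Theorems~\ref{ETH-large-min-degree-thm} and~\ref{ETH-medium-min-degree-thm}: start from a 3SAT formula $\phi$ with $n$ variables and $m$ clauses, build a graph whose size, minimum degree, and structural properties match the statement, and translate a hypothetical $2^{o(\sqrt{N/\delta})}$-time coloring algorithm into a $2^{o(m)}$-time algorithm for 3SAT, contradicting ETH via Theorem~\ref{ETH-parameter-m-thm}. The natural amplification for the range $\varepsilon \in [0,\frac{1}{3})$ is the one used in Theorem~\ref{NP-complete-3-col-diam-3-small-min-degree-thm}, since it increases the number of vertices much faster than the minimum degree, which is exactly what the small-$\varepsilon$ regime requires.

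Concretely, first I would take $G_{1}=H_{\phi}$, the irreducible triangle-free graph with diameter $3$ and radius $2$ produced by the reduction of Section~\ref{diam-3-NP-c-subsec}. As observed in the proof of Theorem~\ref{ETH-medium-min-degree-thm}, this graph has $\Theta(m^{2})$ vertices and minimum degree $\Theta(m)$; in particular, $\delta(H_{\phi})=\Theta(\sqrt{|V(H_{\phi})|})$, so $H_{\phi}$ satisfies precisely the hypothesis imposed on $G_{1}$ in Theorem~\ref{NP-complete-3-col-diam-3-small-min-degree-thm}. Given the target $\varepsilon\in[0,\frac{1}{3})$, set $\varepsilon_{0}=\frac{\varepsilon}{1-\varepsilon}\in[0,\frac{1}{2})\subset[0,1)$ (the valid range of Theorem~\ref{NP-complete-3-col-diam-3-small-min-degree-thm}), and apply that theorem's construction with parameter $\varepsilon$ to $G_{1}=H_{\phi}$, obtaining a graph $G_{2}(\varepsilon)$. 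From Theorem~\ref{NP-complete-3-col-diam-3-small-min-degree-thm} I inherit, for free, that $G_{2}(\varepsilon)$ is irreducible, triangle-free, has diameter $3$ and radius $2$, and is $3$-colorable iff $H_{\phi}$ is, hence iff $\phi$ is satisfiable by Theorem~\ref{3-col-diam-3-SAT-thm}.

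The heart of the argument is the bookkeeping of the parameters. With $n:=|V(H_{\phi})|=\Theta(m^{2})$ and the parameter $k_{0}=n^{\varepsilon_{0}}=\Theta(m^{2\varepsilon_{0}})$ of the amplification, the resulting graph has
\begin{equation*}
N \;=\; \Theta(n^{1+\varepsilon_{0}}) \;=\; \Theta(m^{2(1+\varepsilon_{0})}),
\qquad
\delta \;=\; \Theta(n^{\varepsilon_{0}}) \;=\; \Theta(m^{2\varepsilon_{0}}),
\end{equation*}
so $\delta=\Theta(N^{\varepsilon_{0}/(1+\varepsilon_{0})})=\Theta(N^{\varepsilon})$ as required, and
\begin{equation*}
\sqrt{\tfrac{N}{\delta}} \;=\; \sqrt{\tfrac{\Theta(m^{2(1+\varepsilon_{0})})}{\Theta(m^{2\varepsilon_{0}})}} \;=\; \Theta(m) \;=\; \Theta(N^{(1-\varepsilon)/2}),
\end{equation*}
using the identity $(1+\varepsilon_{0})(1-\varepsilon)=1$. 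Now suppose there were an algorithm $\mathcal{A}$ that decides $3$-coloring on such graphs in time $2^{o(\sqrt{N/\delta})}$; running $\mathcal{A}$ on $G_{2}(\varepsilon)$ would decide the satisfiability of $\phi$ in time $2^{o(m)}$, contradicting ETH by Theorem~\ref{ETH-parameter-m-thm}.

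The only genuinely non-routine step is the choice of the amplification and the exponent $\varepsilon_{0}$; everything else (irreducibility, triangle-freeness, diameter $3$, radius $2$, and the if-and-only-if for $3$-colorability) is already packaged in Theorem~\ref{NP-complete-3-col-diam-3-small-min-degree-thm}. The main obstacle, and the reason the threshold $\varepsilon=\frac{1}{3}$ appears, is that this amplification freezes the quantity $\sqrt{N/\delta}=\Theta(m)$; one cannot push it below $m$ without shrinking $N$, while the parallel amplification used in Theorem~\ref{ETH-medium-min-degree-thm} freezes $\delta=\Theta(m)$ instead. Consequently, the bound $2^{o(\sqrt{N/\delta})}=2^{o(N^{(1-\varepsilon)/2})}$ beats $2^{o(\delta)}=2^{o(N^{\varepsilon})}$ exactly when $\varepsilon<\frac{1}{3}$, which dictates the range covered by this theorem.\qed
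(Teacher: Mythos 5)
Your proposal is correct and follows essentially the same route as the paper: both apply the amplification of Theorem~\ref{NP-complete-3-col-diam-3-small-min-degree-thm} to a graph $G_{1}$ with $\delta(G_{1})=\Theta(\sqrt{n})$ and observe that $\sqrt{N/\delta}=\Theta(\sqrt{n})$, with identical parameter bookkeeping. The only (inessential) difference is the final step: the paper derives the contradiction from Theorem~\ref{ETH-large-min-degree-thm} applied to $G_{1}$, whereas you unfold that reduction one level further, instantiate $G_{1}=H_{\phi}$, and contradict ETH directly via Theorem~\ref{ETH-parameter-m-thm}; the two are the same chain of reductions composed differently.
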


\begin{proof}
Let $\varepsilon \in \lbrack 0,\frac{1}{3})$ and define $\varepsilon _{0}=%
\frac{\varepsilon }{1-\varepsilon }$. In the reduction of Theorem~\ref%
{NP-complete-3-col-diam-3-small-min-degree-thm}, given the value of~$%
\varepsilon $ and an irreducible graph $G_{1}$ with diameter~$3$ and radius~$%
2$, such that $G_{1}$ has $n$ vertices and minimum degree~${\delta
(G_{1})=\Theta (\sqrt{n})}$, we constructed an irreducible graph $%
G_{2}(\varepsilon )$ with~${N=\Theta (n^{1+\varepsilon _{0}})}$ vertices and
minimum degree~${\delta =\Theta (n^{\varepsilon _{0}})}$. Therefore $\delta
=\Theta (N^{\frac{\varepsilon _{0}}{1+\varepsilon _{0}}})$, i.e.~$\delta
=\Theta (N^{\varepsilon })$. Furthermore the graph~${G_{2}(\varepsilon )}$
has by construction diameter~$3$ and radius~$2$. Moreover $G_{1}$ is $3$%
-colorable if and only if $G_{2}(\varepsilon )$ is $3$-colorable by Theorem~%
\ref{NP-complete-3-col-diam-3-small-min-degree-thm}.

Suppose now that there exists an algorithm $\mathcal{A}$ that, given an
irreducible triangle-free graph $G$ with $N$ vertices, diameter~$3$, radius~$%
2$, and minimum degree $\delta =\Theta (N^{\varepsilon })$ for some $%
\varepsilon \in \lbrack 0,\frac{1}{3})$, decides $3$-coloring on $G$ in time 
$2^{o(\sqrt{\frac{N}{\delta }})}$. Then $\mathcal{A}$ decides $3$-coloring
on input $G=G_{2}(\varepsilon )$ in time $2^{o(\sqrt{\frac{N}{\delta }}%
)}=2^{o(\sqrt{n})}$. However, since $G_{1}$ is $3$-colorable if and only if $%
G_{2}(\varepsilon )$ is $3$-colorable, algorithm $\mathcal{A}$ can be used
to decide $3$-coloring of $G_{1}$ in time $2^{o(\sqrt{n})}$, where $n$ is
the number of vertices in the given graph $G_{1}$. This is a contradiction
by Theorem~\ref{ETH-large-min-degree-thm}, assuming ETH. This completes the
proof of the theorem.\qed
\end{proof}

\section{Concluding remarks\label{conclusions-sec}}

In this paper we investigated graphs with small diameter, i.e.~with diameter
at most~$2$, and at most~$3$. For graphs with diameter at most~$2$, we
provided the first subexponential algorithm for $3$-coloring, with
complexity $2^{O(\sqrt{n\log n})}$, which is asymptotically the same as the
currently best known time complexity for the graph isomorphism problem.
Furthermore we presented a subclass of graphs with diameter~$2$ that admits
a polynomial algorithm for $3$-coloring. An interesting open problem for
further research is to establish the time complexity of $3$-coloring on
arbitrary graphs with diameter~$2$. Moreover, the complexity of $3$-coloring
remains open also for triangle-free graphs of diameter~$2$, or equivalently,
on maximal triangle-free graphs. As these problems (in the general case)
have been until now neither proved to be polynomially solvable nor to be
NP-complete, it would be worth to investigate whether they are polynomially
reducible to/from the graph isomorphism problem.

For graphs with diameter at most~$3$, we established the complexity of $3$%
-coloring, even for triangle-free graphs, which has been a longstanding open
problem. Namely we proved that for every~${\varepsilon \in \lbrack 0,1)}$, $%
3 $-coloring is NP-complete on triangle-free graphs of diameter~$3$ and
radius $2$ with $n$ vertices and minimum degree $\delta =\Theta
(n^{\varepsilon })$. Moreover, assuming~the Exponential Time Hypothesis
(ETH), we provided three different amplification techniques of our hardness
results, in order to obtain for every~${\varepsilon \in \lbrack 0,1)}$
subexponential asymptotic lower bounds for the complexity of $3$-coloring on
triangle-free graphs with diameter~$3$, radius $2$, and minimum degree~${%
\delta =\Theta (n^{\varepsilon })}$. Finally, we provided a $3$-coloring
algorithm with running time~${2^{O(\min \{\delta \Delta ,\ \frac{n}{\delta }%
\log \delta \})}}$ for arbitrary graphs with diameter~$3$, where $n$ is the
number of vertices and $\delta $ (resp.~$\Delta $) is the minimum
(resp.~maximum) degree of the input graph. Due to our lower bounds, the
running time of this algorithm is asymptotically almost tight, when the
minimum degree if the input graph is $\delta =\Theta (n^{\varepsilon })$,
where $\varepsilon \in \lbrack \frac{1}{2},1)$. An interesting problem for
further research is to find asymptotically matching lower bounds for the
complexity of $3$-coloring on graphs with diameter $3$, for all values of
minimum degree $\delta $.

{\small 
\bibliographystyle{abbrv}
\bibliography{ref-diameter}

\begin{thebibliography}{10}

\bibitem{AlonTransversal90}
N.~Alon.
\newblock Transversal numbers of uniform hypergraphs.
\newblock {\em Graphs and Combinatorics}, 6:1--4, 1990.

\bibitem{AlonProbabilistic08}
N.~Alon and J.~H. Spencer.
\newblock {\em The Probabilistic Method}.
\newblock John Wiley \& Sons, 3rd edition, 2008.

\bibitem{Babai83}
L.~Babai and E.~M. Luks.
\newblock Canonical labeling of graphs.
\newblock In {\em Proceedings of the 15th annual {ACM} symposium on theory of
  computing (STOC)}, pages 171--183, 1983.

\bibitem{Beigel05}
R.~Beigel and D.~Eppstein.
\newblock $3$-coloring in time ${O}(1.3289^{n})$.
\newblock {\em Journal of Algorithms}, 54(2):168--204, 2005.

\bibitem{Barnaby-survey-11}
M.~Bodirsky, J.~K{\'a}ra, and B.~Martin.
\newblock The complexity of surjective homomorphism problems~--~{A}~survey.
\newblock {\em Discrete Applied Mathematics}, 160(12):1680--1690, 2012.

\bibitem{Bollobas81}
B.~Bollob{\'a}s.
\newblock The diameter of random graphs.
\newblock {\em Transactions of the American Mathematical Society},
  267(1):41--52, 1981.

\bibitem{Broersma09}
H.~Broersma, F.~V. Fomin, P.~A. Golovach, and D.~Paulusma.
\newblock Three complexity results on coloring ${P}_{k}$-free graphs.
\newblock In {\em Proceedings of the 20th International Workshop on
  Combinatorial Algorithms (IWOCA)}, pages 95--104, 2009.

\bibitem{CourcelleMSO90}
B.~Courcelle.
\newblock The monadic second-order logic of graphs~{I}: Recognizable sets of
  finite graphs.
\newblock {\em Information and Computation}, 85(1):12--75, 1990.

\bibitem{FominKratsch-book-10}
F.~V. Fomin and D.~Kratsch.
\newblock {\em Exact exponential algorithms}.
\newblock Texts in Theoretical Computer Science. An EATCS Series. Springer,
  2010.

\bibitem{GareyJohnson79}
M.~R. Garey and D.~S. Johnson.
\newblock {\em Computers and intractability: {A} guide to the theory of
  {NP}-completeness}.
\newblock W.H. Freeman, 1979.

\bibitem{Grotschel84}
M.~Gr"otschel, L.~Lov{\'a}sz, and A.~Schrijver.
\newblock Polynomial algorithms for perfect graphs.
\newblock In {\em Topics on Perfect Graphs}, volume~88, pages 325--356. 1984.

\bibitem{Hoang10}
C.~T. Ho{\`a}ng, M.~Kami{\'n}ski, V.~V. Lozin, J.~Sawada, and X.~Shu.
\newblock Deciding $k$-colorability of ${P}_{5}$-free graphs in polynomial
  time.
\newblock {\em Algorithmica}, 57(1):74--81, 2010.

\bibitem{Holyer81}
I.~Holyer.
\newblock The {NP}-completeness of edge-coloring.
\newblock {\em SIAM Journal on Computing}, 10:718--720, 1981.

\bibitem{Impagliazzo-ETH-01}
R.~Impagliazzo and R.~Paturi.
\newblock On the complexity of $k$-{SAT}.
\newblock {\em Journal of Computer and System Sciences}, 62(2):367--375, 2001.

\bibitem{Impagliazzo-Strongly-ETH-01}
R.~Impagliazzo, R.~Paturi, and F.~Zane.
\newblock Which problems have strongly exponential complexity?
\newblock {\em Journal of Computer and System Sciences}, 63(4):512--530, 2001.

\bibitem{Kaminskyi-open-11}
M.~Kami{\'n}ski.
\newblock Open problems from algorithmic graph theory.
\newblock In {\em 7th Slovenian International Conference on Graph Theory},
  2011.
\newblock Available
  at~\url{http://rutcor.rutgers.edu/~mkaminski/AGT/openproblemsAGT.pdf}.

\bibitem{Survey-ETH-11}
D.~Lokshtanov, D.~Marx, and S.~Saurabh.
\newblock Lower bounds based on the {E}xponential {T}ime {H}ypothesis.
\newblock {\em Bulletin of the EATCS}, 84:41--71, 2011.

\bibitem{Maffray96}
F.~Maffray and M.~Preissmann.
\newblock On the {NP}-completeness of the $k$-colorability problem for
  triangle-free graphs.
\newblock {\em Discrete Mathematics}, 162:313--317, 1996.

\bibitem{NarayanaswamyIPL11}
N.~Narayanaswamy and C.~Subramanian.
\newblock Dominating set based exact algorithms for 3-coloring.
\newblock {\em Information Processing Letters}, 111:251--255, 2011.

\bibitem{PapadimitriouComplexity94}
C.~H. Papadimitriou.
\newblock {\em Computational complexity}.
\newblock Addison-Wesley, 1994.

\bibitem{Randerath04}
B.~Randerath and I.~Schiermeyer.
\newblock 3-colorability $\in {P}$ for ${P}_{6}$-free graphs.
\newblock {\em Discrete Applied Mathematics}, 136(2--3):299--313, 2004.

\bibitem{Stacho-3-col-AT-free-11}
J.~Stacho.
\newblock 3-colouring {AT}-free graphs in polynomial time.
\newblock {\em Algorithmica}, 64(3):384--399, 2012.

\end{thebibliography}
}

\end{document}